\documentclass[10pt,twocolumn,twoside] {IEEEtran}
\makeatletter
\def\ps@headings{%
\def\@oddhead{\mbox{}\scriptsize\rightmark \hfil \thepage}%
\def\@evenhead{\scriptsize\thepage \hfil \leftmark\mbox{}}%
\def\@oddfoot{}%
\def\@evenfoot{}}

\makeatother \pagestyle{headings}

\IEEEoverridecommandlockouts
\usepackage{bbm}
\usepackage{amsfonts}
\usepackage[dvips]{graphicx}
\usepackage{times}
\usepackage{cite}
\usepackage{amsmath}
\usepackage{array}
\usepackage{amssymb}

\usepackage{stfloats}
\usepackage{slashbox}
\usepackage{graphicx}
\usepackage{footnote}
\usepackage{booktabs}
\usepackage{array}
\usepackage{algorithm}
\usepackage{subeqnarray}
\usepackage{cases}
\usepackage{threeparttable}
\usepackage{color}
\usepackage{hyperref}
\usepackage{epstopdf}
\usepackage{algpseudocode}
\usepackage{bm}
\usepackage{subcaption}
\usepackage{multirow}
\usepackage{adjustbox}
\usepackage[labelformat=simple]{subcaption}
\newcommand{\tabincell}[2]{\begin{tabular}{@{}#1@{}}#2\end{tabular}}

\captionsetup{font={small}}
\captionsetup[algorithm]{font=small}
\newtheorem{theorem}{Theorem}
\newtheorem{remark}{Remark}
\newtheorem{proposition}{Proposition}
\newtheorem{lemma}{Lemma}
\newtheorem{definition}{Definition}
\newtheorem{problem}{Problem}

\usepackage{xcolor}

\newenvironment{answer}{%
   \color{black}
}{}
\begin{document}

\title{Optimal Dynamic Multicast Scheduling for Cache-Enabled Content-Centric  Wireless Networks}

\author{Bo~Zhou, Ying~Cui,~\IEEEmembership{Member,~IEEE}, and Meixia~Tao,~\IEEEmembership{Senior~Member,~IEEE}
\thanks{This paper has been presented in part at IEEE ISIT 2015.

B.~Zhou, Y.~Cui and M.~Tao are with the Department of
Electronic Engineering at Shanghai Jiao Tong University, Shanghai,
200240, P. R. China. Email: \{b.zhou, cuiying, mxtao\}@sjtu.edu.cn.}
}

\maketitle

\begin{abstract}
Caching and multicasting at base stations are two promising approaches to support massive content delivery over wireless networks.
However, existing scheduling designs do not make full use of the advantages of the two approaches.
In this paper, we consider the optimal dynamic multicast scheduling to jointly minimize the average delay, power, and  fetching costs for cache-enabled content-centric wireless networks. We formulate this stochastic optimization problem as an infinite horizon average cost Markov decision process (MDP). It is well-known to be a difficult problem \textcolor{black}{due to the curse of dimensionality,} and there generally only exist numerical solutions. By using \emph{relative value iteration algorithm} and the special structures of the request queue dynamics, we analyze the properties of the value function and the state-action cost function of the MDP for both the uniform and nonuniform channel cases. Based on these properties, we show that the optimal policy, which is adaptive to the request queue state, has a switch structure in the uniform case and a partial switch structure in the nonuniform case. Moreover, in the uniform case with two contents, we show that the switch curve is monotonically non-decreasing.
Then, by exploiting these structural properties of the optimal policy, we propose two low-complexity optimal algorithms.
\textcolor{black}{Motivated by the switch structures of the optimal policy,} to further reduce the complexity, we also propose a low-complexity suboptimal policy, which possesses \textcolor{black}{similar structural properties to} the optimal policy, and  develop a low-complexity algorithm to compute this policy.
\end{abstract}
\begin{IEEEkeywords}
Cache, content-centric, multicast, dynamic programming, structural results, queueing.
\end{IEEEkeywords}
\section{Introduction}\label{sec:introduction}
The demand for wireless communication services has been shifting from connection-centric communications such as, traditional voice telephony and messaging to content-centric communications such as video streaming, social networking, and content sharing.
Moreover, the wireless data traffic is expected to grow at a compound annual growth rate of 57 percent from 2014 to 2019, reaching 24.3 exabytes per month by 2019\cite{Cisco}.
These phenomena propel the development of content-centric wireless networks\cite{liuhui}.

Recently, to support the dramatic growth of the wireless data traffic, caching at base stations (BSs) has been proposed as a promising approach for massive content delivery and extensively studied in the literature\cite{femto,TassiulasTCOM,mimo,Bastug2015}.
Specifically, \textcolor{black}{in \cite{femto}, the authors introduce the concept of \emph{Femtocaching} and study content placement at the small BSs to minimize the average content access delay}.
In \cite{TassiulasTCOM}, the authors consider joint request routing and caching in small-cell networks, and propose approximate algorithms to maximize the requests served by small BSs.
In \cite{mimo}, the authors study the joint optimization of cache control and playback buffer management for video streaming in multi-cell multi-user MIMO cellular networks.
\textcolor{black}{Reference\cite{Bastug2015} analyzes the performance (e.g., the outage probability and the average delivery rate) of cache-enabled small-cell networks for given caching strategies.}
\textcolor{black}{However, in most existing literature \cite{femto,TassiulasTCOM,mimo,Bastug2015}, point-to-point unicast transmission is considered, which can only help to reduce the backhaul burden without effectively relieving the ``on air'' congestion.
The inherent broadcast nature of wireless medium is not fully exploited, which is the major distinction of  wireless communications from wired communications.}

On the other hand, enabling multicast service at BSs is an efficient way to deliver contents to multiple requesters simultaneously by effectively utilizing the inherent broadcast nature of  wireless medium \cite{embms}.
References \cite{HouTON} and \cite{multicast_MIT} consider scheduling problems for multicasting inelastic flows (with strict deadlines) in wireless networks.
 In \cite{multicast_capacity}, the authors study the asymptotic capacity of delay-constrained multicast in large scale mobile ad hoc networks.


\textcolor{black}{
In view of the benefits of caching and multicasting, the joint design of the two promising techniques is expected to achieve superior performance for massive content delivery in wireless networks\cite{7249208,TWC16,ton}.
In specific, the authors in \cite{7249208} study coded multicasting for inelastic services under a given coded caching scheme in a single-cell network.}
\textcolor{black}{In \cite{TWC16}}, the authors consider multicasting for inelastic services in cache-enabled small-cell networks.
\textcolor{black}{An approximate caching algorithm with performance guarantee and a heuristic caching algorithm are proposed to reduce the service cost of a fixed multicast transmission strategy.}
In \cite{ton}, the authors consider  multicasting for inelastic services in cache-enabled multi-cell networks.
A joint throughput-optimal caching and scheduling algorithm is proposed to maximize the service rates of inelastic services.
However, \cite{7249208,TWC16,ton} assume that the users have uniform channel conditions, and hence all the users can be served simultaneously by a single multicast transmission.
It remains unclear how to design multicast scheduling \textcolor{black}{for given cache placement} to make full use of the broadcast nature of the wireless medium when users have nonuniform channel conditions.
Moreover, for delay-sensitive services without strict deadlines (i.e., elastic services), it is unknown how to design optimal multicast scheduling \textcolor{black}{for given cache placement} by exploiting the tradeoff between the delay cost and the service cost.

\textcolor{black}{For cache-enabled content-centric wireless networks, there are two important phases, i.e., content placement and content delivery, and the two phases in general happen on different timescales \cite{femto}. In the existing literature on the joint design of caching and multicasting \cite{7249208,TWC16,ton}, the authors either focus on the optimization of one phase for a fixed strategy of the other phase\cite{7249208,TWC16} or consider that content placement and multicast transmission are in the same timescale\cite{ton}. To the best of our knowledge, the optimal design for the two timescale cache placement and multicast scheduling problem is still unknown.
Therefore, as a first and necessary step for the joint two timescale design, in this work, we focus on the optimal multicast scheduling for given cache placement.
Based on the  small timescale problem considered here, we would like to consider the joint two timescale design in future work.}

In this paper, we consider a cache-enabled content-centric wireless network with one BS, $K$ users (with possibly different channel conditions) and $M$ contents (with possibly different content sizes).
The BS stores a certain number of contents in its cache and can fetch any uncached content from the core network through a backhaul link, with a fetching cost depending on the content size. In each slot, the BS schedules one content for multicasting to serve the users' pending requests, with a power cost depending on both the content size and the channel conditions of the users being served.
We consider the optimal dynamic multicast scheduling to jointly minimize the average delay, power, and fetching costs. We formulate the stochastic optimization problem as an infinite horizon average cost Markov decision process (MDP)\cite{bertsekas}. There are several technical challenges.

  $\bullet$ \textbf{Optimality analysis:} The infinite horizon average cost MDP is well-known to be a difficult problem \textcolor{black}{due to the curse of dimensionality \cite{bertsekas}}. While dynamic programming represents a systematic approach for MDPs, there generally exist only numerical solutions,  which do not typically offer many design insights, and are usually not practical due to the curse of dimensionality.
 Therefore, it is desirable to analyze the structures of the optimal policies.
Specifically, the considered problem in this work can be treated as the problem of scheduling a broadcast server to parallel queues  with general arrivals and switching costs. Several existing works have studied the related problems\cite{cdc,batch,switch}. In particular, \cite{cdc} and \cite{batch} consider the problems of scheduling a broadcast server to a two-queue system with general arrivals and a multiple-queue system with symmetric arrivals, respectively.
  Reference \cite{switch} studies the problem of scheduling a single server (without broadcast capability) to two queues with switching costs.
  Note that, the switching costs, which relate to the fetching costs in our problem, are not considered in \cite{batch,cdc}, and the  broadcast capability is not considered for the server in \cite{switch}.
  To the best of our knowledge, the structural properties of the optimal scheduling of a broadcast server to parallel queues with general arrivals and switching costs remains unknown and is highly nontrivial.

  $\bullet$ \textbf{Algorithm design:} Standard brute-force algorithms such as value iteration and policy iteration \cite{bertsekas} to MDPs are usually impractical for implementation due to the curse of dimensionality, and cannot exploit the structural properties of the optimal policy. To reduce the complexity, several existing works propose structured optimal algorithms which incorporate the structural properties into the standard algorithms\cite{MPIA,OR}. However, these structured optimal algorithms still suffer from the curse of dimensionality, \textcolor{black}{which is embedded in the optimal control designs for MDPs and generally cannot be broken without any loss of optimality. On the other hand, the structural properties of the optimal policy may be one key reason for its good performance. Therefore, it is highly desirable to  develop low-complexity suboptimal solutions, which can relieve the curse of dimensionality, while maintaining similar structural properties to optimal policies.
  However, for most existing approximate approaches\cite{powell2007approximate,factoredMDP}, there is (in general) no guarantee that the obtained suboptimal policies have similar structural properties to the optimal policies.
  To the best of our knowledge, the design of low complexity suboptimal solutions of similar structural properties to the optimal policies is unknown.}

In this paper, we consider the uniform and nonuniform channel cases.
By using \emph{relative value iteration algorithm} (RVIA)\cite{bertsekas} and the special structures of the request queue dynamics, as well as the power and fetching costs, we analyze the properties of the value function and the state-action cost function of the MDP for both the uniform and nonuniform cases.
Based on these properties, for the uniform case, we show that the optimal policy has a switch structure. In particular, the request queue state space is divided into $M$ regions corresponding to the $M$ contents.
The optimal policy schedules a content for multicasting when the request queue state falls in the region corresponding to the content.
For the uniform case with two contents, we further show that the switch curve is monotonically non-decreasing.
Next, for the nonuniform case, we show that the optimal policy has a partial switch structure, which is similar to the switch structure in the uniform case. The difference reflects the channel asymmetry among the users.
Then, we propose two low-complexity optimal algorithms by exploiting these structural properties of the optimal policy.
\textcolor{black}{Note that, although the switch structures may look intuitive, it is challenging to prove these structures rigorously.}
\textcolor{black}{Motivated by the switch structures of the optimal policy,} to further reduce the complexity, we also propose a low-complexity suboptimal solution using approximate dynamic programming\cite{bertsekas}.
\textcolor{black}{Different from suboptimal solutions obtained using existing approximate approaches, the proposed suboptimal solution possesses similar structural properties to the optimal policy. Then, we} develop a low-complexity algorithm to compute \textcolor{black}{the suboptimal policy}.
\textcolor{black}{These analytical results hold for both  i.i.d. request arrival and  Markov-modulated request arrival models.}
Numerical results verify the theoretical analysis and demonstrate the performance of the proposed optimal and suboptimal solutions.
 The important notations used in this paper are summarized in Table~\ref{tablenotation}.

\begin{table}[!htbp]
\small
\begin{tabular}{|c|c|}
\hline
$\mathcal{K}$ & set of users\\
\hline
$\mathcal{M}$ & set of all contents\\
\hline
$\mathcal{C}$ & set of contents cached in the BS\\
\hline
$k,m$ & user, content index\\
\hline
$t$&slot index \\
\hline
$p(m,k)$ & \tabincell{c}{minimum transmission power required for\\ delivering content $m$ to user $k$ within a slot} \\
\hline
$f(m)$ & fetching cost of content $m$\\
\hline
$\mathbf{A}=(A_{m,k})$ & request queue matrix\\
\hline
$\mathbf{Q}=(Q_m)$ & request queue state vector for uniform case\\
\hline
$\mathbf{Q}=(Q_{m,k})$ & request queue state matrix for nonuniform case\\
\hline
$d(\mathbf{Q})$ & sum request queue length\\
\hline
$\mu$ & stationary multicast scheduling policy\\
\hline
$u$ & multicast scheduling action\\
\hline
$V(\mathbf{Q})$ & value function\\
\hline
$J(\mathbf{Q},u)$ & state-action cost function\\
\hline
\end{tabular}
\centering
\caption{\small{List of important notations}}\label{tablenotation}
\end{table}

\section{Network Model}\label{sec:model}
As illustrated in Fig.~\ref{fig:systemmodel}, we consider a cache-enabled content-centric wireless network with one BS, $K$ users and $M$ contents.
Let $\mathcal{K}=\{1,2,\cdots,K\}$ denote the set of users.
\textcolor{black}{In our model, each user represents a group of users in the same location.}
Let $\mathcal{M}=\{1,2,\cdots,M\}$ denote the set of contents, where content $m\in\mathcal{M}$ has the size of $l_m$ (in bits).
Consider time slots of unit length (without loss of generality), and indexed by $t=1,2,\cdots$.\textcolor{black}{\footnote{\textcolor{black}{We consider an abstract model to capture the main features of cache-enabled content-centric networks. The contents could be short videos, soundtracks, E-publications, etc, and the duration of a slot can be several seconds or minutes depending on the specific type of contents considered in this model.}}}
In each slot, each user submits content requests to the BS according to a general distribution.
The BS maintains request queues for different contents, which are implemented using \emph{counters}.
The BS is equipped with a cache storing a certain number of contents, depending on the cache size and the sizes of the cached contents.
We assume the contents stored in the cache are given.
Notice that, caching is in a much larger timescale and in this work, we consider multicast scheduling in a smaller timescale for a given caching design.
Let $\mathcal{C}\subset\mathcal{M}$ denote the set of cached contents.
The BS can fetch any uncached content from the core network through a backhaul link, with a fetching cost depending on the content size.
In each slot, the BS schedules one content for multicasting to serve the users' pending requests, with a power cost depending on both the content size and the channel conditions of the users being served. In the following, we elaborate on the physical layer model, the service model, and the request model.
\begin{figure}[!ht]
\begin{centering}
\includegraphics[scale=.6]{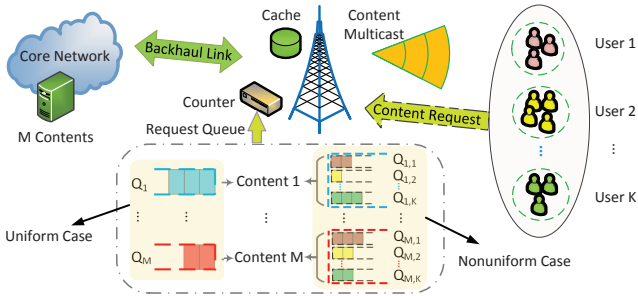}
 \caption{\small{Cache-enabled content-centric wireless network.}}\label{fig:systemmodel}
\end{centering}
\end{figure}
\subsection{Physical Layer Model}
We assume that the duration of the scheduling slot is long enough to average the small-scale channel fading process, and hence the ergodic capacity can be achieved using channel coding.\footnote{Note that, this assumption is also used in \cite{femto} and \cite{Neely}.}
Let $h_k$ denote the average channel gain between user $k$ and the BS.
Assume that only one content is delivered in each slot.
Let $p(m,k)$ denote the minimum transmission power required for delivering content $m$ to user $k$ within a scheduling slot.
Assume $p(m,k)$ satsifies $p(m,k)=y(h_k,l_m)$, where $y(h,l)$ is monotonically non-increasing with $h$ for all $l\geq 0$.
Without loss of generality, we assume that $h_1\geq h_2\geq\cdots\geq h_K$, which implies $p(m,1)\leq p(m,2)\leq\cdots \leq p(m,K)$ for all $m$. In this paper, we consider the uniform and nonuniform channel cases.  In the uniform case, the channel gains of different users are the same, and hence, we have $p(m,1)=p(m,2)=\cdots=p(m,K)\triangleq p(m)$ for each $m$. In the nonuniform case, the channel gains of different users can be different, and hence for each $m$, $p(m,k)$ can be different for different users.

\subsection{Service Model}
We consider multicast service for content delivery.
\textcolor{black}{For clarity, we assume that in each slot, the BS schedules one content for multicasting to serve the users' pending requests. The analytical framework and results can be extended to the general case in which the BS can transmit multiple contents in each slot.}
Let $\mathcal{K}(m,t)\in\mathcal{K}$ denote the set of users who have pending requests for content $m$ at slot $t$.
Let $u(t)\in\mathcal{M}$ denote the content scheduled for multicasting at slot $t$.
If content $u(t)$ is cached (i.e., $u(t)\in\mathcal{C}$), the BS transmits it to all the users in $\mathcal{K}(u(t),t)$ directly; otherwise, the BS first downloads $u(t)$ from the core network through the backhaul link, then multicasts it to the users in $\mathcal{K}(u(t),t)$ and finally discards it after the transmission. \textcolor{black}{Note that, we consider fixed content placement and there is no extra cache storage to hold a new fetched content.}

Next, we illustrate the fetching and power costs. Let $c(m)$ denote the cost for fetching content $m$ via the backhaul link, depending on the content size.
Then, the fetching cost is given by
\begin{equation}
f(m)\triangleq\mathbf{1}(m\not\in\mathcal{C})c(m),\label{eqn:fetchcost}
\end{equation}
where $\mathbf{1}(\cdot)$ denotes the indicator function.
Let $k^*(m,t)\in\mathcal{K}(m,t)$ denote the user who requires the highest transmission power among the users in $\mathcal{K}(m,t)$, i.e., $k^*(m,t)\triangleq\max \mathcal{K}(m,t)$.
Then, to deliver content $m$ to all the users in $\mathcal{K}(m,t)$ within a slot, the power cost $p(m,t)$ is given by
\begin{equation}
p(m,t)\triangleq p\left(m,k^*(m,t)\right)=\max_{k\in\mathcal{K}(m,t)}p(m,k).\label{eqn:powerdefi}
\end{equation}

\subsection{Request Model}
In each slot, each  user submits content requests to the BS.
\textcolor{black}{Notice that each user (representing a group of users in the same location) can submit multiple requests for each content in each slot.}
Let $A_{m,k}(t)\textcolor{black}{\in\mathcal{A}_{m,k}\triangleq\{0,1,\cdots,A_{m,k}^{max}\}}$ denote the number of the new request arrivals for content $m$ from user $k$ at the end of slot $t$, where $m\in\mathcal{M}$ and $k\in\mathcal{K}$.
Let $\mathbf{A}(t)=(A_{m,k}(t))_{m\in\mathcal{M}, k\in\mathcal{K}}\textcolor{black}{\in\bm{\mathcal{A}}\triangleq\prod_{m,k}\mathcal{A}_{m,k}}$ denote the request arrival matrix at slot $t$.
We assume that $A_{m,k}(t)$ is i.i.d. over slots and independent w.r.t. $m$ according to a general distribution.
\textcolor{black}{For ease of illustration, we assume that the request arrival process is i.i.d. according to the Independent Reference Model (IRM), which is a standard approach adopted in the literature \cite{femto,ton}. The IRM is reasonable as each user in our model represents a group of users in the same location \cite{ITC}.
In Section~\ref{sec:markov}, we shall extend the analysis for the i.i.d. request arrival model to a Markov-modulated request arrival model.}
The BS maintains request queues for different contents. The request queues are implemented using \emph{counters} and no data is stored in these request queues.
In the following, we introduce two request queue models for the uniform and nonuniform cases, respectively.
\subsubsection{Uniform Case}  In the uniform case, once content $m$ is multicasted using transmission power $p(m)$, all the users can receive content $m$. Therefore, we do not differentiate the requests for each content at the user level.
Specifically, the BS maintains a separate request queue for each content $m\in\mathcal{M}$.
Let $Q_m(t)\in\mathcal{Q}_m\triangleq\{0,1,\cdots,N_m\}$ denote the request queue length for content $m$ at the beginning of slot $t$, where $N_m$ is assumed to be finite (can be sufficiently large) for technical tractability.
As illustrated in Section II-B, if content $m$ is scheduled for transmission at slot $t$ (i.e., $u(t)=m$), all the pending requests for content $m$ are satisfied, i.e., the request queue for content $m$ is emptied.
Thus, the request queue dynamics for content $m$ is as follows:
\begin{equation}
  Q_{m}(t+1)=\min\{\mathbf{1}(u(t)\neq m)Q_m(t)+A_m(t),N_m\},\label{eqn:queue-ho}
\end{equation}
where $A_m(t)\triangleq\sum_kA_{m,k}(t)$ denotes the total number of the request arrivals for content $m$ at the end of slot $t$.
Let $\mathbf{Q}(t)\triangleq(Q_m(t))_{m\in\mathcal{M}}\in\bm{\mathcal{Q}}$ denote the request queue state vector at the beginning of slot $t$ in the uniform case, where $\bm{\mathcal{Q}}\triangleq\prod_{m\in\mathcal{M}}\mathcal{Q}_m$ denotes the request queue state space in the uniform case.
\subsubsection{Nonuniform Case}  In the nonuniform case, different transmission powers are required to deliver a content to different users, as illustrated in Section II-B.
Therefore, we  differentiate the requests for each content at the user level.
Specifically, the BS maintains a separate request queue for each content-user pair $(m,k)\in\mathcal{M}\times\mathcal{K}$.
Let $Q_{m,k}(t)\in\mathcal{Q}_{m,k}\triangleq\{0,1,\cdots,N_{m,k}\}$ denote the request queue length for content-user pair $(m,k)$ at the beginning of slot $t$,  where $N_{m,k}$ is assumed to be finite (can be sufficiently large) for technical tractability.
Therefore, $\mathcal{K}(m,t)$ can  be expressed in terms of the request queue state, i.e., $\mathcal{K}(m,t)=\{k|Q_{m,k}(t)>0\}$. The request queue dynamics for content-user pair $(m,k)$ is as follows:
\begin{equation}
  Q_{m,k}(t+1)=\min\{\mathbf{1}(u(t)\neq m)Q_{m,k}(t)+A_{m,k}(t),N_{m,k}\}.\label{eqn:queue-he}
\end{equation}
Let $\mathbf{Q}_m(t)\triangleq(Q_{m,k}(t))_{k\in\mathcal{K}}\in\bm{\mathcal{Q}}_m$ denote the request queue state vector for content $m$ at the beginning of slot $t$ in the nonuniform case, where $\bm{\mathcal{Q}_m}\triangleq\prod_{k\in\mathcal{K}}\mathcal{Q}_{m,k}$ denotes the request queue state space for content $m$ in the nonuniform case.
Let $\mathbf{Q}(t)\triangleq(\mathbf{Q}_m(t))_{m\in\mathcal{M}}\in\bm{\mathcal{Q}}$ denote the request queue state matrix at the beginning of slot $t$ in the nonuniform case, where $\bm{\mathcal{Q}}\triangleq\prod_{m\in\mathcal{M}}\bm{\mathcal{Q}}_m$ denotes the request queue state space in the nonuniform case.

\textcolor{black}{Note that, in \eqref{eqn:queue-ho} and \eqref{eqn:queue-he}, once a content is scheduled, the corresponding request queue (in the uniform case) or request queues (in the nonuniform case) are emptied. This special form of queue departure  reflects the multicast gain. Our framework  holds for any number of users and any profile of request arrivals.}

\section{Problem Formulation and Optimality Equation}\label{sec:formulation}

\subsection{Problem Formulation}
 Given an observed request queue state, the multicast scheduling action $u$ is determined according to a stationary policy defined below.
\begin{definition}[\text{Stationary Multicast Scheduling Policy}]
  A stationary multicast scheduling policy $\mu$ is a mapping from the request queue state $\mathbf{Q}\in\bm{\mathcal{Q}}$ to the multicast scheduling action $u\in\mathcal{M}$, where $\mu(\mathbf{Q})=u$.
\label{definition:definition1}
\end{definition}

 By the queue dynamics in \eqref{eqn:queue-ho} or \eqref{eqn:queue-he}, the induced random process $\{\mathbf{Q}(t)\}$ under policy $\mu$ is a controlled Markov chain.
We restrict our attention to stationary unichain policies\footnote{A unichain policy is a policy, under which the induced Markov chain has a single recurrent class (and possibly some transient states)\cite{bertsekas}}. For a given stationary unichain policy $\mu$, the average delay cost is defined as
\begin{equation}
  \bar{d}(\mu)\triangleq\limsup_{T\to\infty}\frac{1}{T}\sum_{t=1}^T \mathbb{E}\left[d\left(\mathbf{Q}(t)\right)\right],\label{eqn:delay}
\end{equation}
where the expectation is taken w.r.t. the measure induced by \textcolor{black}{the random request arrivals and} the policy $\mu$, $d\left(\mathbf{Q}(t)\right)\triangleq\sum_mQ_m(t)$ in the uniform case and $d\left(\mathbf{Q}(t)\right)\triangleq\sum_{m,k}Q_{m,k}(t)$ in the nonuniform case.
\textcolor{black}{By Little's law, $\bar{d}(\mu)$ reflects the average waiting time in the network under policy $\mu$.}
By \eqref{eqn:fetchcost} and \eqref{eqn:powerdefi}, the average fetching and power costs are given by
 \begin{align}
  \bar{f}(\mu)&\triangleq\limsup_{T\to\infty}\frac{1}{T}\sum_{t=1}^T \mathbb{E}\left[f\left(u(t)\right)\right],\label{eqn:fetching}\\
  \bar{p}(\mu)&\triangleq\limsup_{T\to\infty}\frac{1}{T}\sum_{t=1}^T \mathbb{E}\left[p(\mathbf{Q}(t),u(t))\right].\label{eqn:power}
\end{align}
Here, with abuse of notation, we also use $p(\mathbf{Q}(t),u(t))$ to represent $p(u(t),t)$ given in \eqref{eqn:powerdefi}, as  $\mathcal{K}(u(t),t)=\{k|Q_{u(t),k}(t)>0\}$.
 \textcolor{black}{Please note that,  there is an inherent tradeoff between the delay cost and the service cost (including the power and fetching costs) in our model.
As a simple example, we consider a multicast scheduling policy for the uniform case, where content $m$ is scheduled only if $Q_m\geq Q_{th}$.
As illustrated in Fig.~\ref{fig:tradeoff}, we can see that for content $m$, when $Q_{th}$ increases, the average service cost decreases while the average delay cost increases.
This is because that when $Q_{th}$ increases, for each of the $Q_{th}$ requests, its waiting time increases while its service cost decreases.}
\begin{figure}[!h]
\begin{centering}
\includegraphics[scale=.45]{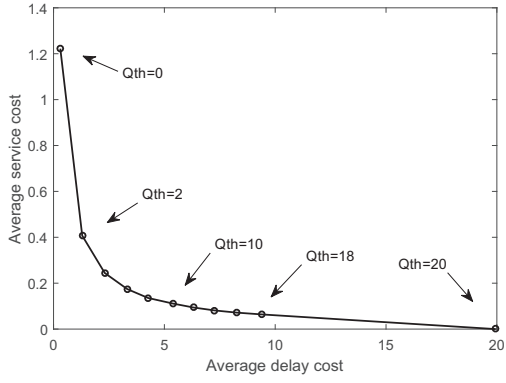}
 \caption{\textcolor{black}{\small{Tradeoff between the average delay cost and the average service cost for a certain content.}}}\label{fig:tradeoff}
\end{centering}
\end{figure}

 \textcolor{black}{Therefore, to capture this tradeoff,  we define the average system cost (weighted sum cost) under a given stationary unichain policy $\mu$ as}
 \begin{align}
  \bar{g}(\mu)&\triangleq\bar{d}(\mu)+w_f\bar{f}(\mu)+w_p\bar{p}(\mu)\nonumber\\
  &=\limsup_{T\to\infty}\frac{1}{T}\sum_{t=1}^T \mathbb{E}\left[g(\mathbf{Q}(t),u(t))\right],\label{eqn:cost}
\end{align}
where $w_f$ and $w_p$  are the associated weights for the fetching and power costs, respectively, \textcolor{black}{which reflect the tradeoff}, and $g(\mathbf{Q},u)\triangleq d\left(\mathbf{Q}\right)+w_ff(u)+w_pp(\mathbf{Q},u)$ is the per-stage cost.

We wish to find an optimal multicast scheduling policy to minimize the average system cost $\bar{g}(\mu)$ in \eqref{eqn:cost}.
\begin{problem}[System Cost Minimization Problem]
  \begin{equation}
  \bar{g}^*\triangleq\min_{\mu}\limsup_{T\to\infty}\frac{1}{T}\sum_{t=1}^T \mathbb{E}\left[g(\mathbf{Q}(t),u(t))\right],\label{problem:originalproblem}
\end{equation}
\end{problem}
where $\mu$ is a stationary unchain multicast scheduling policy and $\bar{g}^*$ denotes the minimum average system cost achieved by the optimal policy $\mu^*$.

 Problem~\ref{problem:originalproblem} is an infinite horizon average cost MDP, which is well-known to be a difficult problem \textcolor{black}{due to the curse of dimensionality}.
 \textcolor{black}{According to \cite[Theorem 8.4.5]{puterman}, for unichain infinite horizon average cost MDPs with finite state and action spaces, there always exists a deterministic stationary policy that is optimal. Note that, these requirements are satisfied by the MDP considered in our work. Therefore, it is sufficient to focus on the deterministic stationary policy space.}

\subsection{Optimality Equation}
The optimal multicast scheduling policy $\mu^*$ can be obtained by solving the following Bellman equation.
\begin{lemma}[Bellman Equation]\label{lemma:bellman}
There exist a scalar $\theta$ and a value function $V(\cdot)$ satisfying
\begin{equation}
  \theta+V(\mathbf{Q})=\min_{u\in\mathcal{M}}\left\{g(\mathbf{Q},u)+\mathbb{E}\left[V(\mathbf{Q}')\right]\right\},~\forall \mathbf{Q}\in\bm{\mathcal{Q}},\label{eqn:bellman}
\end{equation}
where \textcolor{black}{the expectation is taken over the distribution of the request arrival $\mathbf{A}$}, and $\mathbf{Q}'=(Q'_m)_{m\in\mathcal{M}}$ with $Q'_m=\min\{\mathbf{1}(u\neq m)Q_m+A_m,N_m\}$ in the uniform case; $\mathbf{Q}'=(Q'_{m,k})_{m\in\mathcal{M}, k\in\mathcal{K}}$ with $ Q'_{m,k}=\min\{\mathbf{1}(u\neq m)Q_{m,k}+A_{m,k},N_{m,k}\}$ in the nonuniform case.
Then, $\theta=\bar{g}^*$ is the optimal value to Problem 1 for all initial state $\mathbf{Q}(1)\in\bm{\mathcal{Q}}$,
and the optimal policy $\mu^*$ achieving $\bar{g}^*$ is given by
\begin{align}\label{eqn:mu}
  \mu^*(\mathbf{Q})=\arg\min_{u\in\mathcal{M}}\left\{g(\mathbf{Q},u)+\mathbb{E}\left[V(\mathbf{Q}')\right]\right\},~\forall \mathbf{Q}\in\bm{\mathcal{Q}}.
\end{align}
\end{lemma}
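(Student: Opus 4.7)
The plan is to recognize Lemma~1 as a classical application of average-cost MDP theory to a finite-state, finite-action problem, and to prove it via the vanishing-discount/relative value iteration approach together with a verification argument. Since each request queue has a finite buffer ($N_m$ or $N_{m,k}$), the state space $\bm{\mathcal{Q}}$ is finite; the action space $\mathcal{M}$ is finite; and the per-stage cost $g(\mathbf{Q},u) = d(\mathbf{Q}) + w_f f(u) + w_p p(\mathbf{Q},u)$ is uniformly bounded on $\bm{\mathcal{Q}}\times\mathcal{M}$. Combined with the restriction to stationary unichain policies stated before Definition~1, this places the problem squarely in the setting of, e.g., \cite[Prop.~4.2.1]{bertsekas} or \cite[Thm.~8.4.3]{puterman}, which I would invoke.

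The concrete steps are as follows. First, for each discount factor $\alpha\in(0,1)$, I would solve the $\alpha$-discounted version of Problem~1; by standard contraction arguments on the finite state space, the discounted value function $V_\alpha(\mathbf{Q})$ exists, is bounded, and satisfies the discounted Bellman equation. Second, fix a reference state $\mathbf{Q}_\mathrm{ref}\in\bm{\mathcal{Q}}$ (e.g., the all-zero queue vector/matrix) and define the relative value function $\tilde V_\alpha(\mathbf{Q})\triangleq V_\alpha(\mathbf{Q})-V_\alpha(\mathbf{Q}_\mathrm{ref})$. Third, show the family $\{\tilde V_\alpha\}_{\alpha\in(0,1)}$ is uniformly bounded; this is where the unichain/accessibility hypothesis is crucial. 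Fourth, extract a subsequence $\alpha_n\uparrow 1$ along which $\tilde V_{\alpha_n}(\mathbf{Q})\to V(\mathbf{Q})$ for every $\mathbf{Q}$ (finite state space, so pointwise convergence is automatic after diagonalization) and $(1-\alpha_n)V_{\alpha_n}(\mathbf{Q}_\mathrm{ref})\to \theta$; passing to the limit in the discounted Bellman equation yields exactly \eqref{eqn:bellman}.

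Given $(\theta,V)$, I would then establish optimality by the standard verification argument. Iterating \eqref{eqn:bellman} along any stationary unichain policy $\mu$ starting from $\mathbf{Q}(1)$ gives
\begin{equation*}
T\theta + V(\mathbf{Q}(1)) \;\leq\; \sum_{t=1}^{T}\mathbb{E}\bigl[g(\mathbf{Q}(t),u(t))\bigr] + \mathbb{E}\bigl[V(\mathbf{Q}(T+1))\bigr],
\end{equation*}
with equality whenever $u(t)=\mu^*(\mathbf{Q}(t))$ attains the right-hand side of \eqref{eqn:bellman}. Dividing by $T$, letting $T\to\infty$, and using boundedness of $V$ on the finite state space (which kills the boundary term), we obtain $\theta\leq \bar g(\mu)$ for every admissible $\mu$ and $\theta=\bar g(\mu^*)$, so $\theta=\bar g^*$ and the argmin in \eqref{eqn:mu} defines an optimal stationary policy; independence from $\mathbf{Q}(1)$ follows from the unichain property.

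The main obstacle is step three: uniform boundedness of $\tilde V_\alpha$ as $\alpha\uparrow 1$. The standard sufficient condition is weak accessibility (every state is reachable from every other under some stationary policy with bounded expected hitting time). Here I would verify it by exhibiting an explicit recurrent path: by the assumption that the arrival distribution includes $A_{m,k}=0$ with positive probability, repeatedly scheduling content $m$ empties its queue and keeps it empty with positive probability, so cycling $u=1,2,\ldots,M$ drives any state to $\mathbf{0}$ in finite expected time. This explicit reachability, combined with the finiteness of $\bm{\mathcal{Q}}$ and the bound $\sup_\alpha (1-\alpha)V_\alpha(\mathbf{Q}_\mathrm{ref})<\infty$, delivers the required bound on $\tilde V_\alpha$ and completes the proof.
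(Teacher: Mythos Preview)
Your proposal is correct in spirit but takes a substantially more elaborate route than the paper. The paper's proof (Appendix~A) is essentially three lines: it observes that the restriction to stationary unichain policies (stated before \eqref{eqn:delay}) implies the Weak Accessibility condition via \cite[Prop.~4.2.5]{bertsekas}, and then invokes \cite[Props.~4.2.1 and 4.2.3]{bertsekas} directly to obtain existence of $(\theta,V)$ satisfying the Bellman equation and the independence of $\theta$ from the initial state; the remainder is just rewriting $\sum_{\mathbf{Q}'}\Pr[\mathbf{Q}'|\mathbf{Q},u]V(\mathbf{Q}')$ as $\mathbb{E}[V(\mathbf{Q}')]$ using the queue dynamics. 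Your plan instead reconstructs the machinery behind those propositions via the vanishing-discount method and a verification argument. This is fine and arguably more self-contained, but note one wrinkle: your ``main obstacle'' step establishes weak accessibility by an explicit path argument that requires $\Pr[A_{m,k}=0]>0$, an assumption the paper never makes. That detour is unnecessary, since the unichain hypothesis already on the table is precisely what \cite[Prop.~4.2.5]{bertsekas} converts into WA; you can simply use it rather than reproving accessibility from scratch.
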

\begin{proof}
  Please see Appendix A.
\end{proof}

From the Bellman equation in \eqref{eqn:bellman}, we can see that $\mu^*$ depends on the state $\mathbf{Q}$ through the value function $V(\cdot)$. Obtaining $V(\cdot)$ involves solving the Bellman equation for all $\mathbf{Q}$, for which there is no closed-form solution in general\cite{bertsekas}.  Brute-force numerical solutions such as value iteration and policy iteration do not typically offer many design insights, and are usually impractical for implementation in practical systems due to the curse of dimensionality\cite{bertsekas}.  Therefore, it is desirable to study the structure of $\mu^*$.

To analyze the structure of $\mu^*$, we also introduce the state-action cost function:
\begin{equation}
  J(\mathbf{Q},u)\triangleq g(\mathbf{Q},u)+\mathbb{E}\left[V(\mathbf{Q}')\right].\label{eqn:state_action_func}
\end{equation}
Note that $J(\mathbf{Q},u)$ is related to the R.H.S. of the Bellman equation in \eqref{eqn:bellman}.
In particular, based on Lemma~\ref{lemma:bellman}, the optimal policy $\mu^*$ can be expressed in terms of $J(\mathbf{Q},u)$, i.e.,
\begin{equation}
  \mu^*(\mathbf{Q})=\arg\min_{u\in\mathcal{M}} J(\mathbf{Q},u),~\forall \mathbf{Q}\in\bm{\mathcal{Q}}.
\end{equation}
In Sections IV and V, we shall analyze the structures of the optimal policies for the uniform and nonuniform cases, respectively, based on the properties of the value function $V(\mathbf{Q})$ and the state-action cost function $J(\mathbf{Q},u)$.
\section{Optimality Properties in Uniform Case}\label{sec:uniform}
In this section, we consider the uniform case.  We first show that the optimal policy has a switch structure. Then, we show that the switch curve is monotonically non-decreasing for the uniform case with two contents.

\subsection{Structure of Optimal Policy}
Problem 1 can be treated as the problem of scheduling a broadcast server to parallel queues with general random arrivals, channel conditions, and content sizes.
Therefore, the structural analysis is more challenging than the existing structural analysis for simple queueing systems (see Section I for the detailed discussion).
First, by RVIA and the special structures of the request queue dynamics, as well as the power and fetching costs, we have the following property of $V(\mathbf{Q})$.
\begin{lemma}[Monotonicity of Value Function]
In the uniform case, for any $\mathbf{Q}^1$, $\mathbf{Q}^2\in\bm{\mathcal{Q}}$ such that $\mathbf{Q}^2\succeq\mathbf{Q}^1$, we have $V(\mathbf{Q}^2)\geq V(\mathbf{Q}^1)$.\footnote{The notation $\succeq$ indicates component-wise $\geq$.}
\label{lemma:propertyV1}
\end{lemma}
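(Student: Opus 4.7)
The plan is to prove the lemma by induction on the iterates of the RVIA and then pass to the limit. Concretely, I initialize $V_0(\mathbf{Q})\equiv 0$, fix any reference state $\mathbf{Q}^\dagger\in\bm{\mathcal{Q}}$, and set
\begin{equation*}
  V_{n+1}(\mathbf{Q})\triangleq TV_n(\mathbf{Q})-TV_n(\mathbf{Q}^\dagger),
\end{equation*}
where the Bellman operator is $TW(\mathbf{Q})\triangleq\min_{u\in\mathcal{M}}\bigl\{g(\mathbf{Q},u)+\mathbb{E}[W(\mathbf{Q}')]\bigr\}$ and $\mathbf{Q}'$ is the next state produced from $(\mathbf{Q},u,\mathbf{A})$ via \eqref{eqn:queue-ho}. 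Because the state and action spaces are finite and the MDP is unichain, standard RVIA convergence results give $V_n(\mathbf{Q})\to V(\mathbf{Q})$ pointwise as $n\to\infty$; since monotonicity is closed under pointwise limits, it suffices to prove that each $V_n$ is monotone with respect to the componentwise order $\succeq$.

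The base case is immediate, and for the inductive step I will use a pathwise coupling. Fix $\mathbf{Q}^2\succeq\mathbf{Q}^1$, let $u^\star$ attain the minimum defining $TV_n(\mathbf{Q}^2)$, and apply the same action $u^\star$ together with the same arrival realization $\mathbf{A}$ in both states. The queue update is monotone since, componentwise, $x\mapsto\min\{\mathbf{1}(u^\star\neq m)x+A_m,N_m\}$ is non-decreasing, so $\mathbf{Q}^{2\prime}\succeq\mathbf{Q}^{1\prime}$ almost surely; the induction hypothesis then delivers $\mathbb{E}[V_n(\mathbf{Q}^{2\prime})]\geq\mathbb{E}[V_n(\mathbf{Q}^{1\prime})]$. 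The per-stage cost is also monotone in $\mathbf{Q}$ for fixed $u^\star$: $d(\mathbf{Q}^2)\geq d(\mathbf{Q}^1)$, the fetching term $w_f f(u^\star)$ is state-independent, and in the uniform case $p(\mathbf{Q},u^\star)$ equals $p(u^\star)$ on $\{Q_{u^\star}>0\}$ and vanishes otherwise, which is non-decreasing in $Q_{u^\star}$. Since $u^\star$ is in general suboptimal at $\mathbf{Q}^1$, chaining these bounds with $V_{n+1}(\mathbf{Q}^1)+TV_n(\mathbf{Q}^\dagger)\leq g(\mathbf{Q}^1,u^\star)+\mathbb{E}[V_n(\mathbf{Q}^{1\prime})]$ closes the induction, and passing to $n\to\infty$ finishes the proof.

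The main obstacles, as I see them, are not algebraic but technical. First, I need to invoke RVIA convergence cleanly so that the monotonicity established at each finite $n$ is inherited in the limit; for a finite unichain MDP this is routine but it must be stated carefully. Second, although it is easy here to see that the uniform-case power cost is non-decreasing at the boundary $Q_{u^\star}=0$, the subsequent nonuniform-case analogue will be more delicate because $p(\mathbf{Q},u)$ will depend on the whole set $\mathcal{K}(u,t)=\{k|Q_{u,k}>0\}$, and the monotonicity will have to be tracked through the maximum over that set. Keeping the coupling purely sample-pathwise already in the uniform case therefore pays off later, and as a side benefit it makes the present argument robust to any joint distribution of $\mathbf{A}$, since nothing about independence across contents or slots is actually used.
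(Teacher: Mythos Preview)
Your proposal is correct and follows essentially the same approach as the paper's proof: induction on the RVIA iterates with $V_0\equiv 0$, applying the action optimal at $\mathbf{Q}^2$ suboptimally at $\mathbf{Q}^1$ under a common arrival realization to get $\mathbf{Q}^{2\prime}\succeq\mathbf{Q}^{1\prime}$, invoking the induction hypothesis and the monotonicity of $d(\cdot)$, and then passing to the limit via RVIA convergence. Your treatment of the power term (distinguishing $Q_{u^\star}>0$ from $Q_{u^\star}=0$) is slightly more careful than the paper, which simply writes $p(u_n)$ throughout, but this does not affect the argument.
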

\begin{proof}
  Please see Appendix B.
\end{proof}

Then, based on Lemma~\ref{lemma:propertyV1} and  the special properties of multicasting, we have the following property of $J(\mathbf{Q},u)$.
\begin{lemma}[Monotonicity of State-Action Cost Function]
In the uniform case, for any $u,v\in\mathcal{M}$ and $v\neq u$,  $J(\mathbf{Q},u)-J(\mathbf{Q},v)$ is monotonically non-increasing with $Q_u$, i.e.,
\begin{equation}
  J(\mathbf{Q}+\mathbf{e}_u,u)-J(\mathbf{Q}+\mathbf{e}_u,v)\leq J(\mathbf{Q},u)-J(\mathbf{Q},v),\label{eqn:propertyJ1}
\end{equation}
where $\mathbf{e}_u$ denotes the $1\times M$ vector with all entries 0 except for a 1 in its $u$-th entry.
\label{lemma:propertyJ1}
\end{lemma}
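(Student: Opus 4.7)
The plan is to rewrite the desired inequality as
$$J(\mathbf{Q}+\mathbf{e}_u,u) - J(\mathbf{Q},u) \leq J(\mathbf{Q}+\mathbf{e}_u,v) - J(\mathbf{Q},v),$$
so that the claim becomes a statement about the marginal effect of adding a single request to queue $u$. I will decompose each side via the definition $J(\mathbf{Q},u)=g(\mathbf{Q},u)+\mathbb{E}[V(\mathbf{Q}')]$ and treat the per-stage cost and the continuation cost separately.

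For the per-stage cost, in the uniform case $g(\mathbf{Q},u)=d(\mathbf{Q})+w_f f(u)+w_p p(u)$, so the power term does not depend on $\mathbf{Q}$ and the delay term $d(\mathbf{Q})=\sum_m Q_m$ increases by exactly one when the $u$-th coordinate of $\mathbf{Q}$ is incremented, regardless of the action. Hence the per-stage marginals on the two sides are identical and cancel, and the problem reduces to comparing the continuation cost marginals under actions $u$ and $v$.

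For the continuation cost, I rely on the ``reset'' structure of the request queue dynamics in \eqref{eqn:queue-ho}: under action $u$ the queue for content $u$ is emptied, so $Q'_u=\min(A_u,N_u)$ is independent of the current value of $Q_u$, while for $m\neq u$ we have $Q'_m=\min(Q_m+A_m,N_m)$. Coupling the two trajectories with a common realization $\mathbf{A}$ of the arrivals, this gives a pathwise identity of next states and hence the continuation marginal under action $u$ is exactly zero. In contrast, under any action $v\neq u$ queue $u$ is not emptied, the other coordinates are unaffected by the added $\mathbf{e}_u$, and the cap $\min(\cdot,N_u)$ is order-preserving; therefore the next state starting from $\mathbf{Q}+\mathbf{e}_u$ is componentwise $\succeq$ the next state starting from $\mathbf{Q}$, for every $\mathbf{A}$. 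Applying Lemma~\ref{lemma:propertyV1} pathwise and taking expectations yields a nonnegative continuation marginal on the $v$ side. Combining this with the cancellation of the per-stage terms yields the required inequality.

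The main obstacle I anticipate is conceptual rather than computational: recognizing that the entire effect of adding one request to queue $u$ is \emph{absorbed} by action $u$ (which resets that queue) but \emph{persists} under every other action, and that this asymmetry, combined with Lemma~\ref{lemma:propertyV1}, is exactly what drives the monotonicity of $J(\mathbf{Q},u)-J(\mathbf{Q},v)$ in $Q_u$. A secondary technical point is the truncation at $N_u$: when $Q_u+1+A_u$ exceeds the cap, the extra unit may be clipped, but this only replaces the strict componentwise inequality by equality in that coordinate, which remains compatible with the monotonicity argument. If a cleaner inductive framework is desired, the same coupling argument can be threaded through the RVIA iterates $V_n$ (in the same manner as the proof of Lemma~\ref{lemma:propertyV1} in Appendix~B) and then passed to the limit.
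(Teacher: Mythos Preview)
Your proposal is correct and follows essentially the same route as the paper's proof in Appendix~C: the per-stage cost terms cancel because in the uniform case $g(\mathbf{Q},u)-g(\mathbf{Q},v)$ does not depend on $Q_u$, the continuation cost under action $u$ is unchanged because the reset $Q'_u=\min\{A_u,N_u\}$ erases the extra unit (the paper writes this as $\mathbf{Q}^{1'}=\mathbf{Q}^{3'}$), and under action $v\neq u$ the next state is componentwise larger (the paper's $\mathbf{Q}^{4'}\succeq\mathbf{Q}^{2'}$), so Lemma~\ref{lemma:propertyV1} gives the sign. Your remarks on the truncation at $N_u$ and on threading the argument through RVIA are consistent with the paper's treatment.
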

\begin{proof}
  Please see Appendix C.
\end{proof}

Note that, the property of $J(\mathbf{Q},u)$ in Lemma~\ref{lemma:propertyJ1} is similar to the diminishing-return property of submodular functions used in the existing structural analysis\cite{Koole}.
Lemma~\ref{lemma:propertyJ1} comes from the special structure introduced by multicasting and is key to analyze the optimality properties.
Lemma~\ref{lemma:propertyJ1} indicates that, if it is better to multicast content $u$ than content $v$ for some state $\mathbf{Q}$ (i.e., $J(\mathbf{Q},u)\leq J(\mathbf{Q},v)$), then it is also better to multicast content $u$ than $v$ for state $\mathbf{Q}+\mathbf{e}_u$ (i.e., $J(\mathbf{Q}+\mathbf{e}_u,u)\leq J(\mathbf{Q}+\mathbf{e}_u,v)$). This leads to the following switch structure of the optimal policy $\mu^*$.
\begin{theorem}[Switch Structure of Optimal Policy]
The optimal policy $\mu^*$ in the uniform case has a switch structure, i.e., for all $u\in\mathcal{M}$, we have
\begin{equation}\label{eqn:switch}
  \mu^*(\mathbf{Q})=u, \text{if}~Q_u\geq s_u(\mathbf{Q}_{-u}),
\end{equation}
where the switch curve for content $u$ is given by
\begin{equation*}
s_u(\mathbf{Q}_{-u})\triangleq\begin{cases}\min\mathcal{S}_u(\mathbf{Q}_{-u}),  & \text{if}~\mathcal{S}_u(\mathbf{Q}_{-u})\neq\emptyset \\
            \infty,  &\text{otherwise}
  \end{cases}
  \end{equation*}
with $\mathcal{S}_u(\mathbf{Q}_{-u})\triangleq\{Q_u| J(\mathbf{Q},u)\leq J(\mathbf{Q},v)~\forall v\in\mathcal{M}, v\neq u\}$.
 Here, $\mathbf{Q}_{-u}\triangleq(Q_m)_{m\in\mathcal{M},m\neq u}$ denotes the request queue state vector corresponding to all other contents except content $u$.
\label{theorem:theorem1}
\end{theorem}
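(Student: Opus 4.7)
The plan is to argue directly from Lemma~\ref{lemma:bellman}, which states that $\mu^*(\mathbf{Q}) = \arg\min_{u \in \mathcal{M}} J(\mathbf{Q}, u)$. Fix a content $u \in \mathcal{M}$ and a vector $\mathbf{Q}_{-u}$. It is enough to establish that the set of $Q_u \in \mathcal{Q}_u$ at which $u$ is a minimizer of $J(\mathbf{Q}, \cdot)$ is upward-closed in $Q_u$; the switch threshold $s_u(\mathbf{Q}_{-u})$ is then, by construction, its minimum element, and the characterization in (\ref{eqn:switch}) follows immediately.

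The key step is a monotonicity induction driven by Lemma~\ref{lemma:propertyJ1}. Assume $\mathcal{S}_u(\mathbf{Q}_{-u}) \neq \emptyset$; otherwise the claim is vacuous since $s_u(\mathbf{Q}_{-u}) = \infty$. Let $\mathbf{Q}^\star$ denote the state whose $u$-coordinate equals $s_u(\mathbf{Q}_{-u})$ and whose remaining coordinates coincide with $\mathbf{Q}_{-u}$. By the definition of $\mathcal{S}_u$, we have $J(\mathbf{Q}^\star, u) \leq J(\mathbf{Q}^\star, v)$ for every $v \neq u$. Applying Lemma~\ref{lemma:propertyJ1} with $\mathbf{Q}$ replaced by $\mathbf{Q}^\star + (n-1)\mathbf{e}_u$ yields
\[
J(\mathbf{Q}^\star + n\mathbf{e}_u, u) - J(\mathbf{Q}^\star + n\mathbf{e}_u, v) \leq J(\mathbf{Q}^\star + (n-1)\mathbf{e}_u, u) - J(\mathbf{Q}^\star + (n-1)\mathbf{e}_u, v),
\]
for each admissible $n \geq 1$. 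Chaining these inequalities back to $n = 0$ gives $J(\mathbf{Q}^\star + n\mathbf{e}_u, u) \leq J(\mathbf{Q}^\star + n\mathbf{e}_u, v)$ for all $v \neq u$, so $u$ remains a minimizer of $J$ at every state with $Q_u \geq s_u(\mathbf{Q}_{-u})$. This is precisely (\ref{eqn:switch}).

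The bulk of the difficulty lies upstream in Lemma~\ref{lemma:propertyJ1} (established via Lemma~\ref{lemma:propertyV1} in Appendices~B and~C); once that diminishing-return-type inequality is available, the switch structure follows from a purely one-dimensional induction along the $u$-axis of $\bm{\mathcal{Q}}$. The only minor points to dispatch are: (i) the empty-$\mathcal{S}_u$ boundary case, handled by the $\infty$ convention in the definition of $s_u$; (ii) ties in the arg min, which are harmless since (\ref{eqn:switch}) only requires $u$ to attain the minimum rather than be the unique minimizer; and (iii) staying inside the finite cap $Q_u \leq N_u$, which is automatic because the induction runs only over states already in $\bm{\mathcal{Q}}$. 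Thus the main obstacle is not in this final deduction but in securing the monotone-difference property of $J$; given it, the switch structure is essentially immediate.
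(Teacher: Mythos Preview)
Your proposal is correct and follows essentially the same route as the paper's proof (Appendix~D): handle the vacuous case $s_u(\mathbf{Q}_{-u})=\infty$, start at the threshold state, and use Lemma~\ref{lemma:propertyJ1} to propagate the inequality $J(\cdot,u)\le J(\cdot,v)$ upward in $Q_u$. The only cosmetic difference is that you spell out the induction/chaining along $\mathbf{e}_u$ explicitly, whereas the paper applies Lemma~\ref{lemma:propertyJ1} in a single line to pass from $\mathbf{Q}$ to any $\mathbf{Q}'$ with $Q_u'\ge Q_u$.
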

\begin{proof}
  Please see Appendix D.
\end{proof}
\begin{remark}
Theorem~\ref{theorem:theorem1} indicates that, the request queue state space is divided into $M$ regions corresponding to the $M$ contents,
and the optimal policy schedules a content for multicasting when the request queue state falls in the region corresponding to the content, as illustrated in Fig.~\ref{fig:symStruc3q}.
In addition, given $\mathbf{Q}_{-u}$, the scheduling for content $u$ is of the threshold type, as illustrated in Fig.~\ref{fig:symStruc2q}.
\textcolor{black}{Specifically, if $Q_u\geq s_u(\mathbf{Q}_{-u})$, the BS schedules content $u$ for multicasting and the request queue for content $u$ is emptied; if $Q_u< s_u(\mathbf{Q}_{-u})$, the BS keeps on waiting to gather more requests for content $u$ and the request queue for content $u$ keeps on increasing.}
This indicates that,  when $Q_u$ is small (i.e., the delay cost is small), it is not efficient to schedule content $u$, as a higher power cost (and a higher fetching cost if $u\not\in\mathcal{C}$) is consumed per request for content $u$; \textcolor{black}{when $Q_u$ is large (i.e., the delay cost is large), it is more efficient to schedule content $u$, as the requests for content $u$  is more urgent.} This reveals the tradeoff between the delay cost and the power cost (and the fetching cost if $u\not\in\mathcal{C}$) for content $u$.
\label{remark:remarkofTheorem1}
\end{remark}
\begin{answer}
\begin{remark}
From Theorem~\ref{theorem:theorem1}, we can see that cache placement does not affect the structural properties of the optimal policy. That is, the switch structure holds for any cache placement strategies. However, cache placement does affect the values of the switch curves of the optimal policy. The reason is that  cache placement affects the tradeoff among the delay, power and fetching costs through affecting the fetching costs, and the switch curves of the optimal policy are adaptive to this tradeoff. The impacts of the fetching costs on the switch curves can be observed from Fig.~\ref{fig:uniform_fetch}.
\label{remark:switchcurve}
\end{remark}
\end{answer}

\textcolor{black}{Note that, although the exact values of the switch curves rely on the exact value of $V(\mathbf{Q})$, the switch structural property only relies on the monotonicity properties of $V(\mathbf{Q})$ and $J(\mathbf{Q},u)$.
These structural properties can be used to reduce the computational complexity in obtaining the optimal policy, without knowing the exact value of the switch curves.
Specifically,}
from Theorem~~\ref{theorem:theorem1}, we know that, for all $\mathbf{Q}\in\bm{\mathcal{Q}}$,
\begin{equation}\label{eqn:indicate_uni}
  \mu^*(\mathbf{Q})=u~\Rightarrow~\mu^*(\mathbf{Q}+\mathbf{e}_u)=u.
\end{equation}
Therefore, computing the optimal policy $\mu^*$ requires conducting the minimization in the R.H.S. of \eqref{eqn:mu} for some $\mathbf{Q}$ only (instead of all $\mathbf{Q}\in\bm{\mathcal{Q}}$), which significantly reduces the computational complexity. Later, in Section VI, we shall design low complexity optimal algorithms based on \textcolor{black}{\eqref{eqn:indicate_uni}}.

\begin{figure}[t]
\begin{minipage}[t]{.5\linewidth}
\centering
        \includegraphics[scale=.33]{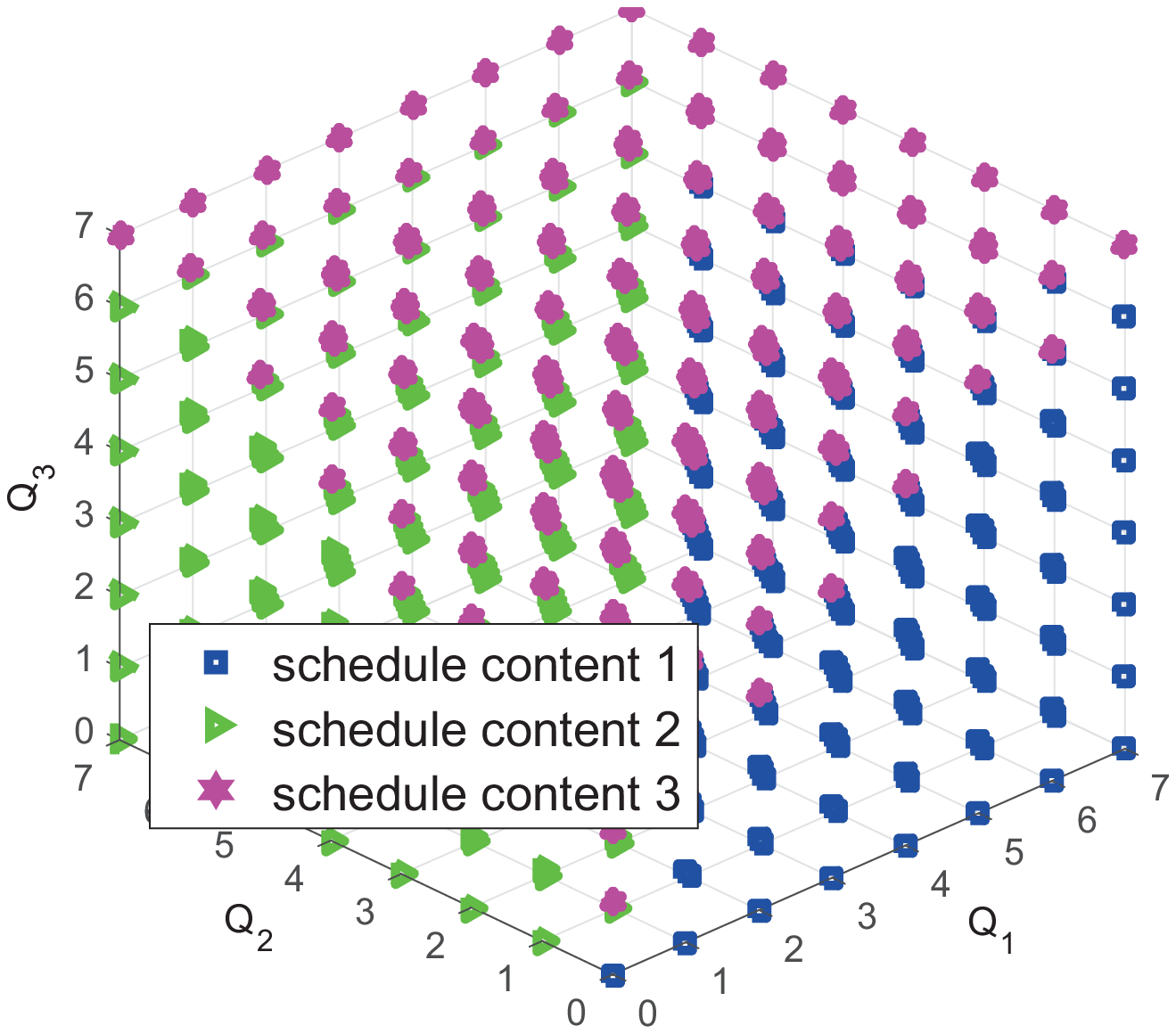}
\subcaption{\small{Three-content case.}}\label{fig:symStruc3q}
\end{minipage}%
\begin{minipage}[t]{.5\linewidth}
\centering
        \includegraphics[scale=.33]{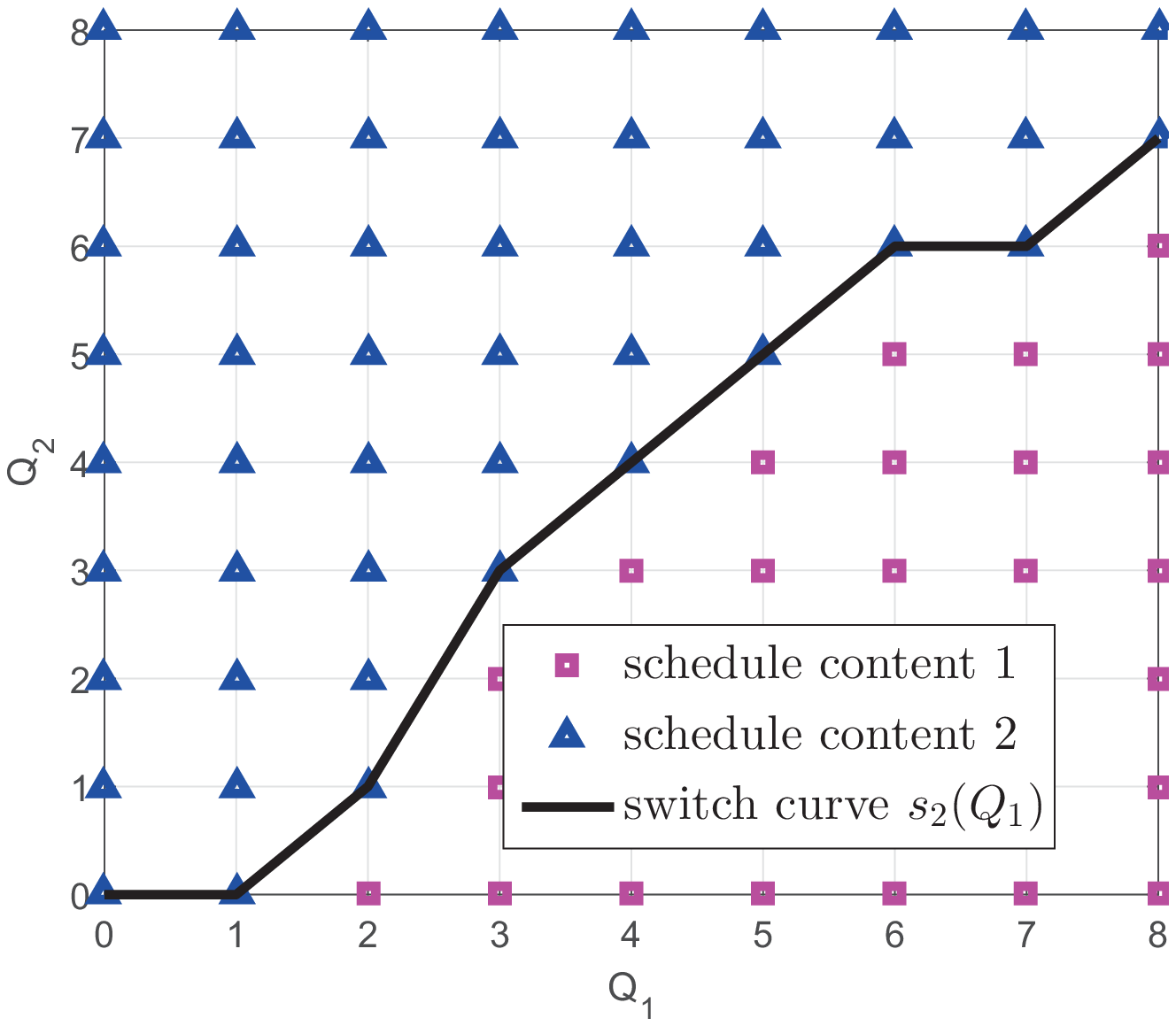}
\subcaption{\small{Two-content case.}}\label{fig:symStruc2q}
\end{minipage}
\caption{\small{Switch structure of optimal scheduling in the uniform case.}}\label{fig:uniform}
\end{figure}
\subsection{Special Case: Two Contents}
\begin{figure}[t]
\begin{minipage}[t]{.5\linewidth}
\centering
        \includegraphics[scale=.33]{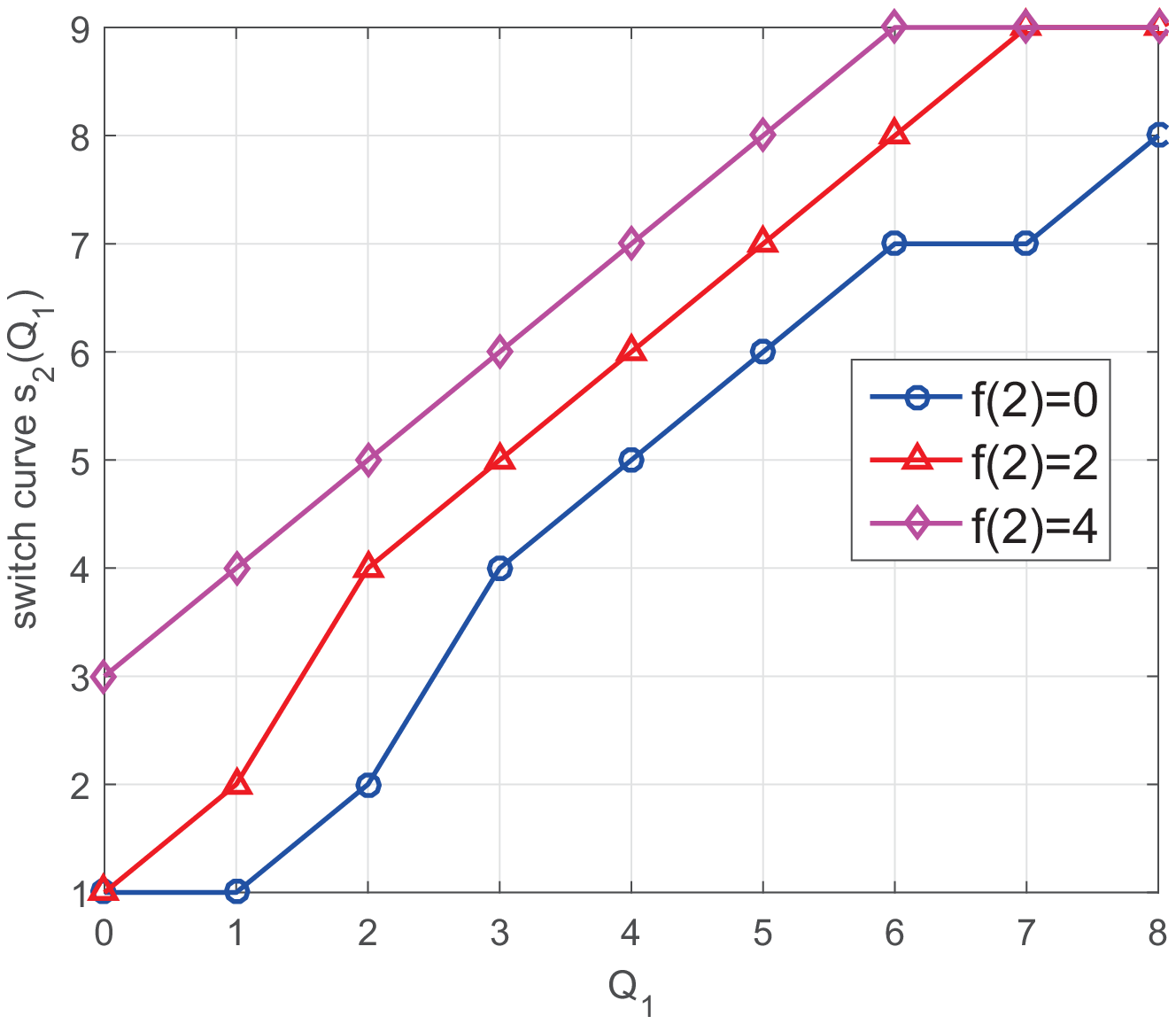}
\subcaption{\small{Uniform case: two-content.}}\label{fig:uniform_fetch}
\end{minipage}%
\begin{minipage}[t]{.5\linewidth}
\centering
        \includegraphics[scale=.33]{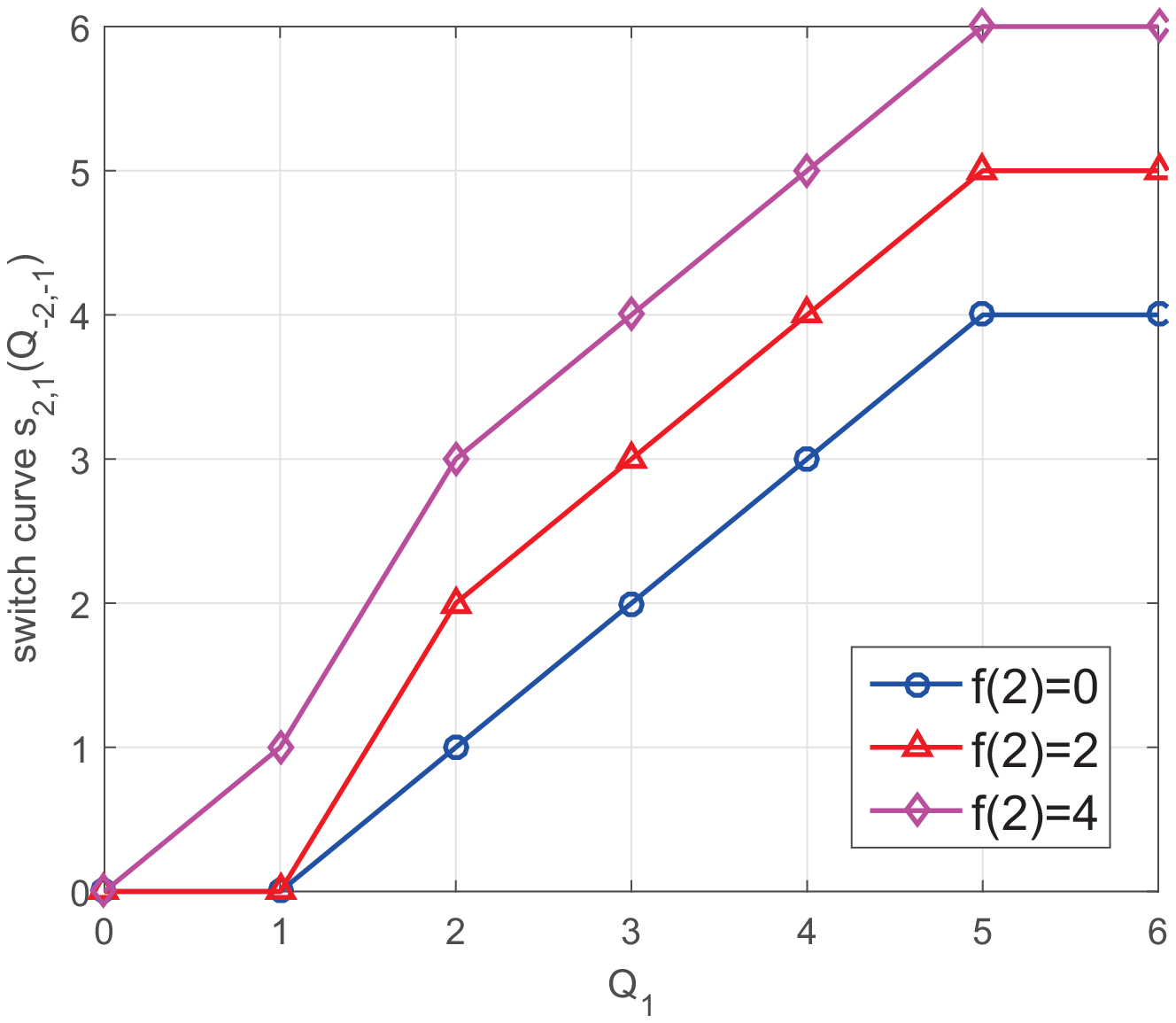}
\subcaption{\small{Nonuniform case: two-content two-user.}}\label{fig:nonuniform_fetch}
\end{minipage}
\caption{\textcolor{black}{\small{Impacts of the fetching costs on switch curves in the uniform and nonuniform cases.}}}\label{fig:fetch}
\end{figure}
Now, consider the special uniform case with two contents, i.e., $M=2$. By Theorem~\ref{theorem:theorem1}, we can see that, for $M=2$, either one of the two switch curves, i.e., $s_1(Q_2)$ and $s_2(Q_1)$, is sufficient to characterize the optimal policy.
Moreover, by Lemma \ref{lemma:propertyV1} and Lemma \ref{lemma:propertyJ1}, $s_1(Q_2)$ and $s_2(Q_1)$ have the following property.
 \begin{lemma}[Monotonicity of Switch Curve] For the uniform case with two contents, the switch curves $s_1(Q_2)$ and $s_2(Q_1)$ of the optimal policy are monotonically non-decreasing in $Q_2$ and $Q_1$, respectively.
\label{lemma:propertyofswitch}
\end{lemma}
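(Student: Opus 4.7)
The plan is to derive the monotonicity of the switch curves directly from Lemma~\ref{lemma:propertyJ1}, specialized to the two-content setting $M=2$. By the definition of $s_1(Q_2)$, for any $Q_1<s_1(Q_2)$ we have $Q_1\notin\mathcal{S}_1(Q_2)$, which when $M=2$ means $J((Q_1,Q_2),1)>J((Q_1,Q_2),2)$, i.e.\ $J((Q_1,Q_2),2)-J((Q_1,Q_2),1)<0$. My goal is to show that this strict inequality is preserved when $Q_2$ is incremented by one, so that $Q_1$ also fails to belong to $\mathcal{S}_1(Q_2+1)$, and hence $s_1(Q_2+1)\geq s_1(Q_2)$.

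The key step uses Lemma~\ref{lemma:propertyJ1} with $u=2$ and $v=1$, which gives
\begin{equation*}
J(\mathbf{Q}+\mathbf{e}_2,2)-J(\mathbf{Q}+\mathbf{e}_2,1)\;\leq\;J(\mathbf{Q},2)-J(\mathbf{Q},1).
\end{equation*}
Thus whenever the right-hand side is negative, so is the left-hand side, which immediately yields $J((Q_1,Q_2+1),1)>J((Q_1,Q_2+1),2)$, i.e.\ $Q_1\notin\mathcal{S}_1(Q_2+1)$. Consequently every element of $\mathcal{S}_1(Q_2+1)$ is at least $s_1(Q_2)$, so $s_1(Q_2+1)\geq s_1(Q_2)$. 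The edge cases are handled trivially: if $\mathcal{S}_1(Q_2)=\emptyset$ then $s_1(Q_2)=\infty$ is vacuous, and if $\mathcal{S}_1(Q_2+1)=\emptyset$ then $s_1(Q_2+1)=\infty\geq s_1(Q_2)$ automatically. Iterating this one-step argument over $Q_2$ yields the desired non-decreasing property of $s_1(\cdot)$.

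The argument for $s_2(Q_1)$ is perfectly symmetric: applying Lemma~\ref{lemma:propertyJ1} with $u=1$ and $v=2$ shows that if $J((Q_1,Q_2),1)-J((Q_1,Q_2),2)<0$, then the same strict inequality holds at $(Q_1+1,Q_2)$, so that $s_2(Q_1+1)\geq s_2(Q_1)$. Since this only relies on the two-content structure (which forces the comparison in the definition of $\mathcal{S}_u$ to involve a single alternative $v$), no additional ingredients are required.

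I do not expect any serious obstacle. The only place that warrants care is the translation between the set-based definition of the switch curve (with a non-strict inequality inside $\mathcal{S}_u$) and the strict-inequality statements produced by the diminishing-return form of Lemma~\ref{lemma:propertyJ1}; once one observes that $Q_1<s_1(Q_2)$ is equivalent to a \emph{strict} preference for content $2$ (in the $M=2$ case), the monotonicity falls out of a single application of Lemma~\ref{lemma:propertyJ1}. Note that this short proof crucially uses $M=2$: for $M\geq 3$ the set $\mathcal{S}_u$ is defined by simultaneous comparisons against \emph{all} other contents $v$, and incrementing $Q_u$ need not preserve the joint inequality structure, which is why the theorem is stated only for two contents.
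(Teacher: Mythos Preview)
Your proposal is correct and essentially follows the paper's approach: both hinge on the inequality $J(\mathbf{Q}+\mathbf{e}_u,u)-J(\mathbf{Q}+\mathbf{e}_u,v)\leq J(\mathbf{Q},u)-J(\mathbf{Q},v)$ specialized to $M=2$, and then translate it into monotonicity of $s_1,s_2$. The only difference is that you invoke Lemma~\ref{lemma:propertyJ1} directly (with $u=2,v=1$ and $u=1,v=2$), whereas the paper re-derives these two inequalities from scratch using Lemma~\ref{lemma:propertyV1}; your route is therefore slightly more economical but not substantively different.
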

\begin{proof}
  Please see Appendix E.
\end{proof}
\begin{figure}[!ht]
\begin{centering}
\includegraphics[scale=.45]{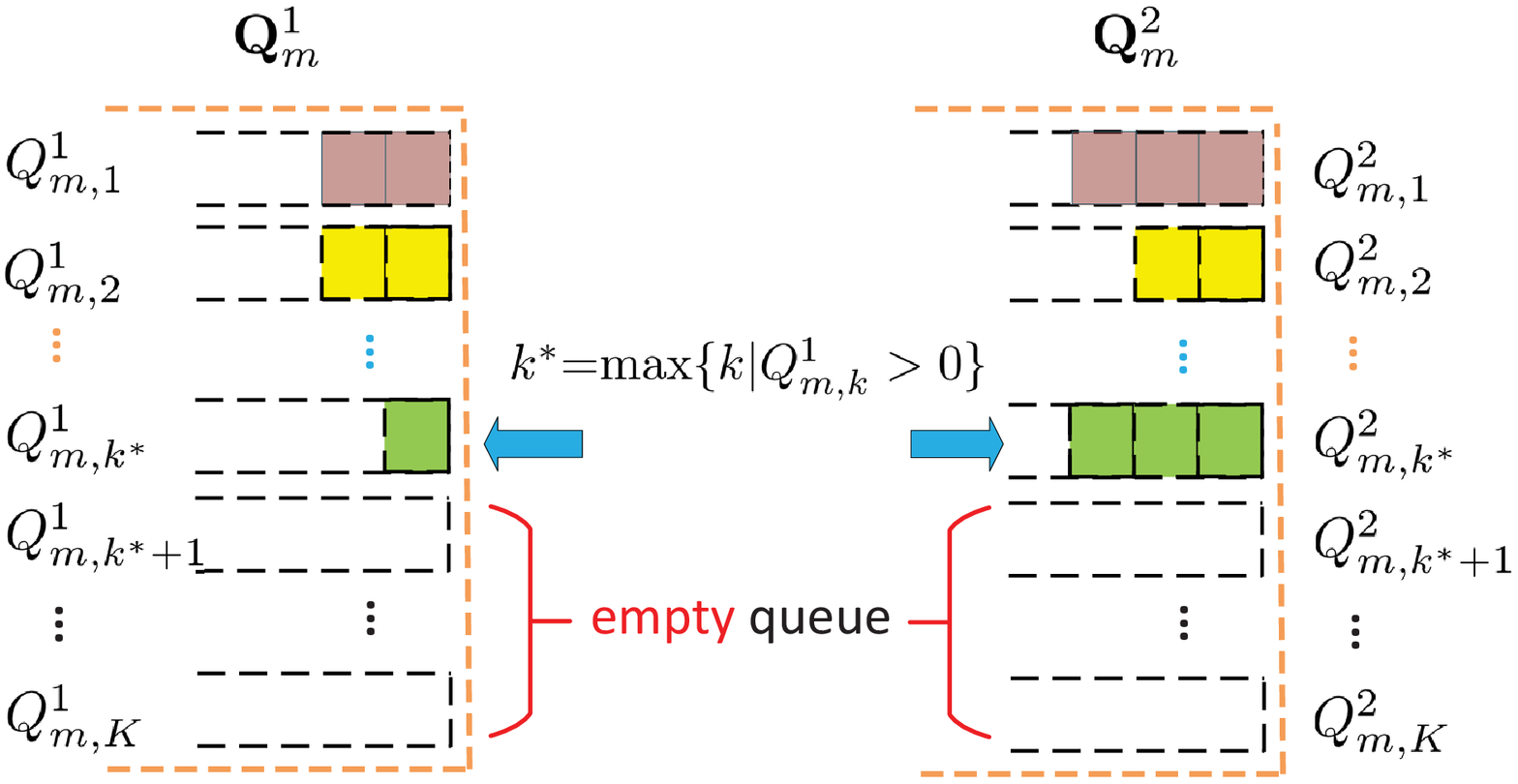}
 \caption{\small{Illustration of $\mathbf{Q}_m^2\trianglerighteq\mathbf{Q}_m^1$.}}\label{fig:notation}
\end{centering}
\end{figure}
\begin{figure*}[t]
\begin{minipage}[t]{0.33\linewidth}
\centering
        \includegraphics[scale=.42]{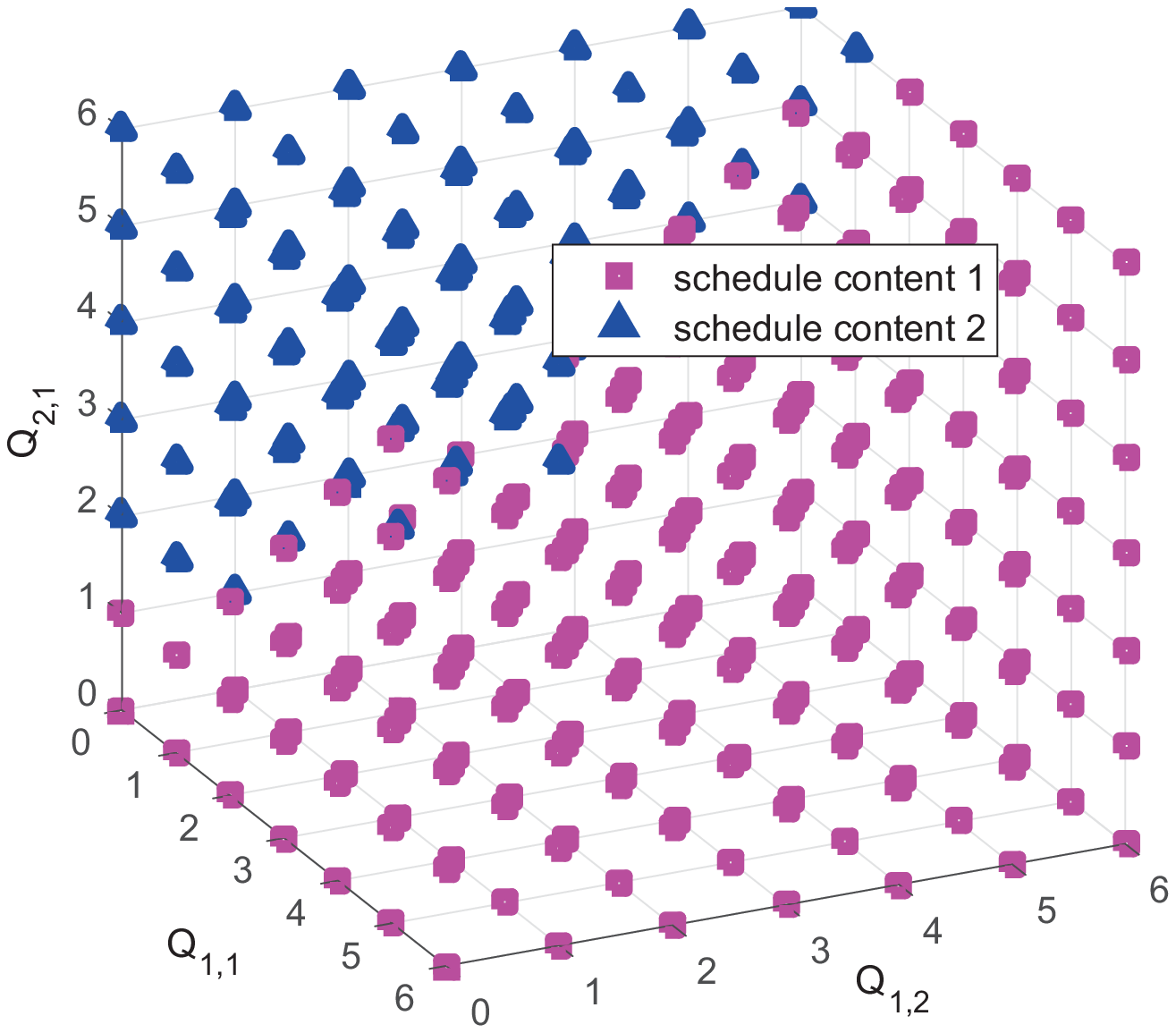}
\subcaption{\small{Whole space.}}\label{fig:asymStruc3D}
\end{minipage}%
\begin{minipage}[t]{.33\linewidth}
\centering
        \includegraphics[scale=.42]{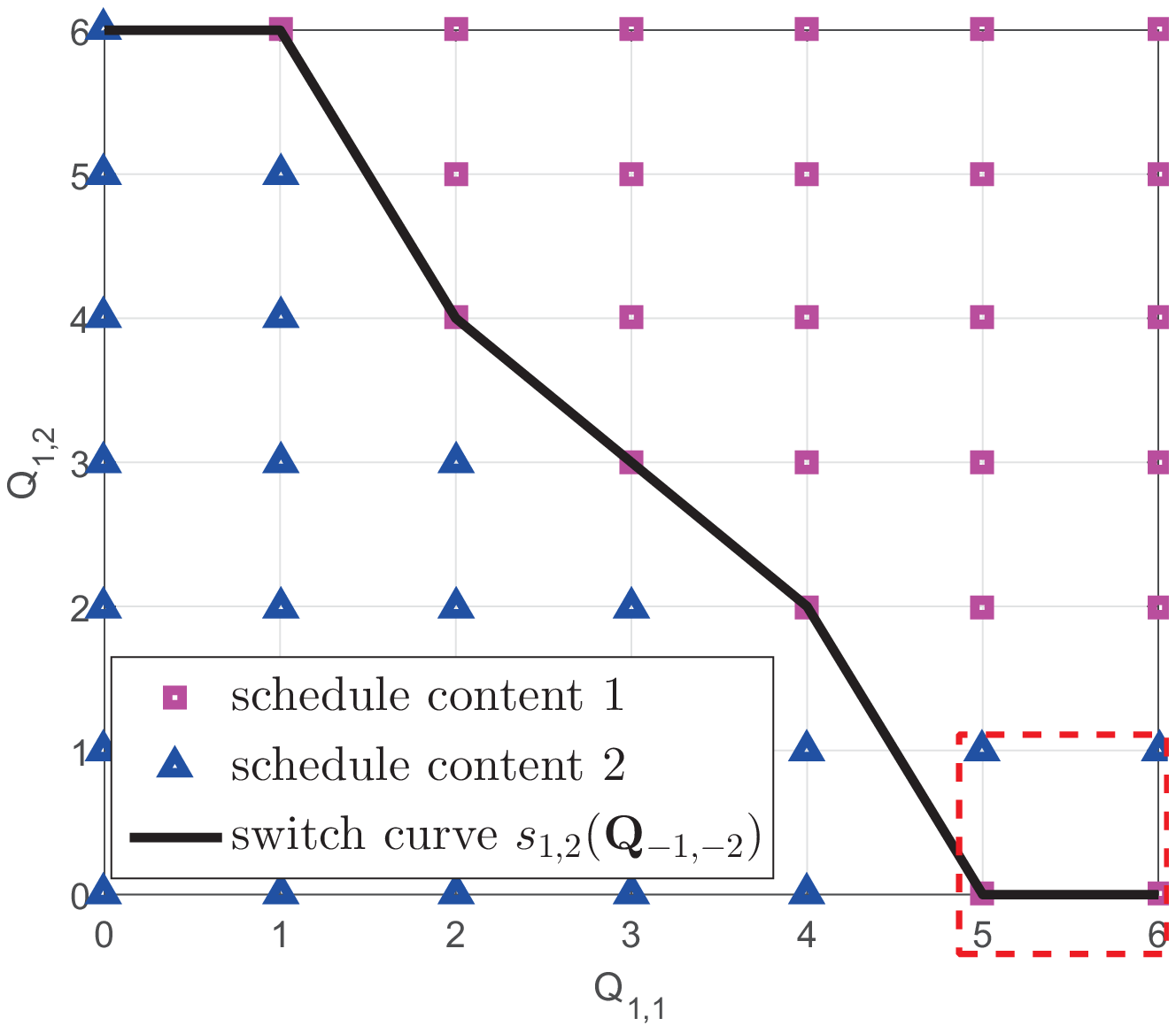}
\subcaption{\small{Fixed $Q_{2,1}$}.}\label{fig:asymStruc2D1}
\end{minipage}
\begin{minipage}[t]{.33\linewidth}
\centering
        \includegraphics[scale=.42]{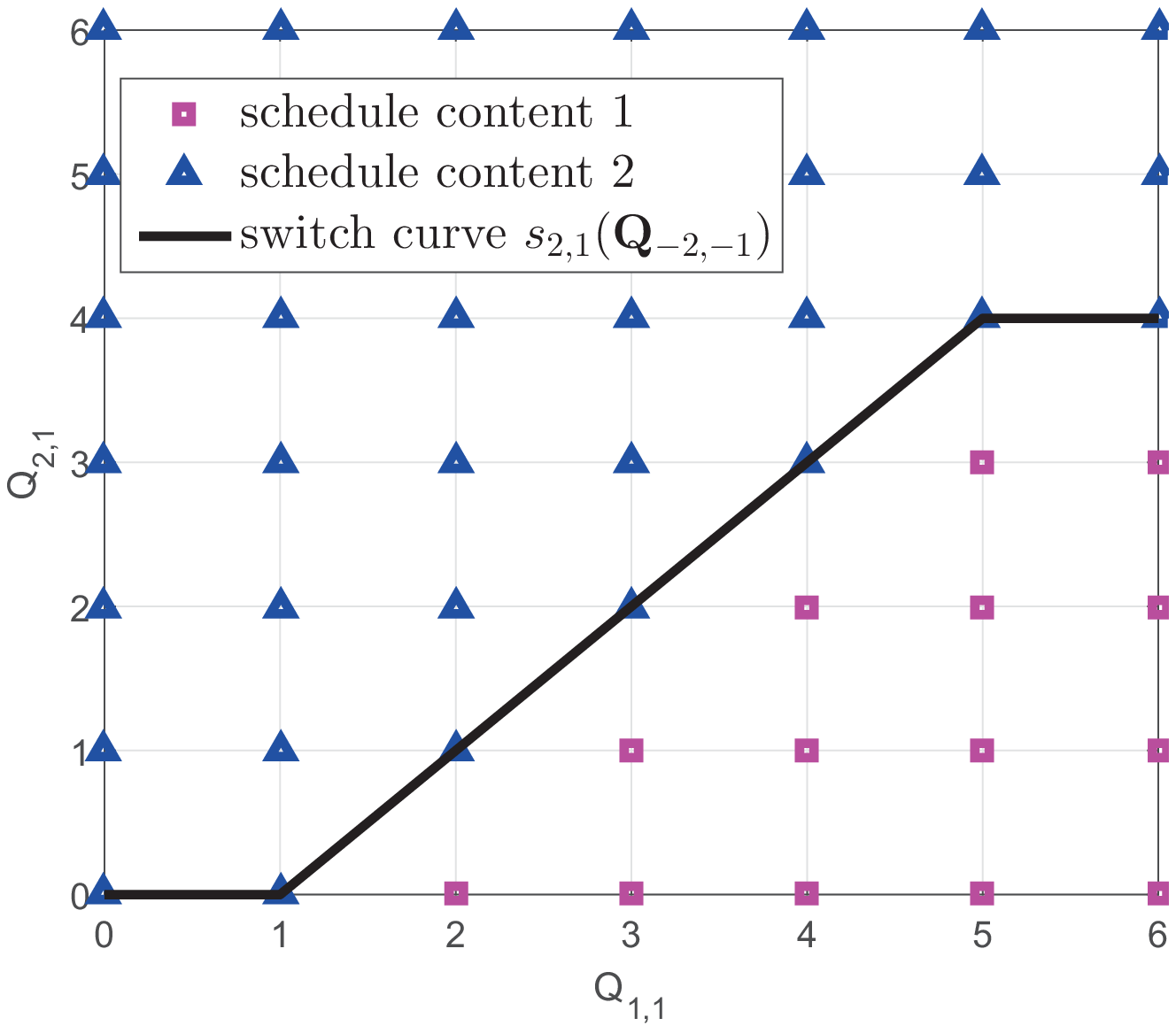}
\subcaption{\small{Fixed $Q_{1,2}$}.}\label{fig:asymStruc2D2}
\end{minipage}
\caption{\small{Partial switch structure of optimal scheduling in the nonuniform case. Two-content, two-user case with $A_{2,2}(t)=0, \forall t.$}}\label{fig:nonuniform}
\end{figure*}
Fig.~\ref{fig:symStruc2q} illustrates the monotonicity of the switch curve.
We characterize the number of policies with monotonically non-decreasing switch curves in the following proposition.
\begin{proposition}
\label{proposition:num}
  For the uniform case with two contents, the number of the policies with monotonically non-decreasing switch curves is $N_1+N_2+2 \choose N_1+1$.
\end{proposition}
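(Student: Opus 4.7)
The plan is to reduce the counting to the classical problem of enumerating non-decreasing integer sequences, and then to apply a stars-and-bars identity. First, I would observe that when $M=2$, the state space $\bm{\mathcal{Q}} = \{0,\ldots,N_1\}\times\{0,\ldots,N_2\}$ is partitioned by any policy $\mu$ into two regions $R_1 = \{\mathbf{Q} : \mu(\mathbf{Q})=1\}$ and $R_2 = \{\mathbf{Q} : \mu(\mathbf{Q})=2\}$. By Theorem~\ref{theorem:theorem1}, the switch structure amounts to $R_1$ being up-closed in the $Q_1$-direction (so it is described by a single curve $s_1$) and $R_2$ being up-closed in the $Q_2$-direction (so it is described by $s_2$). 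Since the two regions are complements, the policy is fully specified by either one of the two curves.

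Next, I would encode the curve as a function $s_1 : \{0,1,\ldots,N_2\} \to \{0,1,\ldots,N_1+1\}$, where the value $s_1(Q_2)=N_1+1$ (the ``$\infty$'' case in Theorem~\ref{theorem:theorem1}) is used to represent states at which content $1$ is never scheduled. I would then check that $R_2$ being up-closed in $Q_2$ is equivalent to $s_1$ being non-decreasing, and that this in turn is equivalent to the monotonicity of $s_2$ promised by Lemma~\ref{lemma:propertyofswitch}. Consequently, counting policies with both switch curves non-decreasing reduces to counting non-decreasing functions from a domain of size $N_2+1$ to a codomain of size $N_1+2$.

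Finally, I would invoke the standard stars-and-bars identity: the number of non-decreasing sequences of length $n$ taking values in a set of $k$ elements equals $\binom{n+k-1}{n}$. Plugging in $n = N_2+1$ and $k = N_1+2$ gives $\binom{(N_2+1)+(N_1+1)}{N_2+1} = \binom{N_1+N_2+2}{N_1+1}$, matching the claim.

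The main obstacle I anticipate is the careful bookkeeping at the boundary, namely verifying that the encoding by $s_1$ alone (with the convention $s_1(Q_2)=N_1+1$ meaning ``never schedule content $1$'') yields an exact bijection between switch-structured policies with monotone curves and non-decreasing sequences, with no double counting (e.g.\ the all-content-$2$ policy should correspond to the constant sequence $s_1 \equiv N_1+1$) and no omissions (e.g.\ the all-content-$1$ policy should correspond to $s_1 \equiv 0$). Once this boundary handling is nailed down, the count follows immediately.
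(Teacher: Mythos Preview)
Your reduction is correct and matches the paper's setup: both arguments encode a switch-structured policy by a single non-decreasing map from $\{0,\ldots,N_2\}$ to $\{0,\ldots,N_1+1\}$ (the paper uses the symmetric encoding via $s_2$), so the count becomes the number of non-decreasing integer sequences of length $N_2+1$ in a codomain of size $N_1+2$. The difference lies only in how that count is evaluated. You invoke the stars-and-bars identity $\binom{n+k-1}{n}$ directly, which gives $\binom{N_1+N_2+2}{N_2+1}=\binom{N_1+N_2+2}{N_1+1}$ in one line. The paper instead writes the count as a nested sum and proves the closed form by induction on $N_1+N_2$, using the recurrence $Z(N_1,N_2)=Z(N_1-1,N_2)+Z(N_1,N_2-1)$ together with Pascal's identity. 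Your route is shorter and avoids the induction altogether; the paper's route is more self-contained in that it re-derives the combinatorial identity rather than citing it. Your caution about the boundary cases (the constant policies $\mu\equiv 1$ and $\mu\equiv 2$) is appropriate, and once you fix the convention $s_1(Q_2)=N_1+1$ for ``never schedule content $1$,'' the bijection is indeed exact.
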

\begin{proof}
  Please see Appendix F.
\end{proof}

Table~\ref{reduction} shows that the space of possible optimal policies in the uniform case with two contents can be substantially reduced based on Lemma~\ref{lemma:propertyofswitch}.
\begin{table}[!h]
\renewcommand{\arraystretch}{0.9}
\small
\begin{tabular}{|c|c|c|}
\hline
\tabincell{c}{Queue \\Size}&\tabincell{c}{Policy in \\ Definition 1}& \tabincell{c}{Policy with Monotonically \\ Non-decreasing Switch Curve} \\
\hline
$(N_1,N_2)$&$2^{(N_1+1)(N_2+1)}$ & $N_1+N_2+2 \choose N_1+1$\\
\hline
$(4,4)$&$3.36\times 10^7$&$252$\\
\hline
$(8,8)$&$2.42\times 10^{24}$&$48620$\\
\hline
\end{tabular}
\centering
\caption{\small{Policy space size in the uniform case at $M=2$.}}\label{reduction}
\end{table}

\section{Optimality Properties in Nonuniform Case}\label{sec:nonuniform}
In this section, we characterize the structure of the optimal policy for the nonuniform case.
Note that, different from the uniform case, the power cost $p(\mathbf{Q},u)$ in the nonuniform case also depends on the request queue state  $\mathbf{Q}$.
Therefore, due to the coupling among the request queues, the structural analysis for the nonuniform case is more challenging than that for the uniform case.

To analyze the structure of the optimal policy, we first introduce a new notation (see Fig.~\ref{fig:notation} for an example).
For each $m$, define $\mathbf{Q}_m^2\trianglerighteq\mathbf{Q}_m^1$ if and only if,
\begin{equation*}
  \left\{
    \begin{array}{ll}
      Q^2_{m,k}\geq Q^1_{m,k}, & \hbox{if $k\leq \max\{k|Q^1_{m,k}>0\}$;} \\
      Q^2_{m,k}=Q^1_{m,k}, & \hbox{otherwise.}
    \end{array}
  \right.\forall k.
\end{equation*}
Define $\mathbf{Q}^2\trianglerighteq\mathbf{Q}^1$ if and only if $\mathbf{Q}_m^2\trianglerighteq\mathbf{Q}_m^1$ for all $m$.

By RVIA and the special structures of the request queue dynamics, as well as the power and fetching costs, we can show the following property of $V(\mathbf{Q})$.
\begin{lemma}[Partial Monotonicity of Value Function]
In the nonuniform case, for any $\mathbf{Q}^1$, $\mathbf{Q}^2\in\bm{\mathcal{Q}}$ such that $\mathbf{Q}^2\trianglerighteq\mathbf{Q}^1$, we have $V(\mathbf{Q}^2)\geq V(\mathbf{Q}^1)$.
\label{lemma:propertyV2}
\end{lemma}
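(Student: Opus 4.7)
The plan is to adapt the RVIA-based argument used for Lemma~\ref{lemma:propertyV1}. I would set $V_0 \equiv 0$ and iterate
\begin{equation*}
V_{n+1}(\mathbf{Q}) = \min_{u\in\mathcal{M}}\bigl\{g(\mathbf{Q},u) + \mathbb{E}[V_n(\mathbf{Q}')]\bigr\} - V_n(\mathbf{Q}^{\mathrm{ref}}),
\end{equation*}
employing a common coupled realization of the arrival matrix $\mathbf{A}$ across initial states. Under the RVIA convergence invoked in the proof of Lemma~\ref{lemma:bellman}, $V_n \to V$ pointwise, so it suffices to show by induction on $n$ that $\mathbf{Q}^2 \trianglerighteq \mathbf{Q}^1$ implies $V_n(\mathbf{Q}^2) \geq V_n(\mathbf{Q}^1)$.

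The inductive step reduces to two sub-claims. \textbf{(a)} For every action $u$, $g(\mathbf{Q}^2,u) \geq g(\mathbf{Q}^1,u)$: the delay component is monotone because the first clause of $\trianglerighteq$ gives componentwise $\geq$ while the second gives equality, the fetching cost does not depend on $\mathbf{Q}$, and the power cost is in fact \emph{equal}. The last point is where the peculiar shape of $\trianglerighteq$ pays off: letting $k_m^\star := \max\{k : Q^1_{m,k} > 0\}$, the second clause of $\trianglerighteq$ forces $Q^2_{m,k} = 0$ for $k > k_m^\star$, while the first forces $Q^2_{m,k_m^\star} \geq Q^1_{m,k_m^\star} > 0$, so $\max\{k : Q^2_{u,k} > 0\} = k_u^\star$ and the two maxima in \eqref{eqn:powerdefi} coincide. \textbf{(b)} The relation $\trianglerighteq$ is preserved by the queue update \eqref{eqn:queue-he} when the same action and the same arrivals are applied to both states: for the scheduled content the two updated queues are entrywise equal to $\min\{A_{u,k},N_{u,k}\}$; for $m\neq u$, on $\{k \leq k_m^\star\}$ the inequality $Q^{2\prime}_{m,k} \geq Q^{1\prime}_{m,k}$ is inherited (the $\min$ with $N_{m,k}$ preserves $\geq$), and on $\{k > k_m^\star\}$ both updated entries equal $\min\{A_{m,k},N_{m,k}\}$ since we start from $0$ on each side. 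Because $Q^{1\prime}_{m,k_m^\star}>0$, the new active threshold $k_m^{\star\prime}$ is at least $k_m^\star$, so equality on $\{k > k_m^\star\}$ in particular delivers the equality on $\{k > k_m^{\star\prime}\}$ required by $\trianglerighteq$.

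Combining (a) and (b), for any fixed $u$ the induction hypothesis gives
\begin{equation*}
g(\mathbf{Q}^2,u) + \mathbb{E}[V_n(\mathbf{Q}^{2\prime})] \geq g(\mathbf{Q}^1,u) + \mathbb{E}[V_n(\mathbf{Q}^{1\prime})].
\end{equation*}
Evaluating the left side at the minimizing $u$ for $\mathbf{Q}^2$ and then replacing the right side by its own minimum over $u$ (which only decreases it), and finally subtracting the common reference term $V_n(\mathbf{Q}^{\mathrm{ref}})$, yields $V_{n+1}(\mathbf{Q}^2) \geq V_{n+1}(\mathbf{Q}^1)$ and closes the induction. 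The main obstacle is sub-claim (b): the asymmetric ``growing head / frozen tail'' definition of $\trianglerighteq$ is engineered precisely so that moving upward in the order leaves the top active-user index unchanged (keeping the power cost invariant) and so that arrivals can only push this index rightward, never leftward. Naive componentwise $\succeq$ fails to be preserved for this task, since a $\succeq$-larger state can have a strictly higher power cost through newly active users at worse channels; ruling this out is the whole purpose of $\trianglerighteq$, and verifying that the update preserves it is the step that must be handled with care.
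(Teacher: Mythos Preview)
Your proposal is correct and follows essentially the same RVIA-plus-induction route as the paper's Appendix~G: pick the minimizer for $\mathbf{Q}^2$, compare the resulting state-action costs term by term (delay monotone, fetching independent of $\mathbf{Q}$, power cost \emph{equal} because $\trianglerighteq$ fixes the top active user index), and invoke the induction hypothesis after noting that $\trianglerighteq$ is preserved by the queue update under a common action and arrivals. The paper merely asserts the preservation $\mathbf{Q}^{2\prime}\trianglerighteq\mathbf{Q}^{1\prime}$ and the equality $k^\ddag(\mathbf{Q}^2,u)=k^\ddag(\mathbf{Q}^1,u)$, whereas you supply the details; the only cosmetic difference is your RVIA normalization by $V_n(\mathbf{Q}^{\mathrm{ref}})$ rather than by $\min_u J_{n+1}(\mathbf{Q}^\S,u)$, which is immaterial since the subtracted term is common to both sides.
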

\begin{proof}
  Please see Appendix G.
\end{proof}

Then, based on Lemma \ref{lemma:propertyV2} and the special properties of multicasting in the nonuniform channel case, we have the following property of $J(\mathbf{Q},u)$.

\begin{lemma}[Partial Monotonicity of State-Action Cost Function]
In the nonuniform case, for any $u,v\in\mathcal{M}$, $v\neq u$ and $\mathbf{Q}+\mathbf{E}_{u,k}\trianglerighteq\mathbf{Q}$,  we have
\begin{equation}
  J(\mathbf{Q}+\mathbf{E}_{u,k},u)-J(\mathbf{Q}+\mathbf{E}_{u,k},v)\leq J(\mathbf{Q},u)-J(\mathbf{Q},v), \label{eqn:propertyJ2}
\end{equation}
where $\mathbf{E}_{u,k}$ denotes the $M\times K$ matrix with all entries 0 except for a 1 in its $(u,k)$-th entry.
\label{lemma:propertyJ2}
\end{lemma}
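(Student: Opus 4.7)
The plan is to mirror the argument for Lemma~\ref{lemma:propertyJ1}, replacing the pointwise partial order $\succeq$ by $\trianglerighteq$ and invoking Lemma~\ref{lemma:propertyV2} in place of Lemma~\ref{lemma:propertyV1}. First I would rewrite \eqref{eqn:propertyJ2} equivalently as
\[
J(\mathbf{Q}+\mathbf{E}_{u,k},u)-J(\mathbf{Q},u)\;\le\;J(\mathbf{Q}+\mathbf{E}_{u,k},v)-J(\mathbf{Q},v),
\]
and, via $J(\mathbf{Q},a)=g(\mathbf{Q},a)+\mathbb{E}[V(\mathbf{Q}'_{a})]$, split each side into a per-stage cost gap and a continuation-value gap. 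Here $\mathbf{Q}'_{a}$ and $\widetilde{\mathbf{Q}}'_{a}$ denote the next states obtained, under the \emph{same} arrival matrix $\mathbf{A}$, from $\mathbf{Q}$ and $\mathbf{Q}+\mathbf{E}_{u,k}$ respectively when action $a$ is taken; this natural coupling is what lets me compare the two continuation values sample-wise.

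For the per-stage costs, the delay component contributes $+1$ to both gaps and the fetching component depends only on the action. The hypothesis $\mathbf{Q}+\mathbf{E}_{u,k}\trianglerighteq\mathbf{Q}$ forces $k\le\max\{k':Q_{u,k'}>0\}$, so inserting one request at $(u,k)$ leaves the maximum-index user with an outstanding request for content $u$ unchanged; by \eqref{eqn:powerdefi} this yields $p(\mathbf{Q}+\mathbf{E}_{u,k},u)=p(\mathbf{Q},u)$. For action $v\ne u$, row $v$ of $\mathbf{Q}$ is untouched, so $p(\mathbf{Q}+\mathbf{E}_{u,k},v)=p(\mathbf{Q},v)$ trivially. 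Both per-stage cost gaps therefore equal $1$ and cancel, reducing the problem to a comparison of the continuation-value gaps.

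For the continuation values, under action $u$ row $u$ is emptied by \eqref{eqn:queue-he}, so the added unit at $(u,k)$ is served immediately and $\widetilde{\mathbf{Q}}'_{u}=\mathbf{Q}'_{u}$ pointwise in $\mathbf{A}$, producing a zero gap. Under action $v\ne u$, the two next states coincide everywhere except possibly at coordinate $(u,k)$, where, owing to the truncation $\min\{\cdot,N_{u,k}\}$, the discrepancy is either $0$ or $1$. I would then verify sample-wise that $\widetilde{\mathbf{Q}}'_{v}\trianglerighteq\mathbf{Q}'_{v}$: if $(\mathbf{Q}'_{v})_{u,k}>0$ the support condition of $\trianglerighteq$ is immediate, while otherwise $Q_{u,k}=A_{u,k}=0$, in which case the original hypothesis together with the fact that action $v$ does not deplete row $u$ gives $\max\{k':(\mathbf{Q}'_{v})_{u,k'}>0\}\ge\max\{k':Q_{u,k'}>0\}\ge k$. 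Lemma~\ref{lemma:propertyV2} then delivers $\mathbb{E}[V(\widetilde{\mathbf{Q}}'_{v})]\ge\mathbb{E}[V(\mathbf{Q}'_{v})]$, which is precisely the inequality between the two continuation-value gaps needed to finish the argument.

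The main obstacle I anticipate is the sample-wise verification of $\widetilde{\mathbf{Q}}'_{v}\trianglerighteq\mathbf{Q}'_{v}$: the asymmetric definition of $\trianglerighteq$ is sensitive to exactly which coordinates of row $u$ are zero, and the $\min\{\cdot,N_{u,k}\}$ truncation forces separate bookkeeping for the saturated and non-saturated regimes. Once this coupled monotonicity is in hand, the remainder is a direct nonuniform-case specialization of the uniform argument in Appendix~C, with the multicast ``row-emptying'' property of \eqref{eqn:queue-he} supplying the exact cancellation under action $u$.
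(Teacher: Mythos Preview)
Your proposal is correct and follows essentially the same route as the paper's Appendix~H: expand the four-term difference via \eqref{eqn:state_action_func}, observe that the per-stage cost contributions cancel because $\mathbf{Q}+\mathbf{E}_{u,k}\trianglerighteq\mathbf{Q}$ forces $k\le\max\{k':Q_{u,k'}>0\}$ (so the power term is unchanged under either action), note that the next states under action $u$ coincide since row $u$ is emptied, and show $\widetilde{\mathbf{Q}}'_{v}\trianglerighteq\mathbf{Q}'_{v}$ to invoke Lemma~\ref{lemma:propertyV2}. Your case analysis for the last step is in fact slightly more careful than the paper's, which simply asserts $Q^{2'}_{u,k}>0$; your treatment of the case $Q_{u,k}=A_{u,k}=0$ via $\max\{k':(\mathbf{Q}'_{v})_{u,k'}>0\}\ge\max\{k':Q_{u,k'}>0\}\ge k$ fills that in cleanly, and the truncation does not create any real difficulty beyond what you have already handled.
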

\begin{proof}
  Please see Appendix H.
\end{proof}

Lemma~\ref{lemma:propertyJ2} indicates that if it is better to multicast content $u$ than content $v$ for some state $\mathbf{Q}$ (i.e., $J(\mathbf{Q},u)\leq J(\mathbf{Q},v)$) and $\mathbf{Q}+\mathbf{E}_{u,k}\trianglerighteq\mathbf{Q}$, then it is also better to multicast content $u$ than $v$ for state $\mathbf{Q}+\mathbf{E}_{u,k}$ (i.e., $J(\mathbf{Q}+\mathbf{E}_{u,k},u)\leq J(\mathbf{Q}+\mathbf{E}_{u,k},v)$). Thus, we have the following theorem.

\begin{theorem}[Partial Switch Structure of Optimal Policy]
The optimal policy $\mu^*$ in the nonuniform case has a partial switch structure, i.e., for all $u\in\mathcal{M}$ and $k\in\mathcal{K}$, we have
\begin{equation}\label{eqn:switch2}
~~~~~~~~ ~~ \mu^*(\mathbf{Q})=u,~ \hbox{\parbox[t]{.25\textwidth}{if $Q_{u,k}\geq s_{u,k}(\mathbf{Q}_{-u,-k})$ and$~~~~$   ~~~~~~~~~~~~~~~~condition (a) or (b) holds,}}
\end{equation}
where condition (a) is $k<k^\dag(k,\mathbf{Q}_u)$, condition (b) is $k> k^\dag(k,\mathbf{Q}_u)$ and $s_{u,k}(\mathbf{Q}_{-u,-k})>0$, and the switch curve for content-user pair $(u,k)$ is given by
\begin{equation*}
s_{u,k}(\mathbf{Q}_{-u,-k})\triangleq\begin{cases}\min\mathcal{S}_{u,k}(\mathbf{Q}_{-u,-k}),  & \text{if}~\mathcal{S}_{u,k}(\mathbf{Q}_{-u,-k})\neq\emptyset \\
            \infty,  &\text{otherwise}
  \end{cases}
  \end{equation*}
with $\mathcal{S}_{u,k}(\mathbf{Q}_{-u,-k})\triangleq\{Q_{u,k}| J(\mathbf{Q},u)\leq J(\mathbf{Q},v)~\forall v\in\mathcal{M}, v\neq u\}$.
Here, $\mathbf{Q}_{-u,-k}\triangleq(Q_{m,i})_{m\in\mathcal{M}, i\in\mathcal{K}, (m,i)\neq (u,k)}$ denotes the request queue state matrix corresponding to all the other content-user pairs except the content-user pair $(u,k)$, and $k^\dag(k,\mathbf{Q}_u)\triangleq\max\{i|Q_{u,i}>0,i\neq k\}$.
\label{theorem:theorem2}
\end{theorem}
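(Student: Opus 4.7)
The plan is to mirror the inductive argument used for Theorem~\ref{theorem:theorem1}, but now to carefully track when the partial order $\trianglerighteq$ is preserved under the addition of $\mathbf{E}_{u,k}$, since only then does Lemma~\ref{lemma:propertyJ2} apply. My starting observation is that by the very definition of $s_{u,k}(\mathbf{Q}_{-u,-k})$, whenever $Q_{u,k}=s_{u,k}(\mathbf{Q}_{-u,-k})$ one has $J(\mathbf{Q},u)\leq J(\mathbf{Q},v)$ for every $v\neq u$, so $\mu^*(\mathbf{Q})=u$; this supplies the base case. The remaining task is to propagate $\mu^*(\mathbf{Q})=u$ to all larger values of $Q_{u,k}$ by induction, incrementing $Q_{u,k}$ by one at a time and appealing to Lemma~\ref{lemma:propertyJ2} at each step.

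Under condition~(a), $k<k^\dag(k,\mathbf{Q}_u)=\max\{i\,|\,Q_{u,i}>0,\,i\neq k\}$ guarantees the existence of some $i>k$ with $Q_{u,i}>0$, so $\max\{j\,|\,Q_{u,j}>0\}\geq i>k$; checking the definition of $\trianglerighteq$ then gives $\mathbf{Q}+\mathbf{E}_{u,k}\trianglerighteq\mathbf{Q}$. Lemma~\ref{lemma:propertyJ2} yields $J(\mathbf{Q}+\mathbf{E}_{u,k},u)-J(\mathbf{Q}+\mathbf{E}_{u,k},v)\leq J(\mathbf{Q},u)-J(\mathbf{Q},v)\leq 0$ for every $v\neq u$, keeping $u$ optimal in the new state; since $k^\dag(k,\mathbf{Q}_u)$ does not depend on $Q_{u,k}$, condition~(a) is automatically preserved and the induction continues. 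Under condition~(b), the hypotheses $Q_{u,k}\geq s_{u,k}(\mathbf{Q}_{-u,-k})>0$ place $k$ inside the support of $\mathbf{Q}_u$, so once again $\max\{j\,|\,Q_{u,j}>0\}\geq k$, the order $\trianglerighteq$ is preserved, and the same application of Lemma~\ref{lemma:propertyJ2} propagates optimality; incrementing $Q_{u,k}$ leaves it strictly positive and does not touch $k^\dag$, so condition~(b) also survives the step.

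The delicate point, and the reason the switch structure is only \emph{partial}, is precisely the configuration that clauses~(a) and~(b) conspire to exclude: if $k>k^\dag(k,\mathbf{Q}_u)$ and $Q_{u,k}=0$, then adding $\mathbf{E}_{u,k}$ would shift $\max\{j\,|\,Q_{u,j}>0\}$ from $k^\dag$ upward to $k$, strictly enlarging the set of users that must be served and causing the power cost $p(\mathbf{Q},u)$ to jump. In this regime $\mathbf{Q}+\mathbf{E}_{u,k}\trianglerighteq\mathbf{Q}$ fails because the definition of $\trianglerighteq$ demands equality beyond the current support, so Lemma~\ref{lemma:propertyJ2} is unavailable; requiring $s_{u,k}(\mathbf{Q}_{-u,-k})>0$ in clause~(b) is exactly what rules this transition out. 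I expect the main subtlety in writing the argument up to be this bookkeeping — verifying at every inductive step that the current state still lies in the regime where $\trianglerighteq$ is preserved, rather than blindly transplanting the scheme used for Theorem~\ref{theorem:theorem1} to the nonuniform setting.
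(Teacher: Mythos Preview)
Your proposal is correct and follows essentially the same approach as the paper's proof: start at the threshold $Q_{u,k}=s_{u,k}(\mathbf{Q}_{-u,-k})$, verify in each of the two cases that the hypothesis of Lemma~\ref{lemma:propertyJ2} (namely $\mathbf{Q}+\mathbf{E}_{u,k}\trianglerighteq\mathbf{Q}$) is satisfied, and propagate optimality of $u$ upward in $Q_{u,k}$. Your write-up is in fact slightly more careful than the paper's, in that you make the one-step-at-a-time induction explicit and check that conditions~(a) and~(b) persist under each increment, whereas the paper compresses this into a single application of Lemma~\ref{lemma:propertyJ2}.
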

\begin{proof}
  Please see Appendix I.
\end{proof}

\begin{remark}
From Theorem~\ref{theorem:theorem2}, we can see that, the structure of the optimal policy in the nonuniform case is very similar to the one in the uniform case, as illustrated in Fig.~\ref{fig:nonuniform}. The only difference is that, the structural property for $k> k^\dag(k,\mathbf{Q}_u)$ and $s(\mathbf{Q}_{-u,-k})=0$ depends on the specific channel asymmetry among the users and is still not known in general, as illustrated in the dashed box of Fig.~\ref{fig:asymStruc2D1}.
\textcolor{black}{Similar arguments on the tradeoff and the impacts of cache placement for the uniform case also hold for the nonuniform case.}
\end{remark}

\textcolor{black}{Similarly, note that, the partial switch structural property only relies on the partial monotonicity properties of $V(\mathbf{Q})$ and $J(\mathbf{Q},u)$.
These structural properties can also be used to reduce the computational complexity in obtaining the optimal policy for the nonuniform case, without
knowing the exact value of the switch curves.}
\textcolor{black}{Specifically,} from Theorem~~\ref{theorem:theorem2}, we know that, for all $\mathbf{Q}\in\bm{\mathcal{Q}}$,
\begin{equation}\label{eqn:indicate_nonuni}
  \mu^*(\mathbf{Q})=u~\text{and}~\mathbf{Q}+\mathbf{E}_{u,k}\trianglerighteq\mathbf{Q}~\Rightarrow~\mu^*(\mathbf{Q}+\mathbf{E}_{u,k})=u.
\end{equation}
Therefore, computing the optimal policy $\mu^*$ requires conducting the minimization in the R.H.S. of \eqref{eqn:mu} for some $\mathbf{Q}$ only (instead of all $\mathbf{Q}$), which significantly reduces the computational complexity.  Based on \textcolor{black}{\eqref{eqn:indicate_nonuni}}, we shall design low complexity optimal algorithms in Section VI.

\section{Low Complexity Optimal Algorithms}\label{sec:optimal_alg}
In this section, we propose two low complexity optimal algorithms for both the uniform and nonuniform cases, by exploiting the structural properties of the optimal policy in Theorems~\ref{theorem:theorem1} and \ref{theorem:theorem2}.
\subsection{Structured Relative Value Iteration Algorithm}
The optimal policy $\mu^*$ in \eqref{eqn:mu} can be computed using RVIA, which is a commonly used numerical method for solving infinite horizon average cost MDPs based on the Bellman equation\cite[Chapter 4.3]{bertsekas}  and is detailed in Appendix B.  We first summarize the standard RVIA for computing $\mu^*$ in Algorithm~\ref{alg:RVIA}.
It is shown in \cite[Proposition 4.3.2]{bertsekas} that under Algorithm~\ref{alg:RVIA}, for any $\{V_0(\mathbf{Q})\}$, we have $V_n(\mathbf{Q})\to V(\mathbf{Q})$ for all $\mathbf{Q}\in\bm{\mathcal{Q}}$, as $n\to\infty$, where $\{V(\mathbf{Q}\})$ satisfies the Bellman equation in \eqref{eqn:bellman}. Given $\{V(\mathbf{Q}\})$, we can obtain the optimal policy $\mu^*$ by \eqref{eqn:mu}.
Note that, in Step~\ref{code:rvia_eva} of the standard RVIA in Algorithm~\ref{alg:RVIA}, a brute-force minimization over $M$ actions needs to be computed for each $\mathbf{Q}\in\bm{\mathcal{Q}}$, which can be computationally expensive when $M$ is large.

By exploiting the structural properties of the optimal policy in Theorems~\ref{theorem:theorem1} and \ref{theorem:theorem2}, we modify \eqref{eqn:valueupdate} in Step~\ref{code:rvia_eva} (value update) of Algorithm~\ref{alg:RVIA} to reduce computational complexity. The modified step is given by Algorithm~\ref{alg:SRVIA} (structured value update).
Replacing \eqref{eqn:valueupdate} in Step~\ref{code:rvia_eva} of Algorithm~\ref{alg:RVIA} with Algorithm~\ref{alg:SRVIA}, we obtain a low complexity modified RVIA, which is referred to as the structured relative value iteration algorithm (SRVIA). From the proofs of Theorems~\ref{theorem:theorem1} and \ref{theorem:theorem2}, we can easily see that under SRVIA, for any $\{V_0(\mathbf{Q})\}$, $V_n(\mathbf{Q})\to V(\mathbf{Q})$ for  all $\mathbf{Q}\in\bm{\mathcal{Q}}$, as $n\to\infty$.
Similarly, given $V(\mathbf{Q})$, we can obtain $\mu^*$ in \eqref{eqn:mu}.
In other words, SRVIA is an optimal algorithm.

\begin{algorithm}[!t]
\caption{Relative Value Iteration Algorithm}
\label{alg:RVIA}
\begin{algorithmic}[1]
\State Set $V_0(\mathbf{Q})=0$ for all $\mathbf{Q}\in\bm{\mathcal{Q}}$, select reference state $\mathbf{Q}^\S$ and set $n=0$.
\State (Value Update) For each state $\mathbf{Q}\in\bm{\mathcal{Q}}$, compute $V_{n+1}(\mathbf{Q})$:\label{code:rvia_eva}
  \begin{equation}\label{eqn:valueupdate}
    V_{n+1}(\mathbf{Q})=\min_{u\in\mathcal{M}}\left\{g(\mathbf{Q},u)+\mathbb{E}\left[V_n(\mathbf{Q}')\right]\right\},
  \end{equation}
  \Statex where $\mathbf{Q}'$ is defined in Lemma~\ref{lemma:bellman}.
\State For each state $\mathbf{Q}\in\bm{\mathcal{Q}}$, normalize $V_{n+1}(\mathbf{Q})$:\label{code:rvia_normalize}
  \begin{equation*}
    V_{n+1}(\mathbf{Q})\leftarrow V_{n+1}(\mathbf{Q})-V_{n+1}(\mathbf{Q}^\S).
  \end{equation*}
\State Go to Step~\ref{code:rvia_eva}.
\end{algorithmic}
\end{algorithm}

\begin{algorithm}[!t]
\caption{Structured Value Update}
\label{alg:SRVIA}
\begin{algorithmic}[1]
\If{$\exists u\in\mathcal{M}$, such that $\mu^*_n(\mathbf{Q}-\mathbf{e}_u)=u$ (uniform case),\Statex or
$\exists u\in\mathcal{M}$ and $k\in\mathcal{K}$, such that $\mu^*_n(\mathbf{Q}-\mathbf{E}_{u,k})=u$ and \Statex$\mathbf{Q}\trianglerighteq\mathbf{Q}-\mathbf{E}_{u,k}$ (nonuniform case),}
	    \begin{align*}
          &\mu^*_n(\mathbf{Q})=u,\\
          &V_{n+1}(\mathbf{Q})=g(\mathbf{Q},u)+\mathbb{E}\left[V_n(\mathbf{Q}')\right].
        \end{align*}
\Else
     \begin{align*}
          &V_{n+1}(\mathbf{Q})=\min_{u\in\mathcal{M}}\left\{g(\mathbf{Q},u)+\mathbb{E}\left[V_n(\mathbf{Q}')\right]\right\},\\
          &\mu^*_n(\mathbf{Q})=\arg\min_{u\in\mathcal{M}}\left\{g(\mathbf{Q},u)+\mathbb{E}\left[V_n(\mathbf{Q}')\right]\right\}.
        \end{align*}
\EndIf
\end{algorithmic}
\end{algorithm}

\subsection{Structured Policy Iteration Algorithm}
\begin{algorithm}[!t]
\caption{Policy Iteration Algorithm}
\label{alg:PIA}
\begin{algorithmic}[1]
\State Set $\mu^*_0(\mathbf{Q})=1$ for all $\mathbf{Q}\in\bm{\mathcal{Q}}$, select reference state $\mathbf{Q}^\S$, and set $n=0$.
\State (Policy Evaluation) Given policy $\mu_n^*$, compute the corresponding average cost $\theta_n$ and value function $V_n(\mathbf{Q})$ from the linear system of equations\footnotemark
\begin{equation}
\begin{cases}
     \theta_n+V_n(\mathbf{Q})=g(\mathbf{Q},\mu_n^*(\mathbf{Q}))+\mathbb{E}\left[V_n(\mathbf{Q}')\right],~\forall \mathbf{Q}\in\bm{\mathcal{Q}}\\
  V_n(\mathbf{Q}^\S)=0
\end{cases}
\label{eqn:pia_eva}
\end{equation}
\Statex where $\mathbf{Q}'$ is defined in Lemma~\ref{lemma:bellman}.\label{code:pia_eva}
\State (Policy Update) Obtain a new policy $\mu_{n+1}^*$, where for each $\mathbf{Q}\in\bm{\mathcal{Q}}$, $\mu_{n+1}^*(\mathbf{Q})$ is such that:
    \begin{equation}
      \mu_{n+1}^*(\mathbf{Q})=\arg\min_{u\in\mathcal{M}}\left\{g(\mathbf{Q},u)+\mathbb{E}\left[V_n(\mathbf{Q}')\right]\right\}.\label{eqn:pia_imp}
    \end{equation}\label{code:pia_imp}
\State Go to Step \ref{code:pia_eva} until $\mu_{n+1}^*=\mu_{n}^*$.
\end{algorithmic}
\end{algorithm}
\footnotetext{The solution to \eqref{eqn:pia_eva} can be obtained directly using Gaussian elimination or iteratively using the relative value iteration method\cite{bertsekas}.}
\begin{algorithm}[!t]
\caption{Structured Policy Update}
\label{alg:SPIA}
\begin{algorithmic}[1]
\If{$\exists u\in\mathcal{M}$, such that $\mu_{n+1}^*(\mathbf{Q}-\mathbf{e}_u)=u$ (uniform case),\Statex or $\exists u\in\mathcal{M}$ and $k\in\mathcal{K}$, such that $\mu_{n+1}^*(\mathbf{Q}-\mathbf{E}_{u,k})=u$ and\Statex$\mathbf{Q}\trianglerighteq\mathbf{Q}-\mathbf{E}_{u,k}$ (nonuniform case),}
        \begin{equation*}
          \mu_{n+1}^*(\mathbf{Q})=u.
        \end{equation*}
\Else
        \begin{equation*}
            \mu_{n+1}^*(\mathbf{Q})=\arg\min_{u\in\mathcal{M}}\left\{g(\mathbf{Q},u)+\mathbb{E}\left[V_n(\mathbf{Q}')\right]\right\}.
        \end{equation*}
\EndIf			
\end{algorithmic}
\end{algorithm}
The optimal policy $\mu^*$ in \eqref{eqn:mu} can also be computed using the policy iteration algorithm (PIA), which is another widely used method for solving infinite horizon average cost MDPs\cite[Chapter 8.6]{puterman}.
We summarize PIA for computing $\mu^*$ in Algorithm~\ref{alg:PIA}.
According to \cite[Theorem 8.6.6]{puterman}, Algorithm~\ref{alg:PIA} converges in a finite number of iterations to the optimal policy $\mu^*$ in \eqref{eqn:mu}.
In other words, there exists a finite $\bar{n}$ such that $\mu_n^*=\mu^*$ for all $n\geq\bar{n}$.
Note that, in Step~\ref{code:pia_imp} of the standard PIA in Algorithm~\ref{alg:PIA}, a brute-force minimization over $M$ actions needs to be computed for each $\mathbf{Q}\in\bm{\mathcal{Q}}$, which can be computationally complex when $M$ is large.

By exploiting the structural properties of the optimal policy in Theorems~\ref{theorem:theorem1} and \ref{theorem:theorem2}, we modify \eqref{eqn:pia_imp} in Step~\ref{code:pia_imp} (policy update) of Algorithm~\ref{alg:PIA} to reduce its computational complexity.
The modified step is given by Algorithm~\ref{alg:SPIA} (structured policy update).
Replacing \eqref{eqn:pia_imp} in Step~\ref{code:pia_imp} of Algorithm~\ref{alg:PIA} with Algorithm~\ref{alg:SPIA}, we obtain a low complexity PIA, which is referred to as the structured policy iteration algorithm (SPIA). From \cite[Chapter 8.11.2]{puterman}, we can see that SPIA also converges in a finite number of iterations to the optimal policy $\mu^*$ in \eqref{eqn:mu} and hence is an optimal algorithm.

\begin{answer}
\subsection{Complexity Comparison}

We compare the computational complexity of the proposed structured optimal algorithms (SRVIA and SPIA) with the standard optimal algorithms (RVIA and PIA) for each iteration, as illustrated in Table~\ref{table:theoretic_complexity}. Specifically, in the structured value update step (Algorithm~\ref{alg:SRVIA}) and the structured policy update step (Algorithm~\ref{alg:SPIA}), if the condition is satisfied for a certain queue state, then we do not need to perform the corresponding minimization over $M$ actions. This leads to a computational saving of $O(M|\bm{\mathcal{Q}}|)$ \cite{complexity}. There are in total $|\bm{\mathcal{Q}}|$ states. Thus, for each iteration, the computational complexity reduction of the structured value/policy update is  $O(M|\bm{\mathcal{Q}}|^2)$. From Table~\ref{table:theoretic_complexity}, we can see that, although the proposed structured optimal algorithms suffer from the exponential growth of the state space, the computational complexity reduction also grows exponentially with the state space.
Therefore, the computational complexity reduction of the proposed structured optimal algorithms is remarkable, considering that the optimality is not sacrificed.
\begin{table}[!thbp]
\color{black}
\begin{adjustbox}{max width=0.49\textwidth}
\begin{tabular}{|c|c|c|c|}
\hline
& \tabincell{c}{Complexity of \\Standard Algs.}  & \tabincell{c}{Complexity of Proposed\\Structured Algs.}  & \tabincell{c}{Complexity \\Reduction}\\
\hline
Value Update & $O(M|\bm{\mathcal{Q}}|^2)$ & $O(M|\bm{\mathcal{Q}}|^2)$ & $O(M|\bm{\mathcal{Q}}|^2)$ \\
\hline
Policy Update & $O(M|\bm{\mathcal{Q}}|^2)$ & $O(M|\bm{\mathcal{Q}}|^2)$ & $O(M|\bm{\mathcal{Q}}|^2)$ \\
\hline
\end{tabular}
\end{adjustbox}
\centering
\caption{\textcolor{black}{Complexity comparison between the proposed and standard optimal algorithms in each iteration.}}
\label{table:theoretic_complexity}
\end{table}

Note that the two proposed low-complexity optimal algorithms still suffer from the curse of dimensionality, i.e., the exponential dependence of the state space \cite{bertsekas}.
This curse of dimensionality comes from the complex coupling structure of the request queue model, and is embedded in the optimal control design for the considered MDP. To the best of our knowledge, unless for very special cases, it is not possible to break the curse of dimensionality without any loss of optimality.
\end{answer}
\section{Low Complexity Suboptimal Solution}\label{sec:suboptimal_alg}
\textcolor{black}{To further reduce the complexity of the proposed structured optimal algorithms and relieve the curse of dimensionality, we would like to develop low-complexity suboptimal solutions.
Note that the structural properties of the optimal policy may be one key factor that leads to good performance. Thus, in this section, we focus on the design of suboptimal solutions which can maintain the switch structures. Specifically,}
based on a randomized base policy, we first propose a low complexity suboptimal deterministic policy using approximate dynamic programming\cite{bertsekas}, which has better performance than the randomized base policy and possesses \textcolor{black}{similar structural properties to} the optimal policy. Then, based on these structural properties, we develop a low complexity structured algorithm to compute the proposed policy.
Note that, with abuse of notation, in this section, we also use $Q_m$ and $\mathcal{Q}_m$ to represent $\mathbf{Q}_m$ and $\bm{\mathcal{Q}}_m$ in the nonuniform case.

\subsection{Low Complexity Suboptimal Policy}
\textcolor{black}{The switch structural properties of the optimal policy stem from the monotonicity property of the value function.
Therefore, to maintain the switch structures in designing a suboptimal solution, we consider a value function decomposition method that can preserve the structural properties of the value function. Based on this decomposition, we propose a low complexity suboptimal deterministic policy, which will be shown to possess similar structural properties to the optimal policy.}
We first introduce a randomized base policy.
\begin{definition}[Randomized Base Policy]\label{definition:definition2}
A randomized base policy for the multicast scheduling control $\hat{\mu}$ is given by a distribution on the multicast scheduling action space $\mathcal{M}$.
\end{definition}

We restrict our attention to randomized unichain base policies.
Denote $\hat{\theta}$ and $\{\hat{V}(\mathbf{Q})\}$ as the average cost and the value function under a randomized unichain based  policy $\hat{\mu}$, respectively.
By \cite[Proposition 4.2.2]{bertsekas} and following the proof of Lemma~\ref{lemma:bellman}, there exists  $(\hat{\theta},\{\hat{V}(\mathbf{Q})\})$ satisfying:
  \begin{align}
    &\hat{\theta}+\hat{V}(\mathbf{Q})=\mathbb{E}^{\hat{\mu}}\left[g(\mathbf{Q},u)\right]\nonumber\\
&\hspace{13mm}+\sum_{\mathbf{Q}'\in\bm{\mathcal{Q}}}\mathbb{E}^{\hat{\mu}}\left[\Pr[\mathbf{Q}'|\mathbf{Q},u]\right]\hat{V}(\mathbf{Q}'),~\forall \mathbf{Q}\in\bm{\mathcal{Q}},\label{eqn:randombellman}
  \end{align}
where $g(\mathbf{Q},u)$ and $\Pr[\mathbf{Q}'|\mathbf{Q},u]$ are given in \eqref{eqn:cost} and \eqref{eqn:tranb}, respectively.	
In the following lemma, we show that $\hat{V}(\mathbf{Q})$ has a separable structure.
\begin{lemma}[Separable Structure of $\hat{V}(\mathbf{Q})$]
  Given any randomized unichain base policy $\hat{\mu}$, the value function $\{\hat{V}(\mathbf{Q})\}$ in \eqref{eqn:randombellman} can be expressed as $\hat{V}(\mathbf{Q})=\sum_{m\in\mathcal{M}}\hat{V}_m(Q_m)$, where $\{\hat{V}_m(Q_m)\}$ satisfies:
  \begin{align}
    &\hat{\theta}_m+\hat{V}_m(Q_m)=\mathbb{E}^{\hat{\mu}}\left[g_m(Q_m,u)\right]\nonumber\\
    &+\sum_{Q_m'\in\mathcal{Q}_m}\mathbb{E}^{\hat{\mu}}\left[\Pr[Q_m'|Q_m,u]\right]\hat{V}_m(Q_m'),~\forall Q_m\in\mathcal{Q}_m,
\label{eqn:perrandombellman}
  \end{align}
for all $m\in\mathcal{M}$.  Here, $\hat{\theta}_m$ and $\hat{V}_m(Q_m)$ denote the per-content average cost and value function under $\hat{\mu}$, respectively,   $g_m(Q_m,u)\triangleq Q_m+w_ff(u)+w_p\mathbf{1}(u=m)p(m)$ in the uniform case, $g_m(Q_m,u)\triangleq \sum_{k\in\mathcal{K}}Q_{m,k}+w_ff(u)+w_p\mathbf{1}(u=m)p(m,k^\ddag(Q_m,u))$ with $k^\ddag(Q_m,u)\triangleq\max\{k|Q_{m,k}>0\}$ in the nonuniform case, and $\Pr[Q_m'|Q_m,u]\triangleq\Pr[Q_m(t+1)=Q_m'|Q_m(t)=Q_m,u(t)=u]$.
\label{lemma:decomposition}
\end{lemma}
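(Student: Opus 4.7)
The plan is to verify the claimed decomposition directly: define $\hat V_m(Q_m)$ as the solution of the per-content Poisson equation in \eqref{eqn:perrandombellman}, and then show that $\sum_m \hat V_m(Q_m)$ satisfies the full Poisson equation \eqref{eqn:randombellman}. Uniqueness (up to an additive constant) of the solution of the average-cost Bellman equation for the unichain Markov chain induced by $\hat\mu$ will then force equality between $\hat V(\mathbf Q)$ and $\sum_m \hat V_m(Q_m)$.

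The two ingredients I would assemble first are a cost decomposition and a transition decomposition. For the cost, a direct computation shows $\sum_{m\in\mathcal M} g_m(Q_m,u) = g(\mathbf Q,u) + c(u)$, where $c(u)$ is a deterministic function of $u$ alone (absorbing the $M-1$ redundant copies of $w_f f(u)$ introduced by the per-content definition). Taking $\mathbb E^{\hat\mu}$ of this identity shows that $\mathbb E^{\hat\mu}[g(\mathbf Q,u)] = \sum_m \mathbb E^{\hat\mu}[g_m(Q_m,u)] - \bar c$, with $\bar c \triangleq \mathbb E^{\hat\mu}[c(u)]$ a constant independent of $\mathbf Q$. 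For the transitions, since $\{A_{m,k}(t)\}$ are independent across $m$ and the update \eqref{eqn:queue-ho} (resp.\ \eqref{eqn:queue-he}) of $Q_m$ (resp.\ $\mathbf Q_m$) depends only on $(Q_m,u,A_m)$ (resp.\ $(\mathbf Q_m,u,\mathbf A_m)$), the one-step kernel under $\hat\mu$ factorizes as
\begin{equation*}
\mathbb E^{\hat\mu}\bigl[\Pr[\mathbf Q'|\mathbf Q,u]\bigr] = \sum_{u\in\mathcal M}\hat\mu(u)\prod_{m\in\mathcal M}\Pr[Q_m'|Q_m,u].
\end{equation*}

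With these two facts, the verification is short. Plug $W(\mathbf Q) \triangleq \sum_m \hat V_m(Q_m)$ into the right-hand side of \eqref{eqn:randombellman}. The transition factorization and the marginalization identity $\sum_{Q_{m'}'}\Pr[Q_{m'}'|Q_{m'},u]=1$ for $m'\neq m$ collapse the $|\bm{\mathcal Q}|$-fold sum to
\begin{equation*}
\sum_{\mathbf Q'}\mathbb E^{\hat\mu}\bigl[\Pr[\mathbf Q'|\mathbf Q,u]\bigr]\,W(\mathbf Q') = \sum_{m}\sum_{Q_m'\in\mathcal Q_m}\mathbb E^{\hat\mu}\bigl[\Pr[Q_m'|Q_m,u]\bigr]\hat V_m(Q_m').
\end{equation*}
Combining with the cost decomposition and using \eqref{eqn:perrandombellman} for each $m$, the right-hand side of \eqref{eqn:randombellman} becomes $\sum_m(\hat\theta_m+\hat V_m(Q_m)) - \bar c = \bigl(\sum_m\hat\theta_m-\bar c\bigr) + W(\mathbf Q)$. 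Thus the pair $\bigl(\sum_m\hat\theta_m-\bar c,\,W(\mathbf Q)\bigr)$ solves \eqref{eqn:randombellman}.

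To conclude, I would invoke the standard uniqueness result for the Poisson equation of a unichain finite Markov chain (e.g., \cite[Prop.~4.2.2]{bertsekas}): its solution is unique up to an additive constant in $V$, and the scalar part is the true average cost. Hence $\sum_m\hat\theta_m - \bar c = \hat\theta$ and $W(\mathbf Q) = \hat V(\mathbf Q) + \text{const}$; fixing the constant (e.g., by a common reference state) yields $\hat V(\mathbf Q)=\sum_m\hat V_m(Q_m)$. The main obstacle I anticipate is bookkeeping rather than conceptual: keeping the constant offset $\bar c$ from the over-counted $w_f f(u)$ terms straight, and in the nonuniform case verifying that the power-cost term $w_p\mathbf 1(u=m)p(m,k^\ddag(\mathbf Q_m,u))$ truly depends on $\mathbf Q$ only through $\mathbf Q_m$, so that $g_m$ remains a function of $(Q_m,u)$ alone and the separability argument goes through unchanged.
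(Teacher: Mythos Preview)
Your proposal is correct and follows essentially the same verification route as the paper's proof in Appendix~J: decompose the per-stage cost and the transition kernel content-wise, then substitute $\sum_m\hat V_m(Q_m)$ into \eqref{eqn:randombellman} and check that equality holds. You are in fact more careful than the paper, which simply asserts $g(\mathbf Q,u)=\sum_m g_m(Q_m,u)$ without flagging the $(M-1)w_ff(u)$ over-count that you correctly absorb into the constant $\bar c$ (this is harmless since it only shifts $\hat\theta$, not $\hat V$).
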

\begin{proof}
  Please see Appendix J.
\end{proof}

To alleviate the curse of dimensionality, we approximate the value function $V(\mathbf{Q})$ in \eqref{eqn:mu} by $\hat{V}(\mathbf{Q})$, i.e.,
\begin{equation}
  V(\mathbf{Q})\thickapprox\hat{V}(\mathbf{Q})=\sum_{m\in\mathcal{M}}\hat{V}_m(Q_m),\label{eqn:approx}
\end{equation}
where $\{\hat{V}_m(Q_m)\}$ is given by the per-content fixed point equation in \eqref{eqn:perrandombellman}.
Then, we obtain the following low complexity deterministic policy $\hat{\mu^*}$:
\begin{align}
  &\hat{\mu}^*(\mathbf{Q})=\arg\min_{u\in\mathcal{M}}\left\{g(\mathbf{Q},u)+\sum_{m\in\mathcal{M}}\mathbb{E}\left[\hat{V}_m(Q_m'))\right]\right\},\nonumber\\
  &\hspace{65mm}\forall \mathbf{Q}\in\bm{\mathcal{Q}}.\label{eqn:hatmu}
\end{align}

\begin{remark}
To obtain $\hat{\mu}^*$ in \eqref{eqn:hatmu},  we only need to compute $\{\hat{V}_m(Q_m)\}$ (a total of $O(\sum_{m}|\mathcal{Q}_m|)$ values)  via solving \eqref{eqn:perrandombellman} for all $m$. The computational complexity is much lower than computing $\{V(\mathbf{Q})\}$ (a total of $O(\Pi_{m}|\mathcal{Q}_m|)$ values) via solving \eqref{eqn:bellman} in obtaining $\mu^*$ in \eqref{eqn:mu}.
\label{remark:sub}
\end{remark}

\begin{remark}
\textcolor{black}{
Note that, the value function decomposition method adopted here is different from most existing approximate approaches \cite{powell2007approximate,factoredMDP}. Our approach does not rely on  choices of specific basis functions. Moreover, our approach can maintain similar switch structural properties to the optimal policy, which will be shown in Theorem~\ref{theorem:strhatmu}.
}
\end{remark}

\subsection{Properties of Suboptimal Policy}
\subsubsection{Performance comparison}
The proposed deterministic policy $\hat{\mu}^*$ always achieves better performance than the randomized unichain base policy $\hat{\mu}$, as summarized in the following theorem.
\begin{theorem}[Performance Improvement]
  If $\Pr[\mathbf{Q}'|\mathbf{Q},u]\neq\Pr[\mathbf{Q}'|\mathbf{Q},u']$ for any $u\neq u'$ and $\mathbf{Q}\in\bm{\mathcal{Q}}$, then we have $\hat{\theta}^*(\mathbf{Q})\textcolor{black}{<}\hat{\theta}$ for all $\mathbf{Q}\in\bm{\mathcal{Q}}$, where $\hat{\theta}^*(\mathbf{Q})$ is the average system cost under the proposed solution starting from $\mathbf{Q}$ and $\hat{\theta}$ is the average system cost under any randomized base policy, respectively.
\end{theorem}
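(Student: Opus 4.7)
My plan is to recognize this as a classical policy improvement result from approximate dynamic programming, tailored to the separable structure established in Lemma~\ref{lemma:decomposition}. First, I would combine Lemma~\ref{lemma:decomposition} with definition \eqref{eqn:hatmu} to rewrite the greedy selection as
\begin{equation*}
\hat{\mu}^*(\mathbf{Q})=\arg\min_{u\in\mathcal{M}}\bigl\{g(\mathbf{Q},u)+\textstyle\sum_{\mathbf{Q}'}\Pr[\mathbf{Q}'|\mathbf{Q},u]\hat{V}(\mathbf{Q}')\bigr\},
\end{equation*}
since the separable sum $\sum_m\mathbb{E}[\hat{V}_m(Q'_m)]$ collapses to $\mathbb{E}[\hat{V}(\mathbf{Q}')]$. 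Then compare this pointwise minimum with the right-hand side of \eqref{eqn:randombellman}, which is the \emph{expectation} of the same quantity under the randomized action drawn from $\hat{\mu}$. This yields a ``Bellman inequality'' for $\hat{\mu}^*$ relative to $\hat{V}$:
\begin{equation*}
g(\mathbf{Q},\hat{\mu}^*(\mathbf{Q}))+\textstyle\sum_{\mathbf{Q}'}\Pr[\mathbf{Q}'|\mathbf{Q},\hat{\mu}^*(\mathbf{Q})]\hat{V}(\mathbf{Q}')\;\le\;\hat{\theta}+\hat{V}(\mathbf{Q}),\quad\forall\mathbf{Q}\in\bm{\mathcal{Q}}.
\end{equation*}

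Next, I would telescope this inequality along the Markov chain induced by $\hat{\mu}^*$ starting from $\mathbf{Q}(1)=\mathbf{Q}$: rearranging gives $g(\mathbf{Q}(t),\hat{\mu}^*(\mathbf{Q}(t)))\le\hat{\theta}+\hat{V}(\mathbf{Q}(t))-\mathbb{E}[\hat{V}(\mathbf{Q}(t+1))\mid\mathbf{Q}(t)]$, so taking expectations, summing over $t=1,\ldots,T$, and using the tower property produces
\begin{equation*}
\textstyle\sum_{t=1}^{T}\mathbb{E}^{\hat{\mu}^*}\!\bigl[g(\mathbf{Q}(t),\hat{\mu}^*(\mathbf{Q}(t)))\,\big|\,\mathbf{Q}(1)=\mathbf{Q}\bigr]\;\le\;T\hat{\theta}+\hat{V}(\mathbf{Q})-\mathbb{E}^{\hat{\mu}^*}\!\bigl[\hat{V}(\mathbf{Q}(T{+}1))\,\big|\,\mathbf{Q}(1)=\mathbf{Q}\bigr].
\end{equation*}
Since $\bm{\mathcal{Q}}$ is finite, $\hat{V}$ is bounded, so dividing by $T$ and letting $T\to\infty$ kills the boundary terms and delivers $\hat{\theta}^*(\mathbf{Q})\le\hat{\theta}$. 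Verifying that $\hat{\mu}^*$ is itself unichain (so that $\hat{\theta}^*(\mathbf{Q})$ is well-defined as an average cost) is a short check using the queue dynamics in \eqref{eqn:queue-ho}/\eqref{eqn:queue-he}, since the all-zero state is reachable from any state under any deterministic policy.

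To upgrade to the strict inequality, I would argue as follows. Because $\hat{\mu}$ is a genuinely \emph{randomized} base policy while the minimum is achieved at a single deterministic action, the Bellman inequality above is strict at any $\mathbf{Q}$ where $\hat{\mu}$ places positive mass on two actions $u\neq u'$ that yield distinct values of $g(\mathbf{Q},\cdot)+\sum_{\mathbf{Q}'}\Pr[\mathbf{Q}'|\mathbf{Q},\cdot]\hat{V}(\mathbf{Q}')$. The hypothesis $\Pr[\mathbf{Q}'|\mathbf{Q},u]\neq\Pr[\mathbf{Q}'|\mathbf{Q},u']$ for $u\neq u'$ together with the non-degeneracy of $\hat{V}$ on the recurrent class (inherited from the monotonicity of $\hat{V}_m$ via Lemma~\ref{lemma:decomposition}) rules out the degenerate case where all actions tie. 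Hence the slack is strictly positive at some recurrent state of $\hat{\mu}^*$, and because $\hat{\mu}^*$ is unichain the long-run fraction of time spent at that state is positive, so the telescoped inequality becomes strict in the limit, giving $\hat{\theta}^*(\mathbf{Q})<\hat{\theta}$ for every initial $\mathbf{Q}$.

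The main obstacle I anticipate is exactly this strict-inequality step: one must justify rigorously that the slack in the Bellman inequality is \emph{not} identically zero on the recurrent class of $\hat{\mu}^*$. A clean way is by contradiction — if equality held throughout the recurrent class, then at each such $\mathbf{Q}$ either $\hat{\mu}$ would already be a Dirac mass on the greedy action (contradicting ``randomized'') or two actions with positive mass under $\hat{\mu}$ would have to produce identical values of $g(\mathbf{Q},u)+\sum_{\mathbf{Q}'}\Pr[\mathbf{Q}'|\mathbf{Q},u]\hat{V}(\mathbf{Q}')$, which combined with the distinct-transition hypothesis forces $\hat{V}$ to be constant on the reachable successor states, contradicting the strict monotonicity established earlier. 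The rest of the argument is standard policy improvement machinery.
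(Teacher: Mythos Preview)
The paper does not actually prove this theorem: its entire proof reads ``This result follows directly from \cite{harvest}.'' Your proposal is the standard one-step policy-improvement argument (Bellman inequality for the greedy policy relative to $\hat{V}$, then telescope along the $\hat{\mu}^*$-chain), which is almost certainly what the cited reference contains, so in substance you are reconstructing the deferred proof rather than offering an alternative route.

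Your weak-inequality part is correct and cleanly stated. The strict-inequality step you flag is indeed the delicate one, and your contradiction sketch has a real gap: from ``two actions $u\neq u'$ with distinct transition kernels yield identical values of $g(\mathbf{Q},\cdot)+\sum_{\mathbf{Q}'}\Pr[\mathbf{Q}'|\mathbf{Q},\cdot]\hat{V}(\mathbf{Q}')$'' you cannot conclude that $\hat{V}$ is constant on the reachable successors---two distinct probability vectors can have equal inner product with a non-constant vector. To close this you would need to exploit more structure, e.g.\ that scheduling $u$ versus $u'$ empties different queues so the successor supports differ in a coordinate where the strict monotonicity of $\hat{V}_m$ (from \eqref{eqn:propertyvm}) actually bites, together with the fact that $g_m(Q_m,u)$ differs from $g_m(Q_m,u')$ through the power/fetching terms. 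Also note that Definition~\ref{definition:definition2} does not explicitly exclude Dirac measures, so your assumption that $\hat{\mu}$ places mass on at least two actions is an added (reasonable) hypothesis that should be made explicit.
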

\begin{proof}
  This result follows directly from \cite{harvest}.
\end{proof}
\subsubsection{Structural properties}
We first introduce the state-action cost function for $\hat{\mu}^*$:
\begin{equation}
  \hat{J}(\mathbf{Q},u)\triangleq g(\mathbf{Q},u)+\sum_{m\in\mathcal{M}}\mathbb{E}\left[\hat{V}_m(Q_m'))\right].\label{eqn:sub_state_action_func}
\end{equation}
Note that $\hat{J}(\mathbf{Q},u)$ is related to the R.H.S. of \eqref{eqn:hatmu}.
Along the lines of the structural analysis for the optimal policy in Sections IV and V, we can show that the proposed deterministic low complexity suboptimal policy has \textcolor{black}{similar} structural properties to those in Theorems~\ref{theorem:theorem1} and \ref{theorem:theorem2}.
\textcolor{black}{This similarity may be one key reason for the good performance of the proposed suboptimal solution, which will be shown in the numerical section.}

\begin{theorem}[Structural Properties of Suboptimal Policy $\hat{\mu}^*$]
    Under any randomized base unichain policy $\hat{\mu}$, the structural properties of the corresponding deterministic policy $\hat{\mu}^*$ are as follows.

1) In the uniform case, $\hat{\mu}^*$ has a switch structure, i.e., for all $m\in\mathcal{M}$, we have
\begin{equation}\label{eqn:subswitch}
  \hat{\mu}^*(\mathbf{Q})=u, \text{if}~Q_u\geq \hat{s}_u(\mathbf{Q}_{-u}),
\end{equation}
where the switch curve for content $u$ is given by
\begin{equation*}
\hat{s}_u(\mathbf{Q}_{-u})\triangleq\begin{cases}\min\hat{\mathcal{S}}_u(\mathbf{Q}_{-u}),  & \text{if}~\hat{\mathcal{S}}_u(\mathbf{Q}_{-u})\neq\emptyset \\
            \infty,  &\text{otherwise}
  \end{cases}
  \end{equation*}
with $\hat{\mathcal{S}}_u(\mathbf{Q}_{-u})\triangleq\{Q_u| \hat{J}(\mathbf{Q},u)\leq \hat{J}(\mathbf{Q},v)~\forall v, v\neq u\}$. Here, $\mathbf{Q}_{-u}$ is defined in Theorem~\ref{theorem:theorem1}.

2) In the nonuniform case, $\hat{\mu}^*$ has a partial switch structure, i.e., for all $u\in\mathcal{M}$ and $k\in\mathcal{K}$, we have
\begin{equation}\label{eqn:subswitch2}
~~~~~~~~ \hat{\mu}^*(\mathbf{Q})=u,~ \hbox{\parbox[t]{.25\textwidth}{if $Q_{u,k}\geq \hat{s}_{u,k}(\mathbf{Q}_{-u,-k})$ and$~~~~$   ~~~~~~~~~~~~~~~~condition (a) or (b) holds,}}
\end{equation}
where condition (a) is $k<k^\dag(k,Q_u)$, condition (b) is $k> k^\dag(k,Q_u)$ and $\hat{s}_{u,k}(\mathbf{Q}_{-u,-k})>0$, and the switch curve for content-user pair $(u,k)$ is given by
\begin{equation*}
\hat{s}_{u,k}(\mathbf{Q}_{-u,-k})\triangleq\begin{cases}\min\hat{\mathcal{S}}_{u,k}(\mathbf{Q}_{-u,-k}),  & \text{if}~\hat{\mathcal{S}}_{u,k}(\mathbf{Q}_{-u,-k})\neq\emptyset \\
            \infty,  &\text{otherwise}
  \end{cases}
  \end{equation*}
with $\hat{\mathcal{S}}_{u,k}(\mathbf{Q}_{-u,-k})\triangleq\{Q_{u,k}| \hat{J}(\mathbf{Q},u)\leq \hat{J}(\mathbf{Q},v)~\forall v, v\neq u\}$.
Here, $k^\dag(k,Q_u)$ and $\mathbf{Q}_{-u,-k}$ are defined in Theorem~\ref{theorem:theorem2}.
\label{theorem:strhatmu}
\end{theorem}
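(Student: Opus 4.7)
The plan is to follow the same three-stage template used for Theorems~\ref{theorem:theorem1} and \ref{theorem:theorem2}: (i) monotonicity of the surrogate value function, (ii) monotonicity of the surrogate state-action cost function, and (iii) derivation of the (partial) switch structure. The essential new ingredient is the separability $\hat{V}(\mathbf{Q})=\sum_{m}\hat{V}_m(Q_m)$ established in Lemma~\ref{lemma:decomposition}, which makes each step reduce to a per-content argument.

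First, I would establish that each $\hat{V}_m(\cdot)$ is monotonically non-decreasing in $Q_m$ in the uniform case, and partially monotone with respect to the $\trianglerighteq$ order on $\mathbf{Q}_m$ in the nonuniform case. Since \eqref{eqn:perrandombellman} is itself a Bellman equation for a single-content unichain controlled Markov chain (with action distribution prescribed by $\hat{\mu}$), I would run RVIA on the per-content state space to produce iterates $\hat{V}_{m,n}$ converging to $\hat{V}_m$ and prove the required monotonicity by induction on $n$. The inductive step uses exactly the ingredients of Appendices~B and~G, restricted to a single content: $g_m(\cdot,u)$ is non-decreasing in the queue state, and a larger queue state induces a stochastically larger next state under the truncated-arrival dynamics. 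Summing over $m$ transfers the (partial) monotonicity to $\hat{V}(\mathbf{Q})$.

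Second, I would prove the analogues of Lemmas~\ref{lemma:propertyJ1} and \ref{lemma:propertyJ2} for $\hat{J}$, namely $\hat{J}(\mathbf{Q}+\mathbf{e}_u,u)-\hat{J}(\mathbf{Q}+\mathbf{e}_u,v)\leq\hat{J}(\mathbf{Q},u)-\hat{J}(\mathbf{Q},v)$ in the uniform case, and the matrix analogue under $\mathbf{Q}+\mathbf{E}_{u,k}\trianglerighteq\mathbf{Q}$ in the nonuniform case. Because $\hat{V}$ is separable, the difference $\hat{J}(\mathbf{Q},u)-\hat{J}(\mathbf{Q},v)$ collapses to the per-stage cost increment plus terms involving only $\hat{V}_u$ and $\hat{V}_v$; all other contents cancel. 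Scheduling $u$ empties queue $u$ while scheduling $v\neq u$ lets it accumulate the new arrivals, so the monotonicity of $\hat{V}_u$ from the previous step directly yields the inequality, mirroring Appendices~C and H. Once this is in place, the (partial) switch structure \eqref{eqn:subswitch} and \eqref{eqn:subswitch2} follows by exactly the same set-level argument as Appendices~D and I: the set $\hat{\mathcal{S}}_u(\mathbf{Q}_{-u})$ (resp.\ $\hat{\mathcal{S}}_{u,k}(\mathbf{Q}_{-u,-k})$) is upward-closed along the $Q_u$ (resp.\ $Q_{u,k}$) coordinate, so its infimum $\hat{s}_u$ (resp.\ $\hat{s}_{u,k}$) defines the claimed threshold.

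The main obstacle will be the nonuniform case, where the power cost $p(\mathbf{Q},u)=p(u,k^*(u,\mathbf{Q}))$ depends on the highest-indexed user with a pending request for content $u$. The constraint $\mathbf{Q}+\mathbf{E}_{u,k}\trianglerighteq\mathbf{Q}$ was crafted precisely so that $k\leq k^*(u,\mathbf{Q})$, which guarantees $p(\mathbf{Q}+\mathbf{E}_{u,k},u)=p(\mathbf{Q},u)$ and $p(\mathbf{Q}+\mathbf{E}_{u,k},v)=p(\mathbf{Q},v)$ for $v\neq u$; this is what allows the per-stage cost difference to behave cleanly when invoking partial monotonicity of $\hat{V}_u$. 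Handling this boundary case is the delicate point (and accounts for why condition (b) in \eqref{eqn:subswitch2} requires $\hat{s}_{u,k}(\mathbf{Q}_{-u,-k})>0$), but beyond this careful bookkeeping the argument is structurally identical to the proof of Theorem~\ref{theorem:theorem2}.
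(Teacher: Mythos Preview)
Your proposal is correct and follows essentially the same approach as the paper's Appendix~K: per-content monotonicity of $\hat{V}_m$ via RVIA induction on \eqref{eqn:perrandombellman}, then the analogues of Lemmas~\ref{lemma:propertyJ1} and \ref{lemma:propertyJ2} for $\hat{J}$, and finally the threshold argument of Appendices~D and~I. Your explicit remark that separability causes all terms except those in $\hat{V}_u$ and $\hat{V}_v$ to cancel in the $\hat{J}$ difference is a nice clarification, but otherwise the structure is identical.
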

\begin{proof}
  Please see Appendix K.
\end{proof}

Similarly, Theorem~\ref{theorem:strhatmu}  indicates the following results.

 1) In the uniform case, for all $\mathbf{Q}\in\bm{\mathcal{Q}}$, we have
\begin{equation}\label{eqn:indicate_uni_sub}
  \hat{\mu}^*(\mathbf{Q})=u~\Rightarrow~\hat{\mu}^*(\mathbf{Q}+\mathbf{e}_u)=u.
\end{equation}

2) In the nonuniform case, for all $\mathbf{Q}\in\bm{\mathcal{Q}}$, we have
\begin{equation}\label{eqn:indicate_nonuni_sub}
  \hat{\mu}^*(\mathbf{Q})=u~\text{and}~\mathbf{Q}+\mathbf{E}_{u,k}\trianglerighteq\mathbf{Q}~\Rightarrow~\hat{\mu}^*(\mathbf{Q}+\mathbf{E}_{u,k})=u.
\end{equation}
\subsection{Structured Suboptimal Algorithm}
By making use of the relationship between $\hat{\mu}$ and $\hat{\mu}^*$ and the structural properties of $\hat{\mu}^*$ in Theorem~\ref{theorem:strhatmu}, we can develop a low complexity algorithm to obtain $\hat{\mu}^*$ in \eqref{eqn:hatmu}, which is summarized in Algorithm~\ref{alg:ssa}. We refer to Algorithm~\ref{alg:ssa} as the structured suboptimal algorithm (SSA).
\begin{algorithm}[!h]
\caption{Structured Suboptimal Algorithm}
\label{alg:ssa}
\begin{algorithmic}[1]
\State Given a randomized base unichain policy $\hat{\mu}$, compute the per-content value function $\{\hat{V}_m(Q_m)\}$ for all $m\in\mathcal{M}$ by solving  the corresponding linear system of equations in \eqref{eqn:perrandombellman}.\footnotemark \label{code:hatmu}
\State Obtain the proposed policy $\hat{\mu}^*$, where for each $\mathbf{Q}\in\bm{\mathcal{Q}}$, $\hat{\mu}^*(\mathbf{Q})$ is such that:\label{code:ssa}
\Statex\textbf{if}~$\exists u\in\mathcal{M}$, such that $\hat{\mu}^*(\mathbf{Q}-\mathbf{e}_u)=u$ (uniform case),\Statex or $\exists u\in\mathcal{M}$ and $k\in\mathcal{K}$, such that $\hat{\mu}^*(\mathbf{Q}-\mathbf{E}_{u,k})=u$ and \Statex$\mathbf{Q}\trianglerighteq\mathbf{Q}-\mathbf{E}_{u,k}$ (nonuniform case),~\textbf{then}
$$\hat{\mu}^*(\mathbf{Q})=u.$$
\Statex\textbf{else}
\Statex \hspace{25mm}Compute  $\hat{\mu}^*(\mathbf{Q})$ by \eqref{eqn:hatmu}.
\Statex\textbf{endif}
\end{algorithmic}
\end{algorithm}
\footnotetext{The solution to \eqref{eqn:perrandombellman} can be obtained directly using Gaussian elimination or iteratively using the relative value iteration method \cite{bertsekas}.}

\begin{figure*}[!t]
\begin{minipage}[t]{0.245\linewidth}
\centering
        \includegraphics[scale=.3]{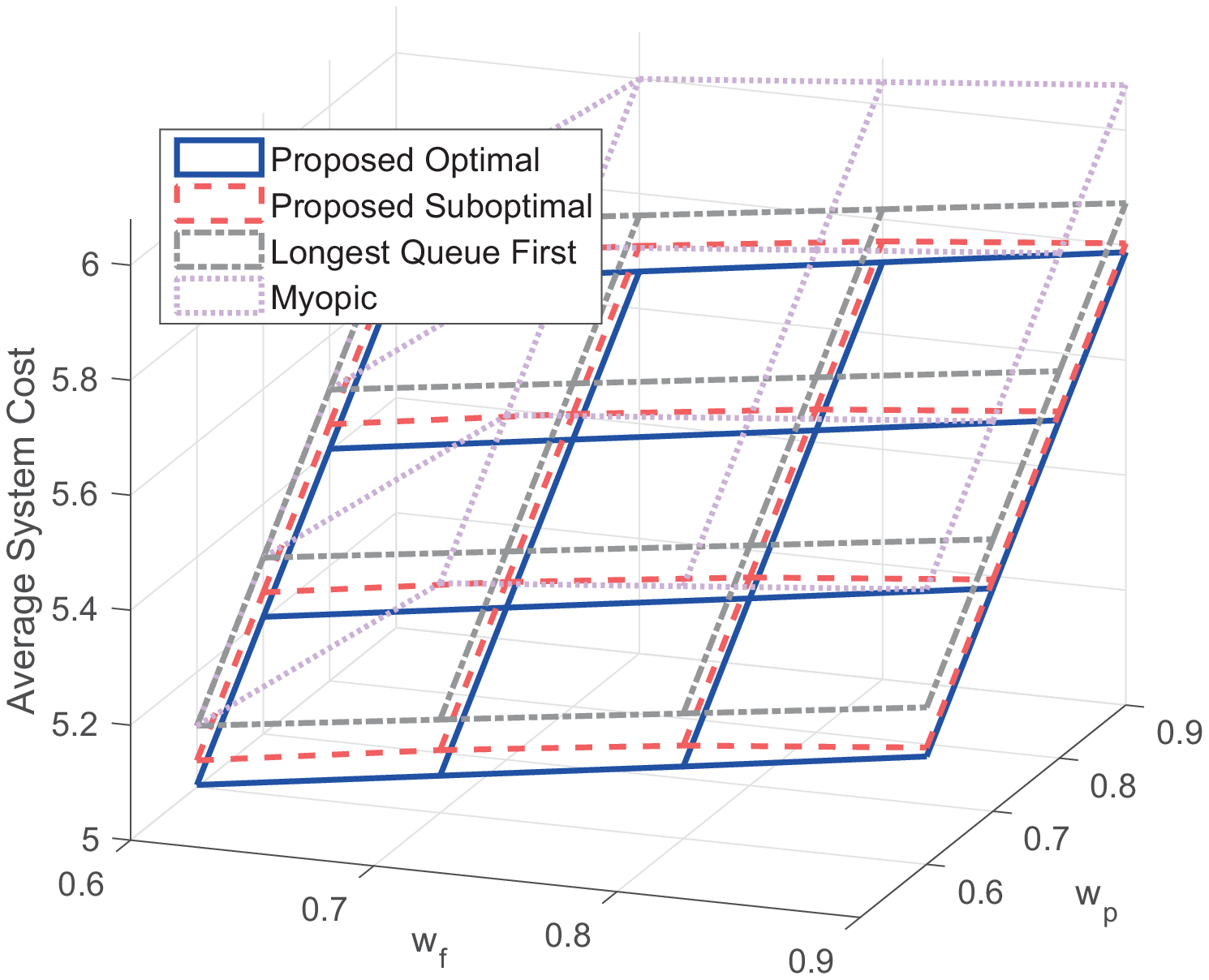}
\subcaption{\small{Average system cost}}\label{fig:uniform_wpwf_system}
\end{minipage}%
\begin{minipage}[t]{.245\linewidth}
\centering
        \includegraphics[scale=.3]{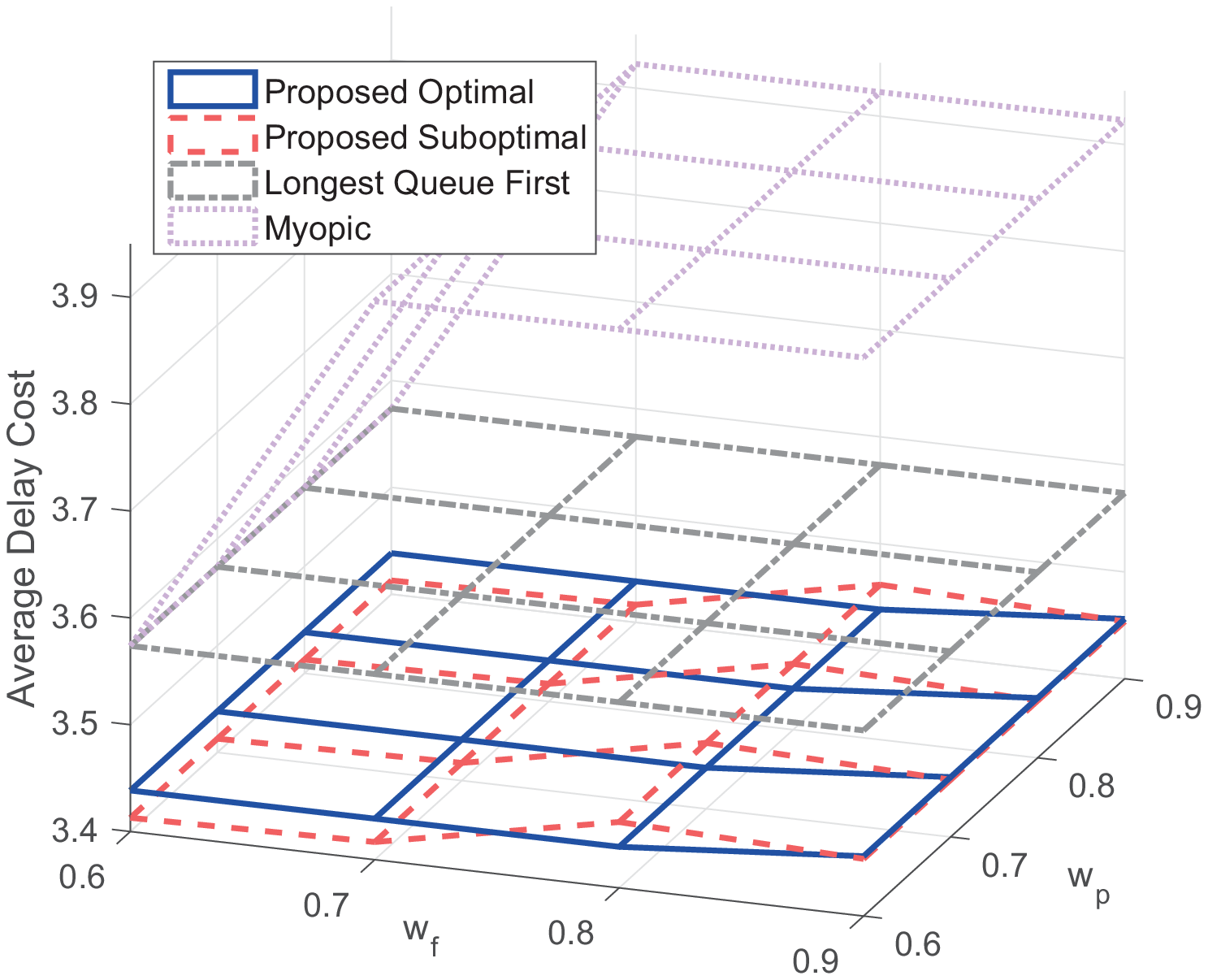}
\subcaption{\small{Average delay cost}.}\label{fig:uniform_wpwf_delay}
\end{minipage}
\begin{minipage}[t]{.245\linewidth}
\centering
        \includegraphics[scale=.3]{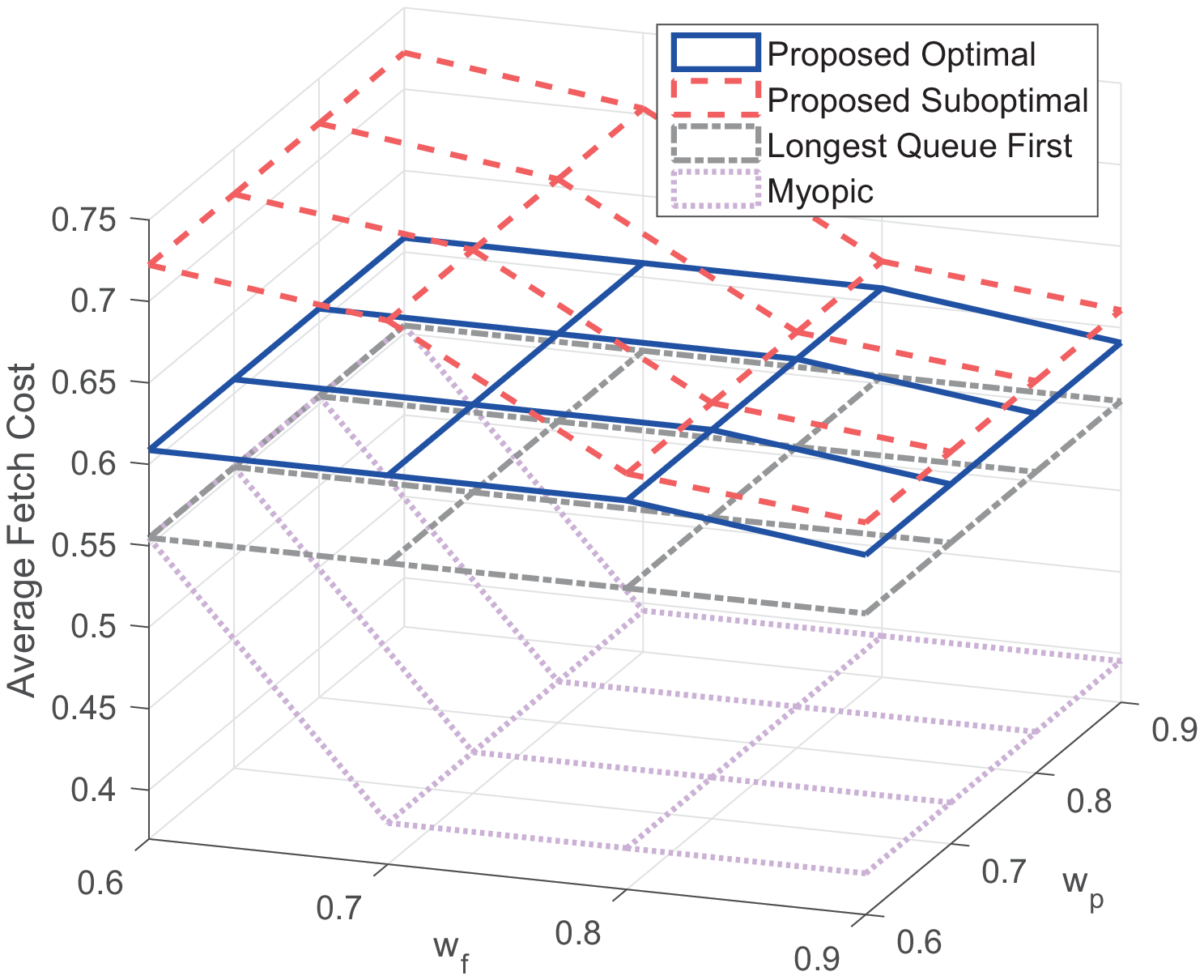}
\subcaption{\small{Average fetching cost}.}\label{fig:uniform_wpwf_fetch}
\end{minipage}
\begin{minipage}[t]{.245\linewidth}
\centering
        \includegraphics[scale=.3]{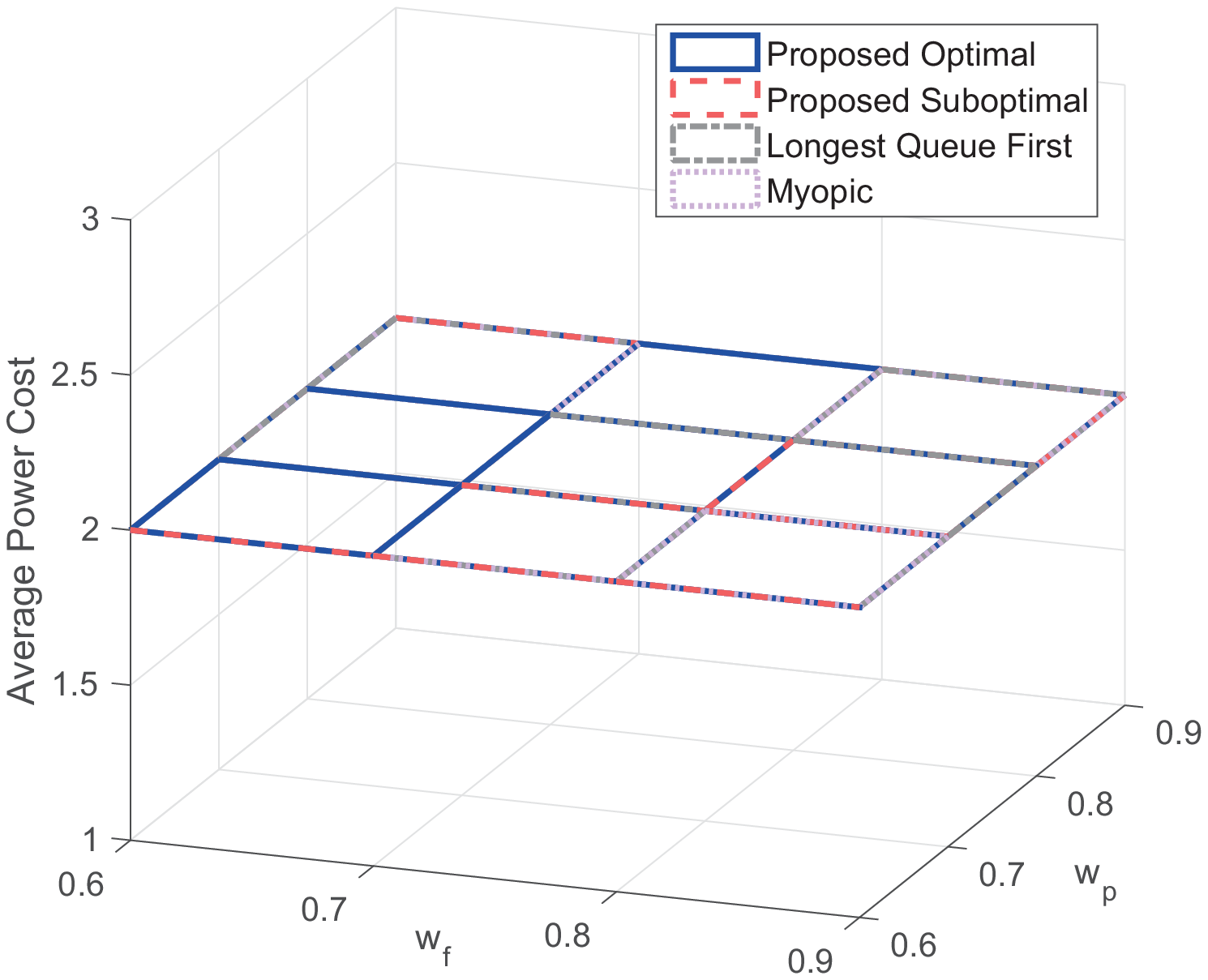}
\subcaption{\small{Average power cost}.}\label{fig:uniform_wpwf_power}
\end{minipage}
\caption{\small{Average costs versus $w_p$ and $w_f$ in the uniform case at $M=3$, \textcolor{black}{$|\mathcal{C}|=2$, $K=2$, and $\alpha=0.75$. }}}\label{fig:uniform_wpwf}
\end{figure*}

\begin{figure*}[!t]
\begin{minipage}[t]{0.245\linewidth}
\centering
        \includegraphics[scale=.3]{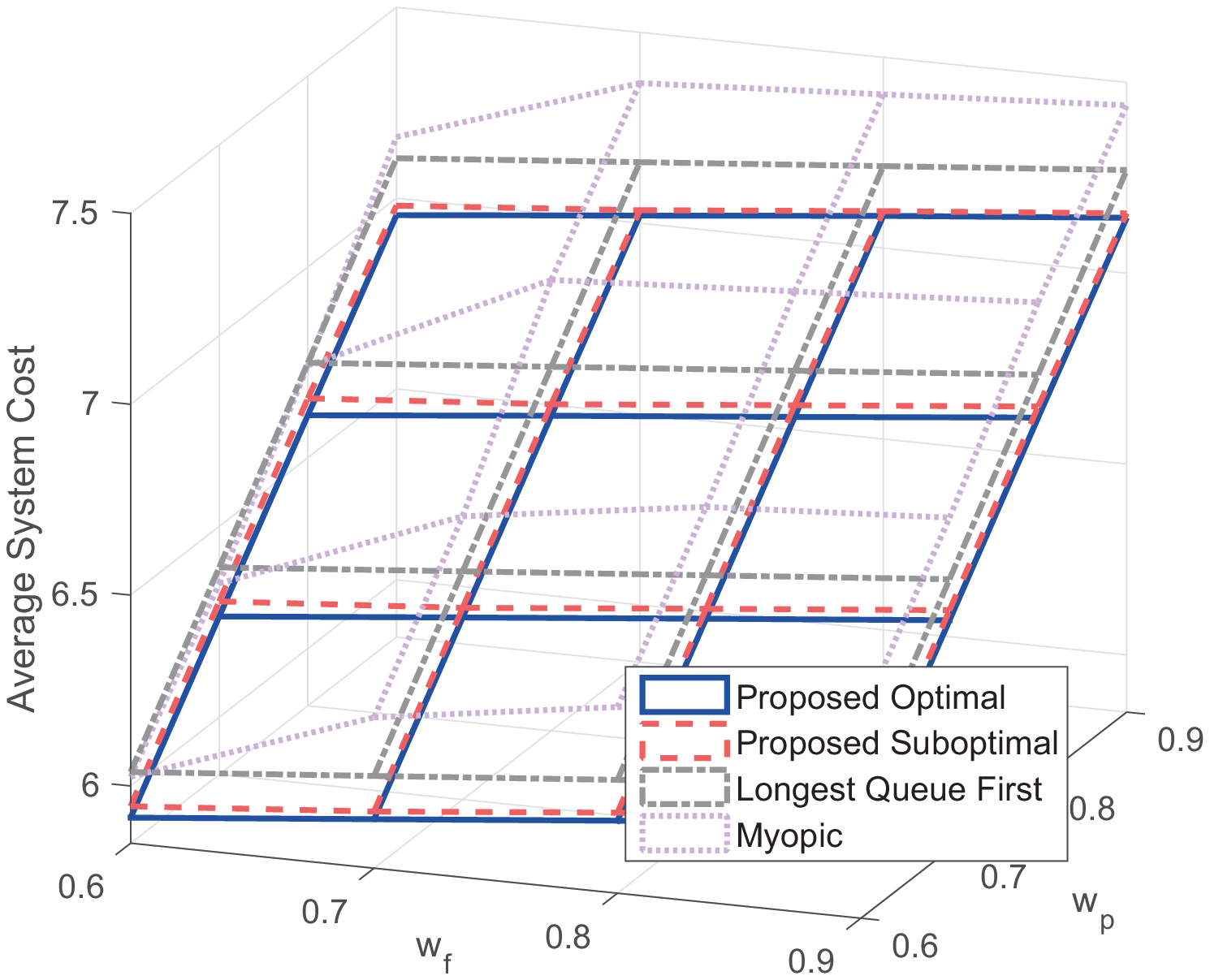}
\subcaption{\small{Average system cost}}\label{fig:nonuniform_wpwf_system}
\end{minipage}%
\begin{minipage}[t]{.245\linewidth}
\centering
        \includegraphics[scale=.3]{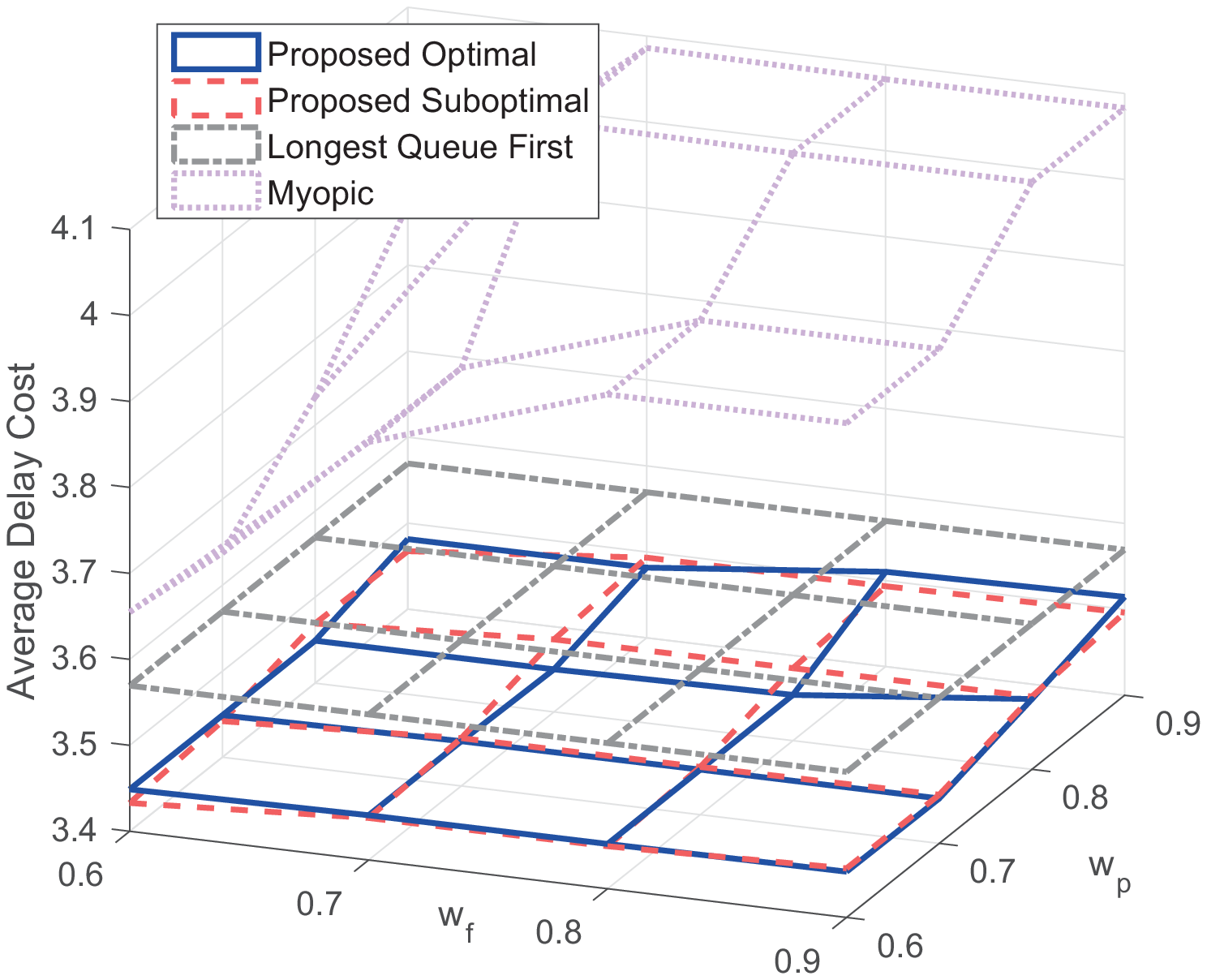}
\subcaption{\small{Average delay cost}.}\label{fig:nonuniform_wpwf_delay}
\end{minipage}
\begin{minipage}[t]{.245\linewidth}
\centering
        \includegraphics[scale=.3]{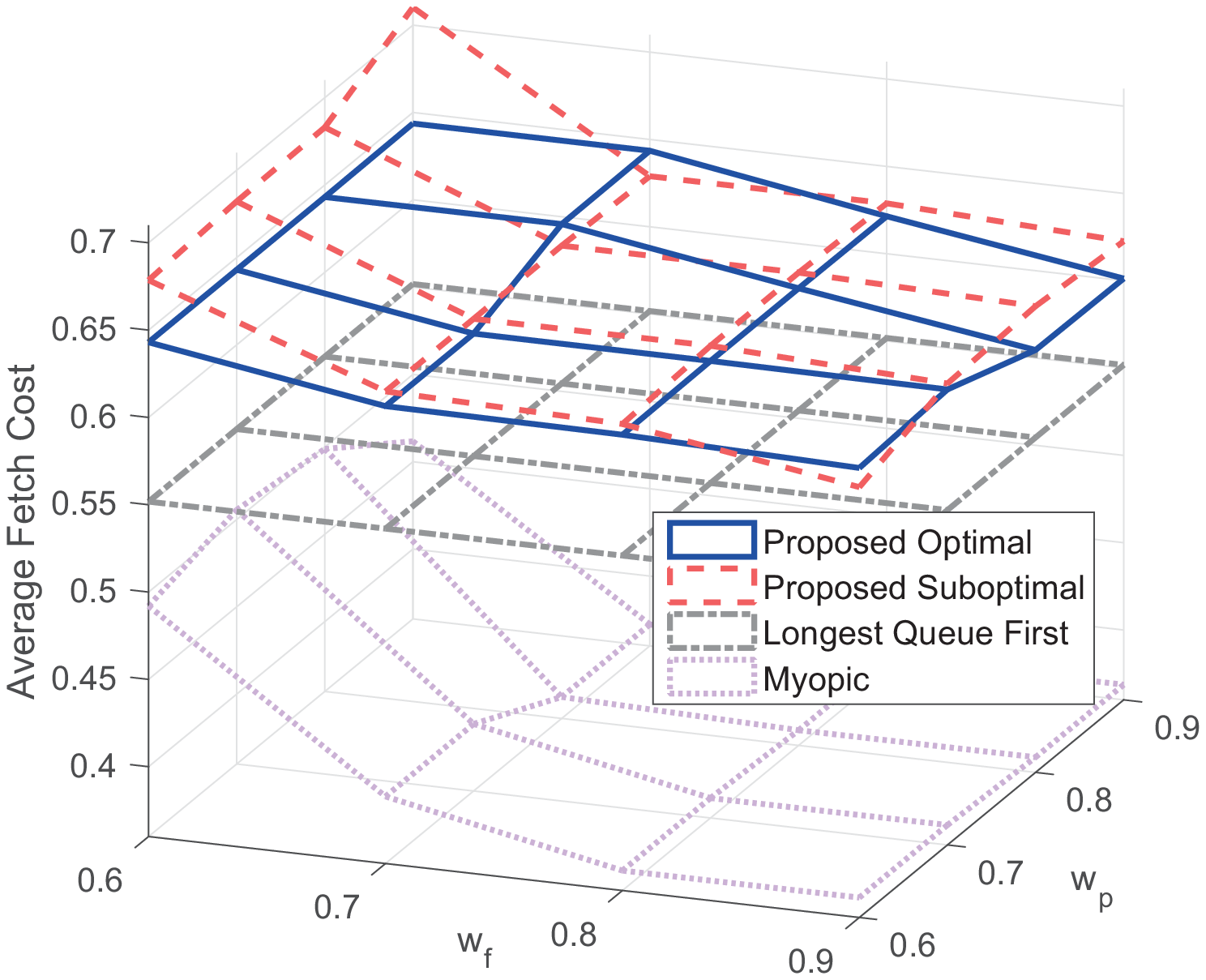}
\subcaption{\small{Average fetching cost}.}\label{fig:nonuniform_wpwf_fetch}
\end{minipage}
\begin{minipage}[t]{.245\linewidth}
\centering
        \includegraphics[scale=.3]{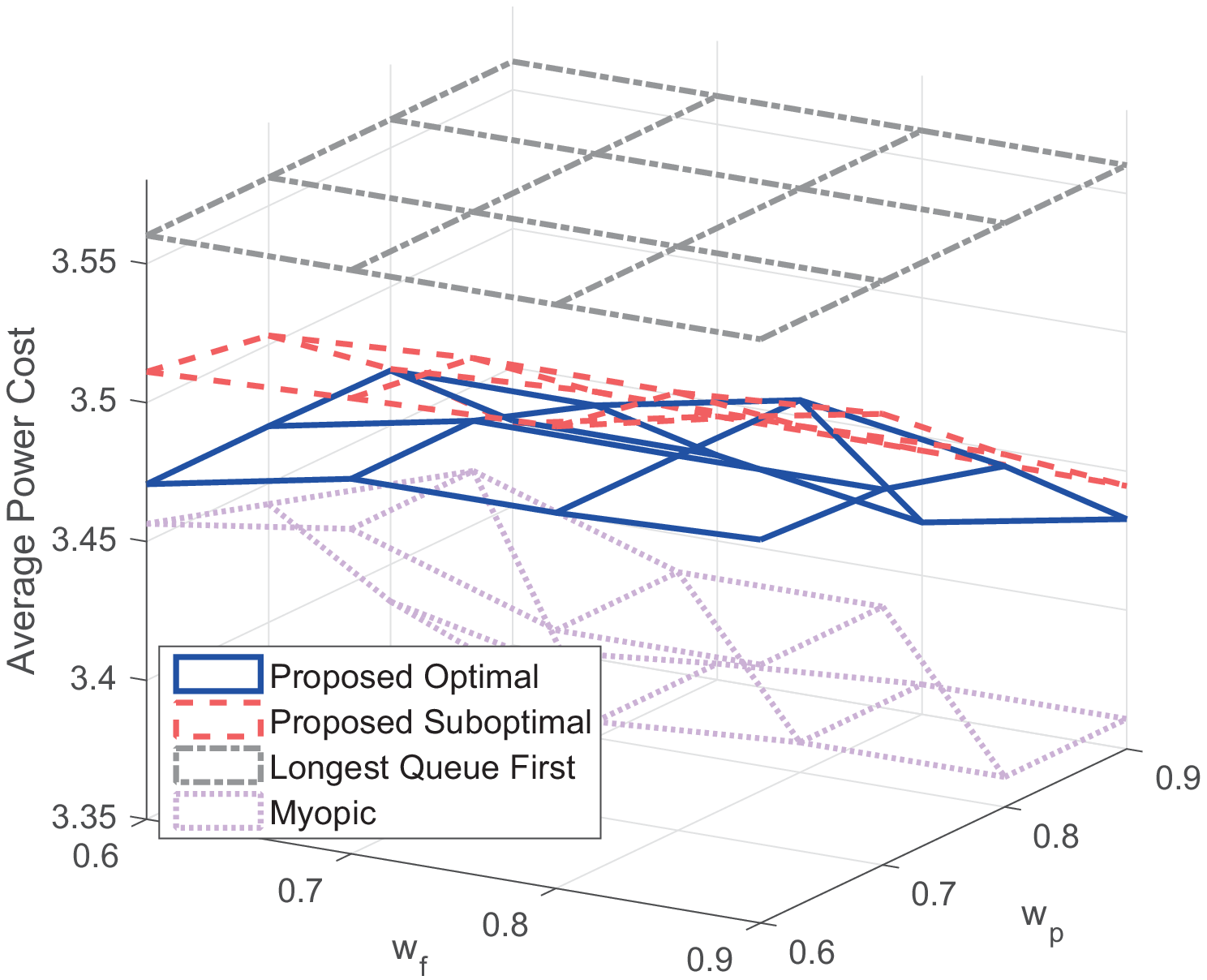}
\subcaption{\small{Average power cost}.}\label{fig:nonuniform_wpwf_power}
\end{minipage}
\caption{\small{Average costs versus $w_p$ and $w_f$ in the nonuniform case at $M=3$,  \textcolor{black}{$|\mathcal{C}|=2$, $K=2$, and $\alpha=0.75$. }}}\label{fig:nonuniform_wpwf}
\end{figure*}
Now we show that SSA has a significantly lower computational complexity than the two optimal algorithms proposed in Section VI, i.e., SRVIA and SPIA.
First, we compare the complexity of SSA and SRVIA. If we compute $\{\hat{V}_m(Q_m)\}$ in Step~\ref{code:hatmu} of SSA using the relative value iteration method, then the numbers of iterations required for Step~\ref{code:hatmu} of SSA and SRVIA are comparable.
However, as illustrated in Remark~\ref{remark:sub}, for each iteration, the number of value functions required to be updated in Step~\ref{code:hatmu} of SSA is much smaller than that in SRVIA. In addition, the number of optimizations required to be solved in Step~\ref{code:hatmu} of SSA is comparable to that in each iteration of SRVIA.
Thus, SSA has a much lower computational complexity than SRVIA.
Next, we compare the complexity of SSA and SPIA. SSA is similar to one iteration of SPIA.
As illustrated in Remark~\ref{remark:sub}, the number of value functions required to be updated in Step~\ref{code:hatmu} of SSA is much smaller than that in each iteraion of the policy evaluation step of SRVIA. In addition, the number of optimizations required to be solved in Step~\ref{code:hatmu} of SSA is comparable to that in each iteration of the structured policy update step of SPIA.
Thus, SSA has a much lower computational complexity than SPIA.

\section{Numerical results and discussion}\label{sec:simulation}
In this section, we evaluate the performance of the proposed optimal and suboptimal solutions via numerical examples.
In the simulations, we consider that in each slot, each user requests one content, which is content $m$ with probability $P_m$.
We assume that $\{P_m\}$ follows a (normalized) Zipf distribution with parameter $\alpha$\cite{zipf}.
\textcolor{black}{We consider that each content is of the same size, and the BS stores the most popular contents.
We set $c(m)=3$ for all $m$. In addition, for all $m$, in the uniform case, we set $p(m,k)=2$  for all $k$, and in the nonuniform case, we set $p(m,k)=2$ for $k=1,\cdots,K/2$ and  $p(m,2)=4$ for $k=K/2+1,\cdots,K$.
}

First, we compare the average costs of the proposed optimal and suboptimal policies with \textcolor{black}{three} baseline policies, i.e., a randomized base policy in Definition~\ref{definition:definition2}, the longest-queue-first policy in \cite{batch}\textcolor{black}{, and a myopic policy \cite{powell2007approximate}}. In particular, in each slot, the randomized base policy chooses one content randomly for multicasting according to the distribution $\{P_m\}$ on $\mathcal{M}$ and the longest-queue-first policy schedules the content with the longest request queue for multicasting.
\begin{answer}
In each slot, the myopic policy chooses the multicast scheduling action that minimizes a cost function $C(\mathbf{Q},u) $, i.e.,  $u(t)=\arg\min_{u\in\mathcal{M}}C(\mathbf{Q},u)$, where $$C(\mathbf{Q},u) \triangleq w_ff(u) + w_p p(\mathbf{Q},u) - d(\mathbf{Q},u), \forall\mathbf{Q}\in\bm{\mathcal{Q}}, u\in\mathcal{M},$$
 with $d(\mathbf{Q},u)\triangleq Q_u$ in the uniform case and $d(\mathbf{Q},u)\triangleq\sum_k Q_{u,k}$ in the nonuniform case.
This policy determines the scheduling action myopically, without accurately considering the impact of the action on the future costs.
Note that, this myopic policy can also be treated as an approximate solution to the considered MDP through approximating $V(\mathbf{Q})$  in the Bellman equation with $\sum_mQ_m$ (uniform case) or $\sum_{m,k}Q_{m,k}$ (nonuniform case).
\end{answer}

Fig.~\ref{fig:uniform_wpwf} and Fig.~\ref{fig:nonuniform_wpwf} illustrate the average system, delay, power and fetching costs versus the weights of the power and fetching costs (i.e., $w_p$ and $w_f$) in the uniform and nonuniform cases, respectively.
It can be seen that the average system costs of the proposed optimal and suboptimal policies are very close to each other and are lower than
\textcolor{black}{those of the longest-queue-first policy and the myopic policy}. \textcolor{black}{The reason is that the proposed two policies can make foresighted decisions by better utilizing system state information and balancing the current cost and the futures costs}.  Moreover, we can observe that for the optimal and suboptimal policies, in the uniform case, the average delay cost increases with $w_f$ and does not change with $w_p$, and the average fetching cost decreases with $w_f$; in the nonuniform case, the average delay cost increases with $w_f$ and $w_p$, the average power cost decreases with $w_p$, and the average fetching cost decreases with $w_f$.  This reveals the tradeoff among the delay, power, and fetching costs of the optimal and suboptimal policies.

Fig.~\ref{fig:cost_vs_zipf} illustrates the average system cost versus the Zipf parameter $\alpha$ in the uniform and nonuniform cases.
The $\alpha$ parameter determines the ``peakiness'' of the content popularity distribution, i.e., a large $\alpha$ indicates that a small amount of contents account for the majority of content requests.
It can be seen that with the increase of $\alpha$, the average system cost of the proposed suboptimal policy decreases and the performance gains over the \textcolor{black}{three} baseline policies increase.
\textcolor{black}{This indicates that the proposed suboptimal policy can utilize caching more effectively as the content popularity distribution gets steeper.}

\begin{answer}
Fig.~\ref{fig:cost_vs_user_uniform} and Fig.~\ref{fig:cost_vs_user_nonuniform} illustrate the average system cost and the average system cost per user versus the number of users $K$ in the uniform and nonuniform cases, respectively.
We can observe that when the average request arrival rate increases (as $K$ increases), the average system costs per user of all policies decrease.
This reveals the benefit of the multicast transmission.
\end{answer}
\begin{figure}[t]
\begin{minipage}[h]{.5\linewidth}
\centering
        \includegraphics[scale=.33]{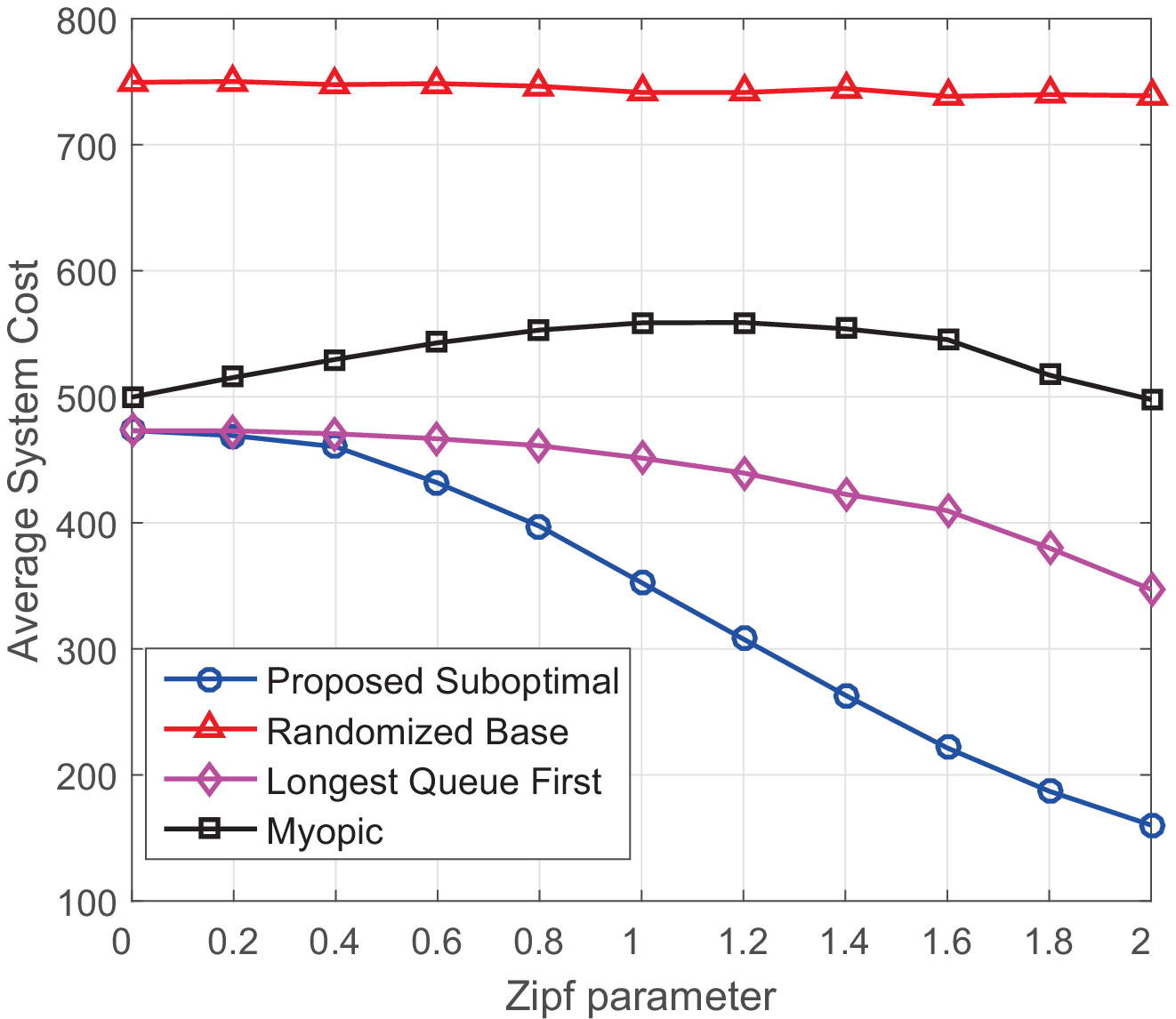}
\subcaption{\small{Uniform case.}}\label{fig:uniform_vs_zipf_system}
\end{minipage}%
\begin{minipage}[h]{.5\linewidth}
\centering
        \includegraphics[scale=.33]{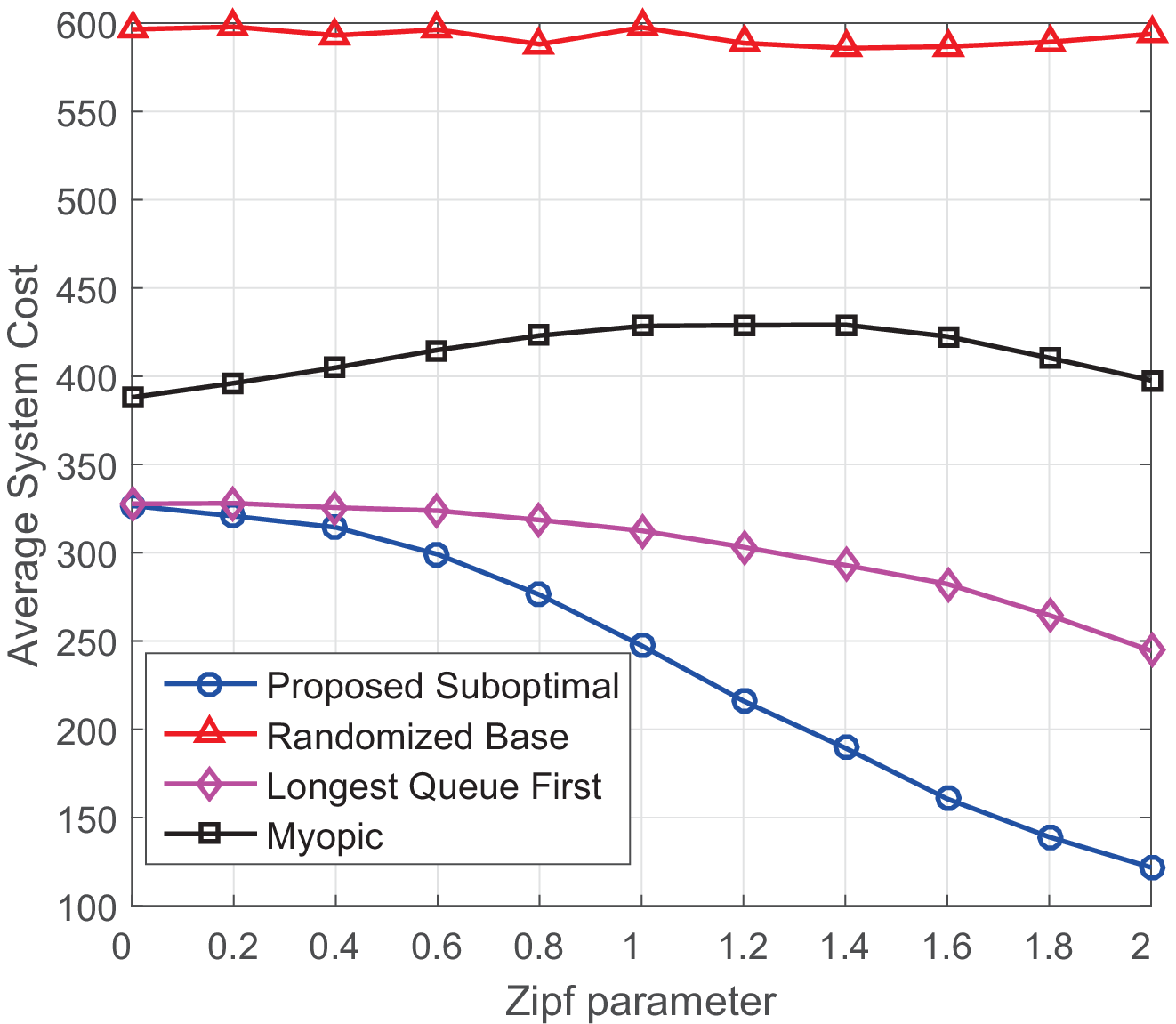}
\subcaption{\small{Nonuniform case.}}\label{fig:nonuniform_vs_zipf_system}
\end{minipage}
\caption{\small{Average system cost versus the Zipf parameter $\alpha$ in the uniform and nonuniform cases at \textcolor{black}{$M=30$, $|\mathcal{C}|=13$, $K=30$, and $w_p=w_f=5$.}}} \label{fig:cost_vs_zipf}
\end{figure}

\begin{figure}[t]
\begin{minipage}[h]{.5\linewidth}
\centering
        \includegraphics[scale=.33]{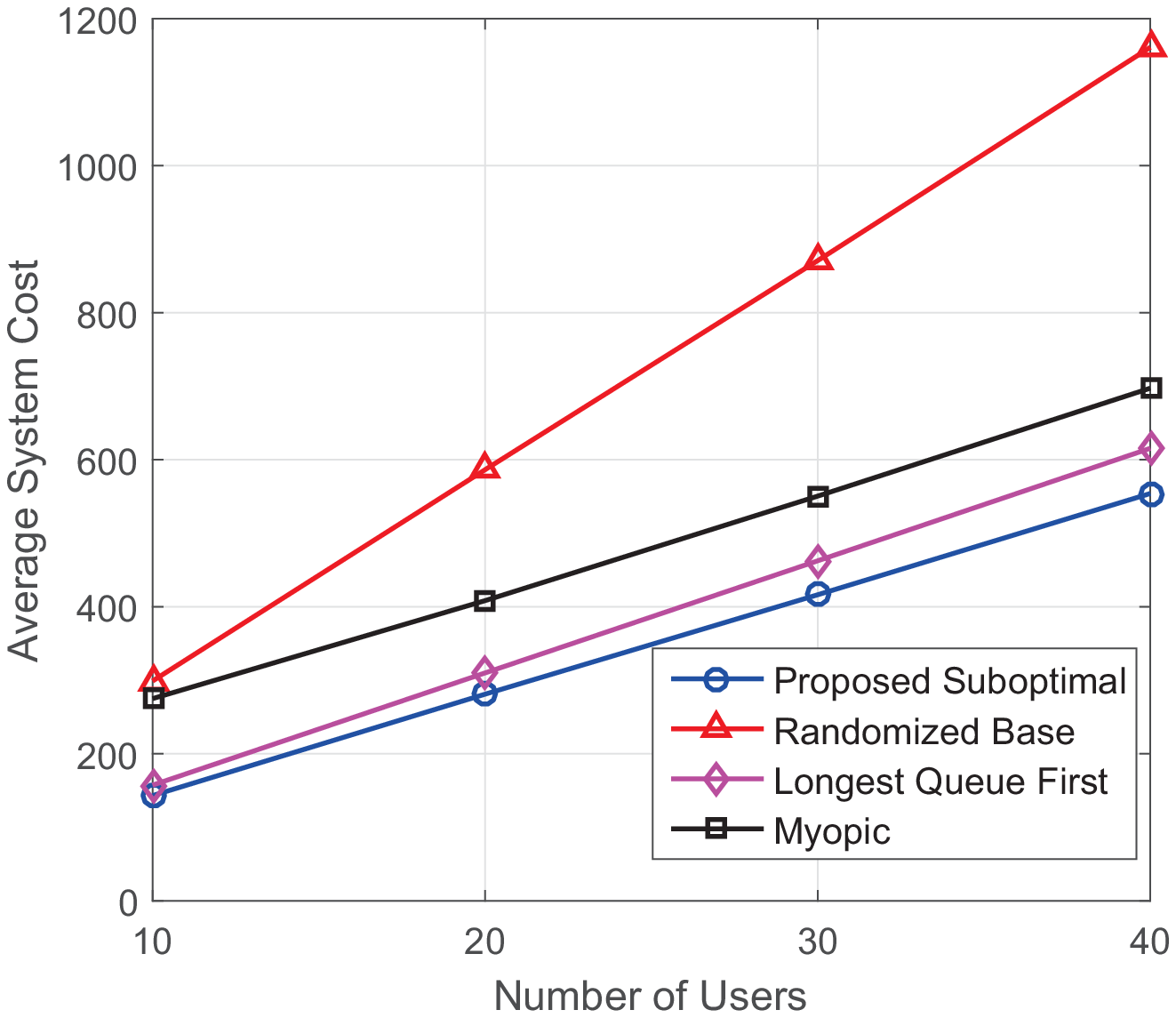}
\subcaption{\small{Average system cost.}}\label{fig:uniform_vs_user}
\end{minipage}%
\begin{minipage}[h]{.5\linewidth}
\centering
        \includegraphics[scale=.33]{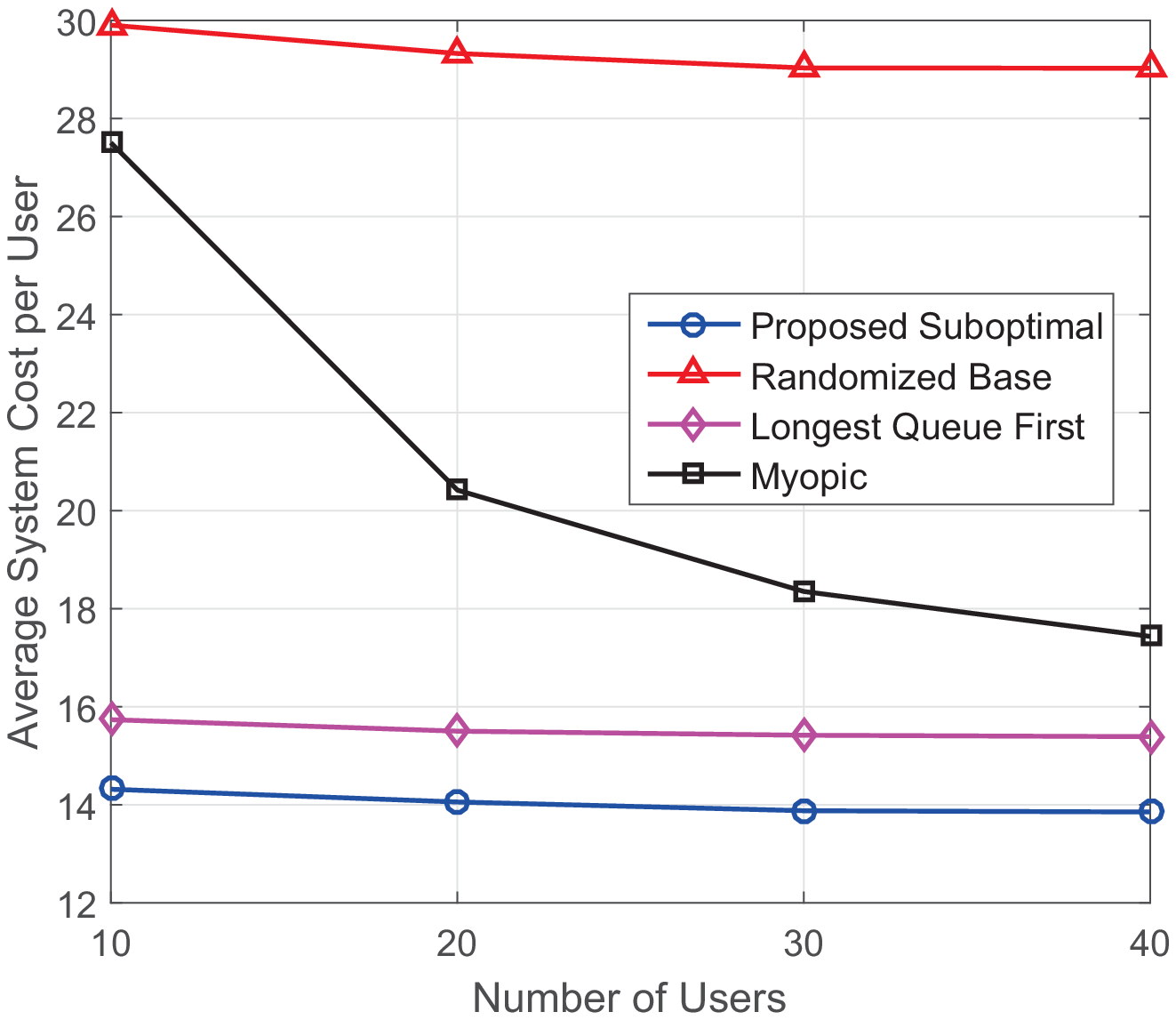}
\subcaption{\small{Average system cost per user.}}\label{fig:uniform_vs_user_per}
\end{minipage}
\caption{\small{Average system cost and system cost per user versus number of users $K$ in the uniform cases at \textcolor{black}{$M=30$, $|\mathcal{C}|=13$, $K=30$, $w_p=w_f=5$ and $\alpha=0.75$.}}} \label{fig:cost_vs_user_uniform}
\end{figure}

\begin{figure}[t]
\begin{minipage}[h]{.5\linewidth}
\centering
        \includegraphics[scale=.33]{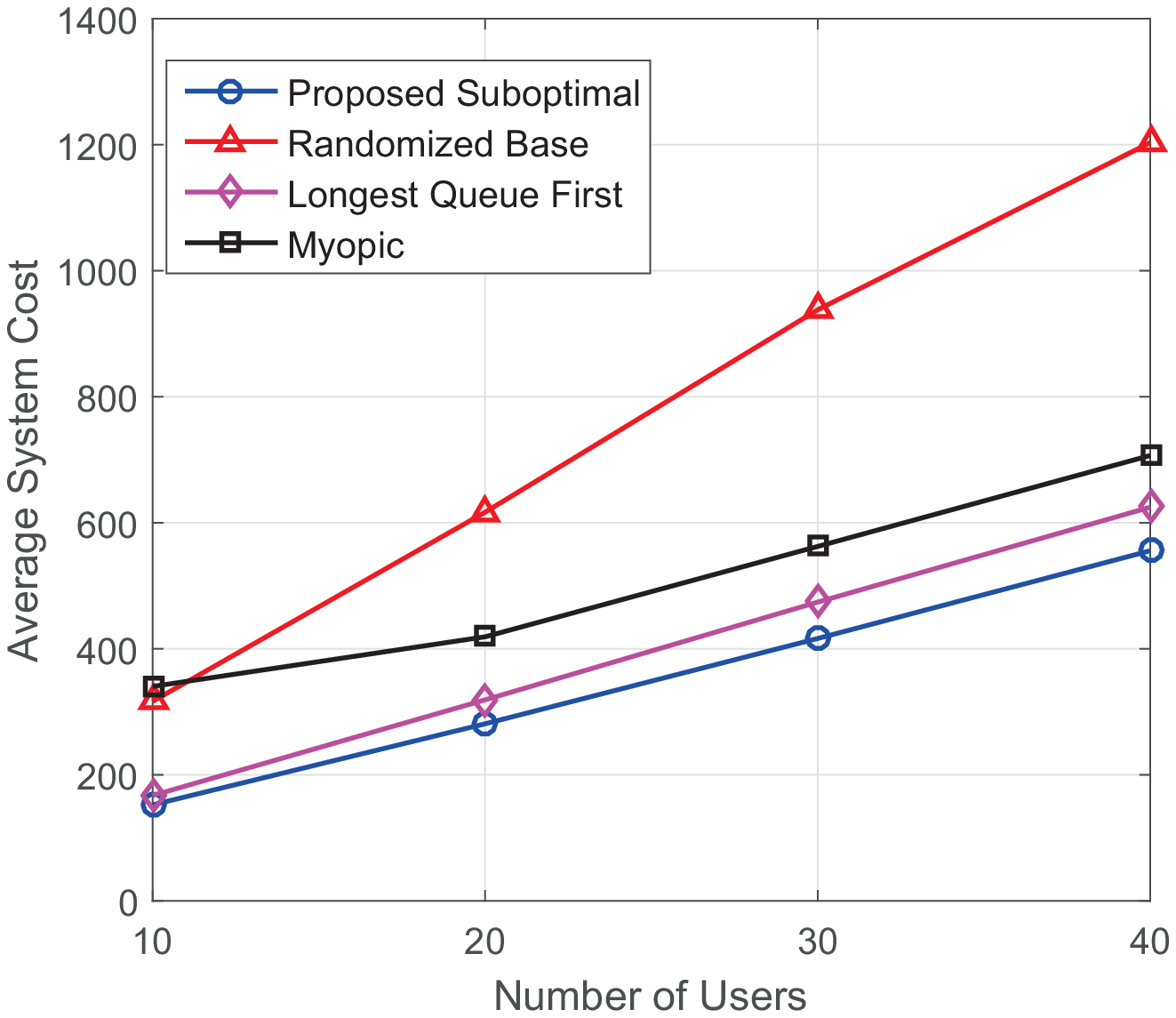}
\subcaption{\small{Average system cost.}}\label{fig:nonuniform_vs_user}
\end{minipage}%
\begin{minipage}[h]{.5\linewidth}
\centering
        \includegraphics[scale=.33]{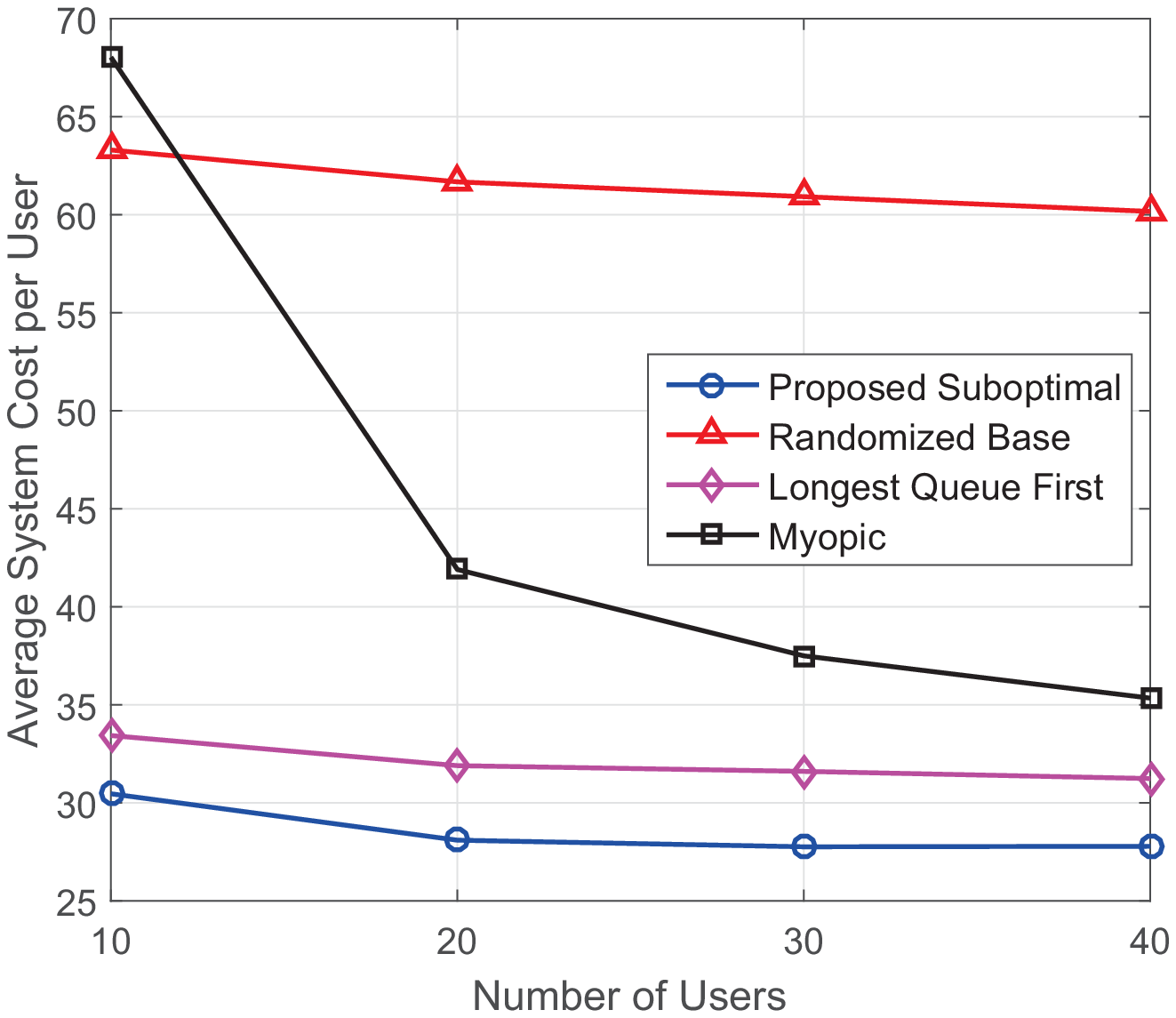}
\subcaption{\small{Average system cost per user.}}\label{fig:nonuniform_vs_user_per}
\end{minipage}
\caption{\small{Average system cost and system cost per user versus number of users $K$ in the nonuniform cases at \textcolor{black}{$M=30$, $|\mathcal{C}|=13$, $K=30$, $w_p=w_f=5$ and $\alpha=0.75$.}}} \label{fig:cost_vs_user_nonuniform}
\end{figure}

Next, we compare the computational complexity of the two standard optimal algorithms (RVIA and PIA), the two proposed low-complexity optimal algorithms (SRVIA and SPIA), and the proposed low-complexity suboptimal algorithm (SSA) in Table~\ref{table:complexity} for the uniform and nonuniform cases.
It can be seen that SRVIA and SPIA have much lower computational complexity than RVIA and PIA, respectively (with reductions of over 25\% in computation time).
\textcolor{black}
{Note that, the computation times of the four optimal algorithms and the computational reductions of the proposed SRVIA and SPIA have the same order of growth.
Therefore, although the proposed low-complexity optimal algorithms suffer from the curse of dimensionality, their computational complexity reductions are remarkable.}
Moreover, we can observe that SSA has a significantly lower computational complexity than all the \textcolor{black}{optimal} algorithms.
These verify the discussions in Sections VI and VII.

\begin{table}[!htbp]
\color{black}
\centering
\begin{tabular}{|c|c|c|c|c|c|c|}
\hline
 & $M$ & RVIA & SRVIA & PIA & SPIA & SSA \\ \hline
\multirow{3}{*}{Uniform} & 2   &  0.0138 & 0.0085  &  0.0119 & 0.0077 & 0.0016\\ \cline{2-7}
                     & 3  &  0.577  & 0.428  &  0.614 & 0.476 & 0.0063\\ \cline{2-7}
                     & 4  &  23.42  & 17.31  &  20.48 & 16.33 & 0.0828\\ \hline
 \multirow{3}{*}{Nonuniform} & 2  &  0.380 & 0.295  &  0.419 & 0.337 & 0.0076\\ \cline{2-7}
                     & 3  &  15.55  &  12.37	& 18.45	 &  14.41	 & 0.305  \\ \cline{2-7}
                     & 4  &  2315.8 &  1783.2   & 2473.5 &	1878.8	& 32.57 \\ \hline
\end{tabular}
\caption{\textcolor{black}{\small{Average Matlab computation time (sec) for different algorithms in the uniform and nonuniform cases.
$|\mathcal{C}|=1$,  $w_p=w_f=1$,  $K=2$, $\alpha=0.75$, $N_m=10$ for all $m$ and $N_{m,k}=4$ for all $m,k$.}}}
\label{table:complexity}

\end{table}

\begin{answer}
\section{Extension to Markov-Modulated Request Arrivals}\label{sec:markov}
In this section, we extent the structural analysis for the i.i.d. request arrivals to the Markov-modulated request arrivals.
Specifically, we assume that for each $m$ and $k$, the request arrival $\{A_{m,k}(t)\}$ evolves according to an ergodic finite-state  Markov chain with the transition probability $\Pr[A_{m,k}(t+1)|A_{m,k}(t)]$.
In this case, the system state consists of the request queue state $\mathbf{Q}$ and the request arrival state $\mathbf{A}$.
We define the stationary multicast scheduling policy $\tilde{\mu}$ as a mapping from system state space $\bm{\mathcal{Q}}\times\bm{\mathcal{A}}$ to the multicast scheduling action space $\mathcal{M}$ and formulate the corresponding system cost minimization problem. Similar to Lemma~\ref{lemma:bellman}, we have the following Bellman equation:
\begin{equation*}
  \tilde{\theta}+\tilde{V}(\mathbf{Q},\mathbf{A})=\min_{u\in\mathcal{M}}\left\{g(\mathbf{Q},u)+\mathbb{E}\left[\tilde{V}(\mathbf{Q}',\mathbf{A}')\right]\right\},~\forall \mathbf{Q},\mathbf{A},
  \end{equation*}

 \noindent where the expectation is taken over the distributions of $\mathbf{A}$ and $\mathbf{A}'$, and $\mathbf{Q}'$ is defined in Lemma~\ref{lemma:bellman}.
 Following the analysis in Sections IV and V, we can show that the optimal policy $\tilde{\mu}^*$ for the Markov-modulated request arrival model possesses  similar structural properties to the optimal policy $\mu^*$ for the i.i.d. request arrival model.
 \begin{theorem}[Structural Properties of Optimal Policy $\tilde{\mu}^*$]
For Markov-modulated request arrivals, the structural properties of the optimal policy $\tilde{\mu}^*$ are as follows.

1) In the uniform case, $\tilde{\mu}^*$ has a switch structure, i.e., for all $m\in\mathcal{M}$, we have
\begin{equation}\label{eqn:markov_uniform}
  \tilde{\mu}^*(\mathbf{Q},\mathbf{A})=u, \text{if}~Q_u\geq \tilde{s}_u(\mathbf{Q}_{-u},\mathbf{A}).
\end{equation}

2) In the nonuniform case, $\tilde{\mu}^*$ has a partial switch structure, i.e., for all $u\in\mathcal{M}$ and $k\in\mathcal{K}$, we have
\begin{equation}\label{eqn:markov_non}
~~~~~~~~ \tilde{\mu}^*(\mathbf{Q},\mathbf{A})=u,~ \hbox{\parbox[t]{.25\textwidth}{if $Q_{u,k}\geq \tilde{s}_{u,k}(\mathbf{Q}_{-u,-k},\mathbf{A})$ and condition (a) or (b) holds,}}
\end{equation}
where condition (a) is $k<k^\dag(k,Q_u)$, condition (b) is $k> k^\dag(k,Q_u)$ and $\tilde{s}_{u,k}(\mathbf{Q}_{-u,-k},\mathbf{A})>0$.

\noindent The switch curves $\tilde{s}_u(\mathbf{Q}_{-u},\mathbf{A})$ and $\tilde{s}_{u,k}(\mathbf{Q}_{-u,-k},\mathbf{A})$ in \eqref{eqn:markov_uniform} and \eqref{eqn:markov_non}  are defined in a similar manner to the switch curves in Theorems~\ref{theorem:theorem1} and \ref{theorem:theorem2}, respectively.

\label{theorem:markov}
\end{theorem}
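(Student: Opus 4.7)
The plan is to extend the proof techniques of Theorems~\ref{theorem:theorem1} and \ref{theorem:theorem2} to the augmented state space $\bm{\mathcal{Q}}\times\bm{\mathcal{A}}$. First, I would apply RVIA to the Bellman equation displayed above for the Markov-modulated model, initializing $\tilde{V}_0(\mathbf{Q},\mathbf{A})=0$ and defining $\tilde{V}_{n+1}(\mathbf{Q},\mathbf{A}) = \min_{u\in\mathcal{M}}\bigl\{g(\mathbf{Q},u)+\mathbb{E}[\tilde{V}_n(\mathbf{Q}',\mathbf{A}')]\bigr\} - \tilde{V}_{n+1}(\mathbf{Q}^\S,\mathbf{A}^\S)$ for some reference state. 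Since the arrival chain $\{\mathbf{A}(t)\}$ evolves independently of the control action and of $\mathbf{Q}$, the conditional transition of $\mathbf{A}'$ given $\mathbf{A}$ factors out of the expectation, so the coupling arguments used in Appendices B and G carry over cleanly: I would show by induction on $n$ that, for each fixed $\mathbf{A}$, the iterate $\tilde{V}_n(\cdot,\mathbf{A})$ is monotone in $\mathbf{Q}$ under $\succeq$ (uniform case) and under $\trianglerighteq$ (nonuniform case), and then pass to the limit to obtain the corresponding monotonicity of $\tilde{V}(\mathbf{Q},\mathbf{A})$.

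Next, I would define the state-action cost function $\tilde{J}(\mathbf{Q},\mathbf{A},u)\triangleq g(\mathbf{Q},u)+\mathbb{E}[\tilde{V}(\mathbf{Q}',\mathbf{A}')]$ and, for each fixed $\mathbf{A}$, mimic the arguments of Lemmas~\ref{lemma:propertyJ1} and \ref{lemma:propertyJ2} to establish the key diminishing-return inequalities
\begin{align*}
\tilde{J}(\mathbf{Q}+\mathbf{e}_u,\mathbf{A},u)-\tilde{J}(\mathbf{Q}+\mathbf{e}_u,\mathbf{A},v) &\leq \tilde{J}(\mathbf{Q},\mathbf{A},u)-\tilde{J}(\mathbf{Q},\mathbf{A},v)
\end{align*}
in the uniform case and the analogous inequality with $\mathbf{E}_{u,k}$ (when $\mathbf{Q}+\mathbf{E}_{u,k}\trianglerighteq\mathbf{Q}$) in the nonuniform case. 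The crucial observation is that conditioning on $\mathbf{A}$ does not disturb these coupling arguments, because the action $u$ affects only the queue component of the next state through the departure while the arrival increments and the arrival-state transition are exogenous; hence the telescoping of $\tilde{V}$ used in the i.i.d. proofs goes through term by term when the expectation is split as $\mathbb{E}_{\mathbf{A}'|\mathbf{A}}\mathbb{E}_{\mathbf{Q}'|\mathbf{Q},\mathbf{A},u}$.

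Given these two monotonicity properties, the switch and partial switch structures follow verbatim as in Appendices D and I: for each fixed arrival state $\mathbf{A}$, the sets $\tilde{\mathcal{S}}_u(\mathbf{Q}_{-u},\mathbf{A})\triangleq\{Q_u\mid \tilde{J}(\mathbf{Q},\mathbf{A},u)\leq \tilde{J}(\mathbf{Q},\mathbf{A},v),\ \forall v\neq u\}$ are up-sets in $Q_u$, yielding the thresholds $\tilde{s}_u(\mathbf{Q}_{-u},\mathbf{A})$, and similarly for $\tilde{s}_{u,k}(\mathbf{Q}_{-u,-k},\mathbf{A})$ in the nonuniform case with the same case split on $k$ versus $k^\dag(k,\mathbf{Q}_u)$.

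The main obstacle will be verifying that the induction step for monotonicity of $\tilde{V}_n(\cdot,\mathbf{A})$ remains valid under Markov-modulated arrivals, because one must couple the post-action queue states for two initial queue vectors $\mathbf{Q}^1\preceq\mathbf{Q}^2$ while using a common realization of $(\mathbf{A}',\ldots)$. The saving grace is that the arrival dynamics does not depend on $(\mathbf{Q},u)$, so a single coupling of $\mathbf{A}'|\mathbf{A}$ suffices and the queue-coupling argument from Appendix B/G applies pathwise inside the conditional expectation; once this is checked carefully, the remainder of the argument is a direct transcription of the i.i.d. proofs with $\mathbf{A}$ carried as an extra parameter.
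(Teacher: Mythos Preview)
Your proposal is correct and follows essentially the same approach as the paper. The paper does not provide a detailed proof of Theorem~\ref{theorem:markov}; it simply states that the result follows the analysis in Sections~\ref{sec:uniform} and~\ref{sec:nonuniform}, which is precisely the route you describe---carry the arrival state $\mathbf{A}$ as an extra parameter, rerun the RVIA induction for monotonicity of $\tilde{V}(\cdot,\mathbf{A})$ (Lemmas~\ref{lemma:propertyV1} and~\ref{lemma:propertyV2}), deduce the diminishing-return property of $\tilde{J}(\cdot,\mathbf{A},\cdot)$ (Lemmas~\ref{lemma:propertyJ1} and~\ref{lemma:propertyJ2}), and conclude the switch/partial-switch structure as in Appendices~D and~I.
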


Similarly, Theorem~\ref{theorem:markov} implies the following results.

 1) In the uniform case, for all $\mathbf{Q}$ and $\mathbf{A}$, we have
\begin{equation}\label{eqn:indicate_uni_markov}
  \tilde{\mu}^*(\mathbf{Q},\mathbf{A})=u~\Rightarrow~\tilde{\mu}^*(\mathbf{Q}+\mathbf{e}_u,\mathbf{A})=u.
\end{equation}

2) In the nonuniform case, for all $\mathbf{Q}$ and $\mathbf{A}$, we have
\begin{align}\label{eqn:indicate_nonuni_markov}
  \tilde{\mu}^*(\mathbf{Q},\mathbf{A})=u,~\mathbf{Q}+\mathbf{E}_{u,k}\trianglerighteq\mathbf{Q}~\Rightarrow~\tilde{\mu}^*(\mathbf{Q}+\mathbf{E}_{u,k},\mathbf{A})=u.
\end{align}
As in Sections VI and VII, the structural properties in \eqref{eqn:indicate_uni_markov} and \eqref{eqn:indicate_nonuni_markov} can also be utilized to design low-complexity
optimal and suboptimal algorithms.
\end{answer}

\section{Conclusion}
In this paper, we consider the optimal dynamic multicast scheduling to jointly minimize the average delay, power, and  fetching costs for cache-enabled content-centric wireless networks.
We formulate this stochastic optimization problem as an infinite horizon average cost MDP.
We show that the optimal policy has a switch structure in the uniform case and a partial switch structure in the nonuniform case. Moreover, in the uniform case with two contents, we show that the switch curve is monotonically non-decreasing.
Based on these structural results, we propose two low-complexity optimal algorithms.
\textcolor{black}{Motivated by the switch structures of the optimal policy,} to further reduce the complexity, we also propose a low-complexity suboptimal policy, which has \textcolor{black}{similar structural properties to} the optimal policy, and develop a low-complexity algorithm to compute this policy.
\textcolor{black}{These analytical results hold for both  i.i.d. request arrival and  Markov-modulated request arrival models.}

\appendices
\section*{Appendix A: Proof of Lemma~\ref{lemma:bellman}}\label{app:bellman}
By Proposition 4.2.5 in \cite{bertsekas}, the Weak Accessibly (WA) condition holds for unichain policies. Thus, by Proposition 4.2.3 and Proposition 4.2.1 in \cite{bertsekas}, the optimal system cost of the MDP in Problem~\ref{problem:originalproblem} is the same for all initial states and the solution $(\theta,V(\mathbf{Q}))$ to the following Bellman equation exists.
\begin{align}\label{eqn:bellman2}
  \theta+V(\mathbf{Q})=\min_{u\in\mathcal{M}}\Bigg\{g(\mathbf{Q},u)+\sum_{\mathbf{Q}'\in\bm{\mathcal{Q}}}\Pr[\mathbf{Q}'|&\mathbf{Q},u]V(\mathbf{Q}')\Bigg\}\nonumber\\
  &\forall \mathbf{Q}\in\bm{\mathcal{Q}}.
\end{align}
The transition probability is given by
\begin{align}\label{eqn:tranb}
  &\Pr[\mathbf{Q}'|\mathbf{Q},u]\\
\triangleq&\Pr[\mathbf{Q}(t+1)=\mathbf{Q}'|\mathbf{Q}(t)=\mathbf{Q},u(t)=u]\nonumber\\
=&\mathbb{E}\left[\Pr\left[\mathbf{Q}(t+1)=\mathbf{Q}'|\mathbf{Q}(t)=\mathbf{Q},u(t)=u,\mathbf{A}(t)=\mathbf{A}\right]\right]\nonumber,
\end{align}
\begin{align*}
\text{where }&\Pr\left[\mathbf{Q}(t+1)=\mathbf{Q}'|\mathbf{Q}(t)=\mathbf{Q},u(t)=u,\mathbf{A}(t)=\mathbf{A}\right]\nonumber\\
&=\left\{		
   \begin{array}{ll}
       1, & \hbox{if $\mathbf{Q}'$ satisfies \eqref{eqn:queue-ho} or \eqref{eqn:queue-he}} \\
       0, & \hbox{otherwise}
   \end{array}
 \right..
\end{align*}
By substituting \eqref{eqn:tranb} into \eqref{eqn:bellman2}, we have \eqref{eqn:bellman},  which completes the proof.

\section*{Appendix B: Proof of Lemma~\ref{lemma:propertyV1}}\label{app:propertyV1}
We prove Lemma \ref{lemma:propertyV1} using RVIA and induction.

First, we introduce RVIA\cite[Chapter 4.3]{bertsekas}. For each state $\mathbf{Q}\in\bm{\mathcal{Q}}$, let $V_n(\mathbf{Q})$ be the value function in the $n$th iteration, where $n=0,1,\cdots$.
Define
\begin{align}
&J_{n+1}(\mathbf{Q},u_n)\triangleq g(\mathbf{Q},u_n)+\mathbb{E}[V_n(\mathbf{Q}')],\label{eqn:jn}
\end{align}
where $g(\mathbf{Q},u_n)=\sum_mQ_m+w_pp(u_n)+w_ff(u_n)$ and $\mathbf{Q}'=(Q'_m)_{m\in\mathcal{M}}$ with $Q'_m=\min\{\mathbf{1}(u_n\neq m)Q_m+A_m,N_m\}$ in the uniform case;
$g(\mathbf{Q},u_n)=\sum_{m,k}Q_{m,k}+w_pp(u_n,k^\ddag(\mathbf{Q},u_n))+w_ff(u_n)$ with $k^\ddag(\mathbf{Q},u_n)=\max\{k|Q_{u_n,k}>0\}$, $\mathbf{Q}'=(Q'_{m,k})_{m\in\mathcal{M},k\in\mathcal{K}}$ and $Q'_{m,k}=\min\{\mathbf{1}(u_n\neq m)Q_{m,k}+A_{m,k},N_{m,k}\}$ in the nonuniform case.

Note that $J_{n+1}(\mathbf{Q},u_n)$ is related to the R.H.S of the Bellman equation in \eqref{eqn:bellman}. We refer to $J_{n+1}(\mathbf{Q},u_n)$ as the state-action cost function in the $n$th iteration.
For each $\mathbf{Q}$, RVIA calculates $V_{n+1}(\mathbf{Q})$ according to
\begin{equation}\label{eqn:RVIA}
  V_{n+1}(\mathbf{Q})=\min_{u_n} J_{n+1}(\mathbf{Q},u_n)-\min_{u_n} J_{n+1}(\mathbf{Q}^\S,u_n),~\forall n
\end{equation}
 where $J_{n+1}(\mathbf{Q},u_n)$ is given by \eqref{eqn:jn} and $\mathbf{Q}^\S\in \bm{\mathcal{Q}}$ is some fixed state. Under any initialization of $V_0(\mathbf{Q})$, the generated sequence $\{V_n(\mathbf{Q})\}$ converges to $V(\mathbf{Q})$\cite[Proposition 4.3.2]{bertsekas}, i.e.,
 \begin{equation}
   \lim_{n\to\infty}V_n(\mathbf{Q})=V(\mathbf{Q}),~\forall \mathbf{Q}\in\bm{\mathcal{Q}},\label{eqn:converge}
 \end{equation}
 where $V(\mathbf{Q})$ satisfies the Bellman equation in \eqref{eqn:bellman}.
Let $\mu^*_n(\mathbf{Q})$ denote the control that attains the minimum of the first term in \eqref{eqn:RVIA} in the $n$th iteration for all $\mathbf{Q}$, i.e.,
\begin{equation}
  \mu^*_n(\mathbf{Q})=\arg\min_{u_n}J_{n+1}(\mathbf{Q},u_n),~~\forall \mathbf{Q}\in\bm{\mathcal{Q}}.\label{eqn:optimaln}
\end{equation}
We refer to $\mu^*_n$ as the optimal policy for the $n$th iteration.

Next, we prove Lemma~\ref{lemma:propertyV1} through mathematical induction using RVIA.
Denote $\mathbf{Q}^1\triangleq(Q_m^1)_{m\in\mathcal{M}}$ and $\mathbf{Q}^2\triangleq(Q_m^2)_{m\in\mathcal{M}}$.
To prove Lemma~\ref{lemma:propertyV1}, it is equivalent to show that for any $\mathbf{Q}^1,\mathbf{Q}^2\in \bm{\mathcal{Q}}$ such that $\mathbf{Q}^2\succeq\mathbf{Q}^1$,
\begin{equation}
V_n(\mathbf{Q}^2)\geq V_n(\mathbf{Q}^1),\label{eqn:vn}
\end{equation}
holds for all $n=0,1,\cdots$.
First, we initialize $V_0(\mathbf{Q})=0$ for all $\mathbf{Q}\in\bm{\mathcal{Q}}$. Thus, we have $V_0(\mathbf{Q}^1)=V_0(\mathbf{Q}^2)=0$, i.e., \eqref{eqn:vn} holds for $n=0$.
Assume that \eqref{eqn:vn} holds for  some \textcolor{black}{$n\geq 0$}. We will prove that \eqref{eqn:vn} also holds for $n+1$. By \eqref{eqn:RVIA}, we have
\begin{align}
&V_{n+1}(\mathbf{Q}^1)=J_{n+1}\left(\mathbf{Q}^1,\mu^*_n(\mathbf{Q}^1)\right)-\min_{u_n} J_{n+1}(\mathbf{Q}^\S,u_n)\nonumber\\
&\overset{(a)}{\leq}J_{n+1}\left(\mathbf{Q}^1,\mu^*_n(\mathbf{Q}^2)\right)-\min_{u_n} J_{n+1}(\mathbf{Q}^\S,u_n)\nonumber\\
&\overset{(b)}{=}\mathbb{E}[V_n(\mathbf{Q}^{1'})]+\sum_mQ_m^1+w_pp(\mu^*_n(\mathbf{Q}^2))+w_ff(\mu^*_n(\mathbf{Q}^2))\nonumber\\&~~~-\min_{u_n} J_{n+1}(\mathbf{Q}^\S,u_n),
\label{eqn:vn1q1}
\end{align}
where  $(a)$ follows from the optimality of $\mu_n^*(\mathbf{Q}^1)$ for $\mathbf{Q}^1$ in the $n$th iteration, $(b)$ directly follows from \eqref{eqn:jn} and $\mathbf{Q}^{1'}=(Q_m^{1'})_{m\in\mathcal{M}}$ with $Q_m^{1'}=\min\{\mathbbm{1}(\mu^*_n(\mathbf{Q}^2)\neq m)Q_m^1+A_m,N_m\}$. By \eqref{eqn:jn} and \eqref{eqn:RVIA}, we also have
\begin{align}
&V_{n+1}(\mathbf{Q}^2)=J_{n+1}\left(\mathbf{Q}^2,\mu^*_n(\mathbf{Q}^2)\right)-\min_{u_n} J_{n+1}(\mathbf{Q}^\S,u_n)\nonumber\\
&=\mathbb{E}[V_n(\mathbf{Q}^{2'})]+\sum_mQ_m^2+w_pp(\mu^*_n(\mathbf{Q}^2))+w_ff(\mu^*_n(\mathbf{Q}^2))\nonumber\\&~~~-\min_{u_n} J_{n+1}(\mathbf{Q}^\S,u_n),
\label{eqn:vn1q2}
\end{align}
where $\mathbf{Q}^{2'}=(Q_m^{2'})_{m\in\mathcal{M}}$ with $Q_m^{2'}=\min\{\mathbbm{1}(\mu^*_n(\mathbf{Q}^2)\neq m)Q_m^2+A_m,N_m\}$.
Then, we compare \eqref{eqn:vn1q1} and \eqref{eqn:vn1q2} term by term.
Due to $\mathbf{Q}^2\succeq\mathbf{Q}^1$, we have $\sum_mQ_m^2\geq \sum_mQ_m^1$ and $\mathbf{Q}^{2'}\succeq\mathbf{Q}^{1'}$, implying that $\mathbf{E}[V_n(\mathbf{Q}^{2'})]\geq \mathbf{E}[V_n(\mathbf{Q}^{1'})]$ by the induction hypothesis.
Thus, we have $V_{n+1}(\mathbf{Q}^2)\geq V_{n+1}(\mathbf{Q}^1)$, i.e., \eqref{eqn:vn} holds for $n+1$. Therefore, by induction, we can show that \eqref{eqn:vn} holds for any $n$.  By taking limits on both sides of \eqref{eqn:vn} and by \eqref{eqn:converge}, we complete the proof of Lemma~\ref{lemma:propertyV1}.

\section*{Appendix C: Proof of Lemma~\ref{lemma:propertyJ1}}\label{app:propertyJ1}
By \eqref{eqn:state_action_func}, we have
\begin{align}\label{eqn:prooflemma3}
 &J(\mathbf{Q},u)-J(\mathbf{Q},v)-J(\mathbf{Q}+\mathbf{e}_u,u)+J(\mathbf{Q}+\mathbf{e}_u,v)\nonumber\\
=&\mathbb{E}[V(\mathbf{Q}^{1'})]+g(\mathbf{Q},u)-\mathbb{E}[V(\mathbf{Q}^{2'})]-g(\mathbf{Q},v)\nonumber\\
-&\mathbb{E}[V(\mathbf{Q}^{3'})]-g(\mathbf{Q}+\mathbf{e}_u,u)+\mathbb{E}[V(\mathbf{Q}^{4'})]+g(\mathbf{Q}+\mathbf{e}_u,v)\nonumber\\
\overset{(c)}{=}&\mathbb{E}[V(\mathbf{Q}^{1'})]-\mathbb{E}[V(\mathbf{Q}^{2'})]-\mathbb{E}[V(\mathbf{Q}^{3'})]+\mathbb{E}[V(\mathbf{Q}^{4'})],
\end{align}
where $\mathbf{Q}=(Q_m)_{m\in\mathcal{M}}$, $\mathbf{Q}^{i'}=(Q^{i'}_m)_{m\in\mathcal{M}}$, $i=1,2,3,4$ with
\begin{subequations}
\begin{align}
  &Q^{1'}_m=\min\{\mathbbm{1}(u\neq m)Q_m+A_m,N_m\},\label{eqn:q1prime}\\
  &Q^{2'}_m=\min\{\mathbbm{1}(v\neq m)Q_m+A_m,N_m\},\label{eqn:q2prime}\\
  &Q^{3'}_m=\left\{
                       \begin{array}{ll}
                         \min\{A_u,N_u\}& \hbox{if $m=u$} \\
                         \min\{Q_m+A_m,N_m\}&\hbox{otherwise}
                       \end{array}
                     \right.,\label{eqn:q3prime}\\
  &Q^{4'}_m=\left\{
                       \begin{array}{ll}
                         \min\{Q_u+1+A_u,N_u\} & \hbox{if $m=u$} \\
                         \min\{\mathbbm{1}(v\neq m)Q_m+A_m,N_m\} & \hbox{otherwise}
                       \end{array}
                     \right.,\label{eqn:q4prime}
\end{align}
\end{subequations}
and (c) is due to
\begin{align}
  &g(\mathbf{Q},u)-g(\mathbf{Q},v)-g(\mathbf{Q}+\mathbf{e}_u,u)+g(\mathbf{Q}+\mathbf{e}_u,v)\nonumber\\
  =&\Big(\sum_mQ_m+w_pp(u)+w_ff(u)\Big)
  -\Big(\sum_mQ_m+w_pp(v)\nonumber\\&+w_ff(v)\Big)-\Big(\sum_mQ_m+1+w_pp(u)+w_ff(u)\Big)\nonumber\\
  &+\Big(\sum_mQ_m+1+w_pp(v)+w_ff(v)\Big)=0.
\end{align}
To prove Lemma~\ref{lemma:propertyJ1}, it remains to show that the R.H.S. of \eqref{eqn:prooflemma3} is nonnegative.
By comparing \eqref{eqn:q1prime} with \eqref{eqn:q3prime}, we can see that $Q^{1'}_m=Q^{3'}_m$ for all $m$,  i.e., $\mathbf{Q}^{1'}=\mathbf{Q}^{3'}$. Thus, we have $\mathbb{E}[V(\mathbf{Q}^{1'})]=\mathbb{E}[V(\mathbf{Q}^{3'})]$.
By comparing \eqref{eqn:q2prime} with \eqref{eqn:q4prime}, we can see that $Q^{4'}_u\geq Q^{2'}_u$ and $Q^{4'}_m=Q^{2'}_m$ for all $m\neq u$, i.e., $\mathbf{Q}^{4'}\succeq\mathbf{Q}^{2'}$. Thus, by Lemma~\ref{lemma:propertyV1}, we have $\mathbb{E}[V(\mathbf{Q}^{4'})]\geq\mathbb{E}[V(\mathbf{Q}^{2'})]$.
Therefore, by \eqref{eqn:prooflemma3}, we have $J(\mathbf{Q}+\mathbf{e}_u,u)-J(\mathbf{Q}+\mathbf{e}_u,v)\leq J(\mathbf{Q},u)-J(\mathbf{Q},v)$. We complete the proof of Lemma~\ref{lemma:propertyJ1}.
\section*{Appendix D: Proof of Theorem~\ref{theorem:theorem1}}\label{app:theorem1}
Consider content $u\in\mathcal{M}$ and state $\mathbf{Q}=(Q_m)_{m\in\mathcal{M}}$ where $Q_{u}=s_u(\mathbf{Q}_{-u})$.
Note that, if  $s_u(\mathbf{Q}_{-u})=\infty$, \eqref{eqn:switch} always holds.
Therefore, in the following, we only consider that $s_u(\mathbf{Q}_{-u})<\infty$.
According to  the definition of $s_u(\mathbf{Q}_{-u})$ in Theorem~\ref{theorem:theorem1}, we can see that $J(\mathbf{Q},u)\leq J(\mathbf{Q},v)$ for all $v\in\mathcal{M}$ and $v\neq u$. Thus, it is optimal to multicast content $u$ for state $\mathbf{Q}$, i.e., $\mu^*(\mathbf{Q})=u$.
Consider another state $\mathbf{Q}'=(Q_{m}')_{m\in\mathcal{M}}$ where $Q_{u}'\geq Q_{u}$ and $Q_{m}'=Q_{m}$ for all $m\neq u$.
To prove Theorem~\ref{theorem:theorem1}, it is equivalent to show that $\mu^*(\mathbf{Q}')=u$.
By Lemma~\ref{lemma:propertyJ1}, for all $v\in\mathcal{M}$ and $v\neq u$, we have
\begin{equation}
  J(\mathbf{Q}',u)-J(\mathbf{Q}',v)\leq J(\mathbf{Q},u)-J(\mathbf{Q},v)\leq 0.
\end{equation}
Thus, it is optimal to multicast content $u$ for $\mathbf{Q}'$.
We complete the proof of Theorem~\ref{theorem:theorem1}.

\section*{Appendix E: Proof of Lemma~\ref{lemma:propertyofswitch}}\label{app:propertyofswitch}
To prove the monotonically non-decreasing property of $s_2(Q_1)$ with respect to $Q_1$, it is equivalent to show that, if $\mu^*\left(\mathbf{Q}+\mathbf{e}_1\right)=2$, then $\mu^*\left(\mathbf{Q}\right)=2$. This is sufficient to  show
\begin{equation}\label{eqn:prooflemma4}
  J\left(\mathbf{Q},2\right)-J\left(\mathbf{Q},1\right)\leq J\left(\mathbf{Q}+\mathbf{e}_1,2\right)-J\left(\mathbf{Q}+\mathbf{e}_1,1\right),
\end{equation}
where $\mathbf{Q}=(Q_1,Q_2)$ and $\mathbf{e}_1=(1,0)$.

By \eqref{eqn:state_action_func}, we have
\begin{align}\label{eqn:prooflemma42}
  &J\left(\mathbf{Q},2\right)-J\left(\mathbf{Q},1\right)-J\left(\mathbf{Q}+\mathbf{e}_1,2\right)+J\left(\mathbf{Q}+\mathbf{e}_1,1\right)\nonumber\\
=&\mathbb{E}[V(\mathbf{Q}^{1'})]+g(\mathbf{Q},2)-\mathbb{E}[V(\mathbf{Q}^{2'})]-g(\mathbf{Q},1)\nonumber\\
-&\mathbb{E}[V(\mathbf{Q}^{3'})]-g(\mathbf{Q}+\mathbf{e}_1,2)+\mathbb{E}[V(\mathbf{Q}^{4'})]+g(\mathbf{Q}+\mathbf{e}_1,1)\nonumber\\
\overset{(d)}{=}&\mathbb{E}[V(\mathbf{Q}^{1'})]-\mathbb{E}[V(\mathbf{Q}^{2'})]-\mathbb{E}[V(\mathbf{Q}^{3'})]+\mathbb{E}[V(\mathbf{Q}^{4'})],
\end{align}
where
\begin{subequations}
\begin{align}
&\mathbf{Q}^{1'}=(\min\{Q_1+A_1,N_1\},\min\{A_2,N_2\}),\label{eqn:q1prime2}\\
&\mathbf{Q}^{2'}=(\min\{A_1,N_1\},\min\{Q_2+A_2,N_2\}),\label{eqn:q2prime2}\\
&\mathbf{Q}^{3'}=(\min\{Q_1+A_1+1,N_1\},\min\{A_2,N_2\}),\label{eqn:q3prime2}\\
&\mathbf{Q}^{4'}=(\min\{A_1,N_1\},\min\{Q_2+A_2,N_2\}),\label{eqn:q4prime2}
\end{align}
\end{subequations}
and (d) is due to
\begin{align}
  &g(\mathbf{Q},2)-g(\mathbf{Q},1)-g(\mathbf{Q}+\mathbf{e}_1,2)+g(\mathbf{Q}+\mathbf{e}_1,1)\nonumber\\
  =&\big(Q_1+Q_2+w_pp(2)+w_ff(2)\big)-\big(Q_1+Q_2+w_pp(1)\nonumber\\
  &+w_ff(1)\big)-\big(Q_1+Q_2+1+w_pp(2)+w_ff(2)\big)\nonumber\\
  &+\big(Q_1+Q_2+1+w_pp(1)+w_ff(1)\big)=0.
\end{align}
To prove Lemma~\ref{lemma:propertyofswitch}, it remains to show that the R.H.S. of \eqref{eqn:prooflemma42} is nonpositive.
By comparing \eqref{eqn:q1prime2} with \eqref{eqn:q3prime2}, we have  $\mathbf{Q}^{3'}\succeq\mathbf{Q}^{1'}$, implying that $\mathbb{E}[V(\mathbf{Q}^{3'})]\geq\mathbb{E}[V(\mathbf{Q}^{1'})]$ by Lemma~\ref{lemma:propertyV1}.
By comparing \eqref{eqn:q2prime2} with \eqref{eqn:q3prime2}, we have $\mathbf{Q}^{4'}=\mathbf{Q}^{2'}$, implying that $\mathbb{E}[V(\mathbf{Q}^{4'})]=\mathbb{E}[V(\mathbf{Q}^{2'})]$.
Thus, by \eqref{eqn:prooflemma42}, we can show that \eqref{eqn:prooflemma4} holds.

Similarly, we can show that the following inequality holds:
\begin{equation}\label{eqn:prooflemma4Q2}
  J\left(\mathbf{Q},1\right)-J\left(\mathbf{Q},2\right)\leq J\left(\mathbf{Q}+\mathbf{e}_2,1\right)-J\left(\mathbf{Q}+\mathbf{e}_2,2\right),
\end{equation}
where $\mathbf{Q}=(Q_1,Q_2)$ and $\mathbf{e}_2=(0,1)$.
Thus, if $\mu^*\left(\mathbf{Q}+\mathbf{e}_2\right)=1$, then $\mu^*\left(\mathbf{Q}\right)=1$. This implies the monotonically non-decreasing property of $s_1(Q_2)$ with respect to $Q_2$.
We complete the proof of Lemma~\ref{lemma:propertyofswitch}.

\section*{Appendix F: Proof of Proposition~\ref{proposition:num}}\label{app:num}

Let $Z(N_1,N_2)$ denote  the number of the policies with monotonically non-decreasing curves.
By Theorem~\ref{theorem:theorem1}, either  $s_2(Q_1)$ or $s_1(Q_2)$ is sufficient to characterize the optimal policy.
Hence, we have
\begin{align}
  Z(N_1,N_2)&=\sum_{a_{N_1}=0}^{N_2+1}\sum_{a_{N_1-1}=0}^{a_{N_1}}\cdots\sum_{a_1=0}^{a_2}\sum_{a_0=0}^{a_1}1 \label{eqn:num1}\\
&=\sum_{b_{N_2}=0}^{N_1+1}\sum_{b_{N_2-1}=0}^{b_{N_2}}\cdots\sum_{b_1=0}^{b_2}\sum_{b_0=0}^{b_1}1,
\label{eqn:num2}
\end{align}
where \eqref{eqn:num1} and \eqref{eqn:num2} are the number of all possible $s_2(Q_1)$ and $s_1(Q_2)$, respectively.
In the following, we shall show that
\begin{equation}
  Z(N_1,N_2)= {N_1+N_2+2 \choose N_1+1}
\label{eqn:zfunc}
\end{equation}
holds for any positive integers $N_1,N_2$.

We use induction on $n=N_1+N_2\geq 2$.
If $n=2$, then $N_1=N_2=1$  and we have $Z(1,1)=\sum_{a_1=0}^2\sum_{a_0=0}^{a_1}1=6={4\choose 2}$.
Assume \eqref{eqn:zfunc} holds for any positive integers $N_1,N_2$ with $N_1+N_2=n\geq 2$. Now consider $Z(N_1,N_2)$ with $N_1+N_2=n+1$.
If $N_1=1$, then by \eqref{eqn:num1}, we have $Z(1,N_2)=\sum_{a_1=0}^{N_2+1}\sum_{a_0=0}^{a_1}1={N_2+3\choose 2}$. 
If $N_2=1$, then by \eqref{eqn:num2}, we have $Z(N_1,1)=\sum_{b_1=0}^{N_1+1}\sum_{b_0=0}^{b_1}1={N_1+3\choose 2}$. 
If $N_1,N_2>1$, then by \eqref{eqn:num1} and the induction hypothesis,
we have $Z(N_1,N_2)=Z(N_1-1,N_2)+Z(N_1,N_2-1)={N_1+N_2+1\choose N_1}+{N_1+N_2+1\choose N_1+1}={N_1+N_2+2 \choose N_1+1}$. Thus, \eqref{eqn:zfunc} holds whenever $N_1+N_2=n+1$.
Therefore, by induction, \eqref{eqn:zfunc} holds for any positive integers $N_1,N_2$. We complete the proof of Proposition~\ref{proposition:num}.

\section*{Appendix G: Proof of Lemma~\ref{lemma:propertyV2}}\label{app:propertyV2}
We prove Lemma~\ref{lemma:propertyV2} through mathematical induction using the RVIA in Appendix B.
Denote $\mathbf{Q}^1\triangleq(Q_{m,k}^1)_{m\in\mathcal{M},k\in\mathcal{K}}$ and $\mathbf{Q}^2\triangleq(Q_{m,k}^2)_{m\in\mathcal{M},k\in\mathcal{K}}$.
To prove Lemma~\ref{lemma:propertyV2}, by \eqref{eqn:converge}, it is equivalent to show that for any $\mathbf{Q}^1,\mathbf{Q}^2\in \bm{\mathcal{Q}}$ such that $\mathbf{Q}^2\trianglerighteq\mathbf{Q}^1$,
\begin{equation}
V_n(\mathbf{Q}^2)\geq V_n(\mathbf{Q}^1),\label{eqn:vn2}
\end{equation}
holds for all $n=0,1,\cdots$.
We initialize $V_0(\mathbf{Q})=0$ for all $\mathbf{Q}\in\bm{\mathcal{Q}}$. Thus, we have $V_0(\mathbf{Q}^1)=V_0(\mathbf{Q}^2)=0$, i.e., \eqref{eqn:vn2} holds for $n=0$.
Assume that \eqref{eqn:vn2} holds for some \textcolor{black}{$n\geq 0$}. We will prove that \eqref{eqn:vn2} also holds for $n+1$.
By \eqref{eqn:RVIA}, we have
\begin{align}
&V_{n+1}(\mathbf{Q}^1)
\overset{(e)}{\leq}J_{n+1}\left(\mathbf{Q}^1,\mu^*_n(\mathbf{Q}^2)\right)-\min_{u_n} J_{n+1}(\mathbf{Q}^\S,u_n)\nonumber\\
&\overset{(f)}{=}\mathbb{E}[V_n(\mathbf{Q}^{1'})]+\sum_{m,k}Q^1_{m,k}+w_pp(\mu^*_n(\mathbf{Q}^2),k^\ddag(\mathbf{Q^1},\mu^*_n(\mathbf{Q}^2)))\nonumber\\&~~~+w_ff(\mu^*_n(\mathbf{Q}^2))-\min_{u_n} J_{n+1}(\mathbf{Q}^\S,u_n),
\label{eqn:vn2q1}
\end{align}
where  $(e)$ follows from the optimality of $\mu_n^*(\mathbf{Q}^1)$ for $\mathbf{Q}^1$ in the $n$th iteration, $(f)$ directly follows from \eqref{eqn:jn}, $k^\ddag(\mathbf{Q^1},\mu^*_n(\mathbf{Q}^2))=\max\left\{k|\mathbf{Q^1_{\mu^*_n(\mathbf{Q}^2),k}>0}\right\}$ and $\mathbf{Q}^{1'}=(Q_{m,k}^{1'})_{m\in\mathcal{M},k\in\mathcal{K}}$ with $Q_{m,k}^{1'}=\min\{\mathbbm{1}(\mu^*_n(\mathbf{Q}^2)\neq m)Q_{m,k}^1+A_{m,k},N_{m,k}\}$.
By \eqref{eqn:jn} and \eqref{eqn:RVIA}, we also have
\begin{align}
&V_{n+1}(\mathbf{Q}^2)=J_{n+1}\left(\mathbf{Q}^2,\mu^*_n(\mathbf{Q}^2)\right)-\min_{u_n} J_{n+1}(\mathbf{Q}^\S,u_n)\nonumber\\
&=\mathbb{E}[V_n(\mathbf{Q}^{2'})]+\sum_{m,k}Q^2_{m,k}+w_pp(\mu^*_n(\mathbf{Q}^2),k^\ddag(\mathbf{Q^2},\mu^*_n(\mathbf{Q}^2)))\nonumber\\&~~~+w_ff(\mu^*_n(\mathbf{Q}^2))-\min_{u_n} J_{n+1}(\mathbf{Q}^\S,u_n),
\label{eqn:vn2q2}
\end{align}
where $k^\ddag(\mathbf{Q^2},\mu^*_n(\mathbf{Q}^2))=\max\left\{k|\mathbf{Q^2_{\mu^*_n(\mathbf{Q}^2),k}>0}\right\}$ and $\mathbf{Q}^{2'}=(Q_{m,k}^{2'})_{m\in\mathcal{M},k\in\mathcal{K}}$ with $Q_{m,k}^{2'}=\min\{\mathbbm{1}(\mu^*_n(\mathbf{Q}^2)\neq m)Q_{m,k}^2+A_{m,k},N_{m,k}\}$.

Next, we compare \eqref{eqn:vn2q1} and \eqref{eqn:vn2q2} term by term.
Due to $\mathbf{Q}^2\trianglerighteq\mathbf{Q}^1$, we have $\mathbf{Q}^{2'}\trianglerighteq\mathbf{Q}^{1'}$. Thus, by the induction hypothesis, we have $\mathbb{E}[V_n(\mathbf{Q}^{2'})]\geq \mathbb{E}[V_n(\mathbf{Q}^{1'})]$. Due to $\mathbf{Q}^2\trianglerighteq\mathbf{Q}^1$, we have $\sum_{m,k}Q^2_{m,k}\geq \sum_{m,k}Q^1_{m,k}$ and $k^\ddag(\mathbf{Q^2},\mu^*_n(\mathbf{Q}^2))=k^\ddag(\mathbf{Q^1},\mu^*_n(\mathbf{Q}^2))$, implying that $p(\mu^*_n(\mathbf{Q}^2),k^\ddag(\mathbf{Q^2},\mu^*_n(\mathbf{Q}^2)))=p(\mu^*_n(\mathbf{Q}^2),k^\ddag(\mathbf{Q^1},\mu^*_n(\mathbf{Q}^2)))$.
Thus, we have $V_{n+1}(\mathbf{Q}^2)\geq V_{n+1}(\mathbf{Q}^1)$, i.e., \eqref{eqn:vn2} holds for $n+1$. Therefore, by induction, we can show that \eqref{eqn:vn2} holds for any $n$.  By taking limits on both sides of \eqref{eqn:vn2} and by \eqref{eqn:converge}, we complete the proof of Lemma~\ref{lemma:propertyV2}.
\section*{Appendix H: Proof of Lemma~\ref{lemma:propertyJ2}}\label{app:propertyJ2}
By \eqref{eqn:state_action_func}, we have
\begin{align}\label{eqn:prooflemma6}
 &J(\mathbf{Q},u)-J(\mathbf{Q},v)-J(\mathbf{Q}+\mathbf{E}_{u,k},u)+J(\mathbf{Q}+\mathbf{E}_{u,k},v)\nonumber\\
=&\mathbb{E}[V(\mathbf{Q}^{1'})]+g(\mathbf{Q},u)-\mathbb{E}[V(\mathbf{Q}^{2'})]-g(\mathbf{Q},v)-\mathbb{E}[V(\mathbf{Q}^{3'})]\nonumber\\
&-g(\mathbf{Q}+\mathbf{E}_{u,k},u))+\mathbb{E}[V(\mathbf{Q}^{4'})]+g(\mathbf{Q}+\mathbf{E}_{u,k},v)\nonumber\\
=&\mathbb{E}[V(\mathbf{Q}^{1'})]-\mathbb{E}[V(\mathbf{Q}^{3'})]+\mathbb{E}[V(\mathbf{Q}^{4'})]-\mathbb{E}[V(\mathbf{Q}^{2'})]\nonumber\\
&+w_p\left( p(u,k^\ddag(\mathbf{Q},u))-p(u,k^\ddag(\mathbf{Q}+\mathbf{E}_{u,k},u))\right)\nonumber\\
&-w_p\left(p(v,k^\ddag(\mathbf{Q},v))-p(v,k^\ddag(\mathbf{Q}+\mathbf{E}_{u,k},v))\right),
\end{align}
where $\mathbf{Q}=(Q_{m,i})_{m\in\mathcal{M},i\in\mathcal{K}}$ and $\mathbf{Q}^{j'}=(Q^{j'}_{m,i})_{m\in\mathcal{M},i\in\mathcal{K}}$, $j=1,2,3,4$ with
\begin{subequations}
\begin{align}
  &Q^{1'}_{m,i}=\min\{\mathbbm{1}(u\neq m)Q_{m,i}+A_{m,i},N_{m,i}\},\label{eqn:q1prime3}\\
  &Q^{2'}_{m,i}=\min\{\mathbbm{1}(v\neq m)Q_{m,i}+A_{m,i},N_{m,i}\},\label{eqn:q2prime3}\\
  &Q^{3'}_{m,i}=\left\{
                       \begin{array}{ll}
                         \min\{A_{u,k},N_{u,k}\}& \hbox{if $m=u, k=i$} \\
                         \min\{Q_{m,i}+A_{m,i},N_{m,i}\}&\hbox{otherwise}
                       \end{array}
                     \right.,\label{eqn:q3prime3}\\
  &Q^{4'}_{m,i}=\left\{
                       \begin{array}{ll}
                         \min\{Q_{u,k}+1+A_{u,k},N_{u,k}\}&\hbox{if $m=u, k=i$} \\
                         \min\{\mathbbm{1}(v\neq m)Q_{m,i}+A_{m,i},N_{m,i}\} & \hbox{otherwise}
                       \end{array}
                     \right.\label{eqn:q4prime3}
\end{align}
\end{subequations}
To prove Lemma~\ref{lemma:propertyJ2}, it remains to show that the R.H.S. of \eqref{eqn:prooflemma6} is nonnegative.
By comparing \eqref{eqn:q1prime3} with \eqref{eqn:q3prime3}, we can see that $Q^{1'}_{m,i}=Q^{3'}_{m,i}$ for all $m,i$,  i.e., $\mathbf{Q}^{1'}=\mathbf{Q}^{3'}$. Thus, we have $\mathbb{E}[V(\mathbf{Q}^{1'})]=\mathbb{E}[V(\mathbf{Q}^{3'})]$.
By comparing \eqref{eqn:q2prime3} with \eqref{eqn:q4prime3}, we can see that $Q^{4'}_{u,k}\geq Q^{2'}_{u,k}\geq Q^{2}_{u,k}>0$ and $Q^{4'}_{m,i}=Q^{2'}_{m,i}$ for all $m\neq u, i\neq k$, i.e., $\mathbf{Q}^{4'}\trianglerighteq\mathbf{Q}^{2'}$. Thus, by Lemma~\ref{lemma:propertyV2}, we have $\mathbb{E}[V(\mathbf{Q}^{4'})]\geq\mathbb{E}[V(\mathbf{Q}^{2'})]$.
In addition, due to $\mathbf{Q}+\mathbf{E}_{u,k}\trianglerighteq\mathbf{Q}$, we have $k\leq\max\{k|Q_{u,k}>0\}$. Then, we have $k^\ddag(\mathbf{Q}+\mathbf{E}_{u,k},u)=k^\ddag(\mathbf{Q},u)$ and $k^\ddag(\mathbf{Q}+\mathbf{E}_{u,k},v)=k^\ddag(\mathbf{Q},v)$, implying that $p(u,k^\ddag(\mathbf{Q},u))=p(u,k^\ddag(\mathbf{Q}+\mathbf{E}_{u,k},u))$ and $p(u,k^\ddag(\mathbf{Q},v))=p(u,k^\ddag(\mathbf{Q}+\mathbf{E}_{u,k},v))$.
Therefore, by \eqref{eqn:prooflemma6}, we have $J(\mathbf{Q}+\mathbf{E}_{u,k},u)-J(\mathbf{Q}+\mathbf{E}_{u,k},v)\leq J(\mathbf{Q},u)-J(\mathbf{Q},v)$.
We complete the proof of Lemma~\ref{lemma:propertyJ2}.
\section*{Appendix I: Proof of Theorem~\ref{theorem:theorem2}}\label{app:theorem2}
Consider content $u\in\mathcal{M}$, user $k\in\mathcal{K}$ and state $\mathbf{Q}=(Q_{m,i})_{m\in\mathcal{M},i\in\mathcal{K}}$ where $Q_{u,k}=s_{u,k}(\mathbf{Q}_{-u,-k})$.
Note that, if $s_{u,k}(\mathbf{Q}_{-u,-k})=\infty$, \eqref{eqn:switch2} always holds.
Therefore, in the following, we only consider that $s_{u,k}(\mathbf{Q}_{-u,-k})<\infty$.
According to the definition of $s_{u,k}(\mathbf{Q}_{-u,-k})$ in Theorem~\ref{theorem:theorem2}, we can see that $J(\mathbf{Q},u)\leq J(\mathbf{Q},v)$ for all $v\in\mathcal{M}, v\neq u$.
Thus, it is optimal to multicast content $u$ for state $\mathbf{Q}$, $\mu^*(\mathbf{Q})=u$. Consider another state $\mathbf{Q}'=(Q_{m,i}')_{m\in\mathcal{M},i\in\mathcal{K}}$ where $Q_{u,k}'\geq Q_{u,k}$ and $Q_{m,i}'=Q_{m,i}$ for all $(m,i)\neq (u,k)$. To prove Theorem~\ref{theorem:theorem2}, it is equivalent to show that it is also optimal to multicast content $u$ for state $\mathbf{Q}'$, i.e.,
\begin{equation}
  J(\mathbf{Q}',u)\leq J(\mathbf{Q}',v), \forall v\in\mathcal{M}, v\neq u.
\end{equation}
According to the relationship between $k$ and $k^\dag(k,\mathbf{Q}_u)$ as well as the value of $s_{u,k}(\mathbf{Q}_{-u,-k})$, we have the following three cases.

(1) If $k<k^\dag(k,\mathbf{Q}_u)$, i.e., condition (a) holds, we have $k<\max\{k|Q_{u,k}>0\}$. By Lemma~\ref{lemma:propertyJ2}, for any $v\in\mathcal{M}$ and $v\neq u$, we have
\begin{equation}\label{eqn:proofthm2}
  J(\mathbf{Q}',u)-J(\mathbf{Q}',v)\leq J(\mathbf{Q},u)-J(\mathbf{Q},v)\leq 0.
\end{equation}
Thus, it is optimal to multicast content $u$ for state $\mathbf{Q}'$.

(2) If $k>k^\dag(k,\mathbf{Q}_u)$ and $s_{u,k}(\mathbf{Q}_{-u,-k})>0$, i.e. condition (b) holds, we have $k=\max\{k|Q_{u,k}>0\}$. By Lemma~\ref{lemma:propertyJ2}, for any $v\in\mathcal{M}$ and $v\neq u$, \eqref{eqn:proofthm2} also holds. Thus, it is optimal to multicast content $u$ for state $\mathbf{Q}'$.

(3) If $k>k^\dag(k,\mathbf{Q}_u)$ and $s_{u,k}(\mathbf{Q}_{-u,-k})=0$, implying that $k>\max\{k|Q_{u,k}>0\}$, then Lemma~\ref{lemma:propertyJ2} does not apply and it is unknown whether \eqref{eqn:proofthm2} holds. Therefore, it is unclear whether it is optimal to multicast content $u$ for state $\mathbf{Q}'$.

We complete the proof of Theorem~\ref{theorem:theorem2}.
\section*{Appendix J: Proof of Lemma~\ref{lemma:decomposition}}\label{app:decomposition}
Following the proof in \cite{harvest}, we shall prove the additive property w.r.t. the value function.
First, we have $g(\mathbf{Q},u)=\sum_{m\in\mathcal{M}}g_m(Q_m,u)$. Second, by the relationship between the joint distribution and the marginal distribution, we have $\sum_{\mathbf{Q}\in\bm{\mathcal{Q}}}\Pr[\mathbf{Q}'|\mathbf{Q},u]=\sum_{Q_m'\in\mathcal{Q}_m}\Pr[Q_m'|\mathbf{Q},u]=\sum_{Q_m'\in\mathcal{Q}_m}\Pr[Q_m'|Q_m,u]$.
Therefore, by substituting $\hat{\theta}=\sum_{m\in\mathcal{M}}\hat{\theta}_m$ and $\hat{V}(\mathbf{Q})=\sum_{m\in\mathcal{M}}\hat{V}_m(Q_m)$ into \eqref{eqn:randombellman}, we can see that the equality holds, which completes the proof.

\section*{Appendix K: Proof of Theorem~\ref{theorem:strhatmu}}\label{app:theorem3}
We first prove the structural property of $\hat{\mu}^*$ for the uniform case.
First, we show that for all $m\in\mathcal{M}$, the per-content value function $\hat{V}_m(Q_m)$ satisfies
    \begin{equation}
  \hat{V}_m(Q_m^2)\geq \hat{V}_m(Q_m^1),\label{eqn:propertyvm}
    \end{equation}
for any $Q_m^1$, $Q_m^2\in\mathcal{Q}_m$ such that $Q_m^2\geq Q_m^1$.
For each $Q_m\in\mathcal{Q}_m$, in the $n$-th iteration, RVIA updates $\hat{V}_m^{n+1}(Q_m)$ according to
\begin{align}
  &\hat{V}_m^{n+1}(Q_m)=\mathbb{E}^{\hat{\mu}}\left[g_m(Q_m,u)\right]\nonumber\\
    &+\sum_{Q_m'\in\mathcal{Q}_m}\mathbb{E}^{\hat{\mu}}\left[\Pr[Q_m'|Q_m,u]\right]\hat{V}_m(Q_m')-\hat{V}_m^{n+1}(Q_m^\S),
\end{align}
where $Q_m^\S\in\mathcal{Q}_m$ is some fixed state.
Following the proof of Lemma~\ref{lemma:propertyV1}, we can show that for any $Q_m^1$, $Q_m^2\in\mathcal{Q}_m$ such that $Q_m^2\geq Q_m^1$, we have $\hat{V}_m^n(Q_m^2)\geq \hat{V}_m^n(Q_m^1)$ for all $n=0,1,\cdots$. Therefore, we can show that \eqref{eqn:propertyvm} holds through mathematical induction using RVIA.
Then, following the proof of Lemma~\ref{lemma:propertyJ1}, we can show that for any $u,v\in\mathcal{M}$ and $u\neq v$, $\hat{J}(\mathbf{Q},u)-\hat{J}(\mathbf{Q},v)$ is monotonically non-increasing with $Q_u$, i.e.,
\begin{equation}
  \hat{J}(\mathbf{Q}+\mathbf{e}_u,u)-\hat{J}(\mathbf{Q}+\mathbf{e}_u,v)\leq \hat{J}(\mathbf{Q},u)-\hat{J}(\mathbf{Q},v).\label{eqn:subpropertyJ1}
\end{equation}
Finally, following the proof of Theorem~\ref{theorem:theorem1}, we can show that $\hat{\mu}^*$ has a switch structure in the uniform case. We complete the proof of Part 1) in Theorem~\ref{theorem:strhatmu}.

Next, we prove the structural property of $\hat{\mu}^*$ for the nonuniform case. The procedure is similar to that for the uniform case.
First, similar to the proofs for \eqref{eqn:propertyvm} and Lemma~\ref{lemma:propertyV2}, we can show that for all $m\in\mathcal{M}$, the per-content value function $\hat{V}_m(Q_m)$ satisfies
    \begin{equation}
  \hat{V}_m(Q_m^2)\geq \hat{V}_m(Q_m^1),\label{eqn:propertyvm2}
    \end{equation}
  for any $Q_m^1$, $Q_m^2\in\mathcal{Q}_m$ such that $Q_m^2\trianglerighteq Q_m^1$.
Then, following the proof of Lemma~\ref{lemma:propertyJ2}, we can show that for any $u,v\in\mathcal{M}$ and $u\neq v$, $k\leq \max\{k|Q_{u,k}>0\}$,
\begin{equation}
  \hat{J}(\mathbf{Q}+\mathbf{E}_{u,k},u)-\hat{J}(\mathbf{Q}+\mathbf{E}_{u,k},v)\leq \hat{J}(\mathbf{Q},u)-\hat{J}(\mathbf{Q},v). \label{eqn:subpropertyJ2}
\end{equation}
Finally, following the proof of Theorem~\ref{theorem:theorem2}, we can show that $\hat{\mu}^*$ has a partial switch structure in the nonuniform case.
We complete the proof of Part 2) in Theorem~\ref{theorem:strhatmu}.

\bibliographystyle{IEEEtran}
\bibliography{IEEEabrv,caching}

\end{document}